\newcommand{\tr}{{\rm tr}}
\newcommand{\prob}[1]{\ensuremath{\mathbb P}\left(#1\right)}
\newtheorem{theorem}{Theorem}
\newtheorem{corollary}[theorem]{Corollary}
\newtheorem{lemma}[theorem]{Lemma}
\title{Time-dependent spatially varying graphical models, with application to
brain fMRI data analysis}
\author{
  Kristjan~Greenewald \\
  Department of Statistics\\
  Harvard University\\
  \And
  Seyoung~Park\\
  Department of Biostatistics\\
  Yale University\\
  \And
  Shuheng~Zhou \\
  Department of Statistics\\
  University of Michigan\\
  \AND
  Alexander~Giessing \\
  Department of Statistics\\
  University of Michigan\\
}
\begin{document}
\maketitle
%
%
%
%
%
%




\begin{abstract}
%

In this work, we present an additive model for space-time data that splits
the data into a temporally correlated component and a spatially correlated
component. We model the spatially correlated portion using a time-varying
Gaussian graphical model.
Under assumptions on the smoothness of changes in covariance matrices,  we
derive strong single sample convergence results, confirming our ability to
estimate meaningful graphical structures as they evolve over time. We apply
our methodology to the discovery of time-varying spatial structures in
human brain fMRI signals.
\end{abstract}

\section{Introduction}

Learning structured models of high-dimensional datasets from relatively few training samples is an important task in statistics and machine learning. Spatiotemporal data, in the form of $n$ variables evolving over $m$ time points, often fits this regime due to the high ($mn$) dimension and potential difficulty in obtaining independent samples. 
In this work, we
develop a nonparametric framework for estimating time varying spatiotemporal graphical structure using an $\ell1$ regularization method.
The covariance of a spatiotemporal array $X = [x^1,\dots, x^m] \in \mathbb{R}^{n\times m}$ is an $mn$ by $mn$ matrix
\begin{equation}
{\Sigma}= \mathrm{Cov}\left[\mathrm{vec}([x^1,\dots, x^m]) \right],
\end{equation}
where $x^i \in \mathbb{R}^{n}$, $i = 1,\dots, m$ denotes the $n$ variables or features of interest at the $i$th time point. 
Even for moderately large $m$ and $n$ the number of degrees of freedom ($mn(mn+1)/2$) in the covariance matrix can greatly exceed the number of training samples available for estimation. One way to handle this problem is to introduce structure and/or sparsity, thus reducing the number of parameters to be estimated. Spatiotemporal data is often highly structured, hence the design of estimators that model and exploit appropriate covariance structure can provide significant gains.


We aim to develop a nonparametric framework for estimating time
varying graphical structure for matrix-variate distributions. 
Associated with each $x^i \in \mathbb{R}^{n}$ is its undirected graph $G(i)$.
Under the assumption that the law ${\cal L}(x^i)$ of $x^i$ changes
smoothly, \cite{zhou:TV} introduced a nonparametric method to estimate
the graph sequence $G(1), G(2),\ldots$ assuming that the $x^i \sim \mathcal{N}(0,B(i/m))$ are independent, where $B(t)$ is a smooth function over $t \in [0,1]$ and we have mapped the indices $i$ onto points $t = i/m$ on the interval $[0,1]$.
In this work, we are interested in the general time series model where the 
$x^i, i=1, \ldots, m$ are dependent and the $B^{-1}(t)$ graphs change over time.  

One way to introduce dependency into the $x^i$ is to study the following covariance structure. 
Let $A = (a_{ij}) \in \mathbb{R}^{m \times m}, B(t) = (b_{ij}(t) )\in \mathbb{R}^{n \times n}, t \in [0,1]$ be symmetric positive definite
covariance matrices. Let $\mathrm{diag}(v)$, $v = (v_1, \ldots, v_m)$ be the diagonal matrix with elements $v_i$ along the
diagonal. Consider the random matrix $X$ with row vectors $y^j$ corresponding to measurements at the $j$th spatial location, and columns $x^i$ corresponding to the $m$ measurements at times $i/m$, $i = 1,\dots, m$:
\begin{align}
\label{eq::time}
 & \forall j =1, \ldots, n, \; \; y^j \sim \mathcal{N}_m(0, A_j) \;
 \;\text{where} \; \;
  A_j =  A +\mathrm{diag}(b_{jj}(1), \ldots, b_{jj}(m)), \; \text{ and }\; \\
\label{eq::space}
&  \forall i=1, \ldots, m, \; \; 
x^i \sim \mathcal{N}_n(0, a_{ii} I+ B(i/m)) \; \; \text{ where $B(t)$ changes smoothly over $t \in [0,1]$;}
\end{align}
that is, the covariance of the column vectors $x^i$ corresponding to each time
point changes smoothly with time (if
$a_{ii}$ is a smooth function of $i$). This provides ultimate flexibility in parameterizing spatial
correlations, for example, across different geographical scales
through variograms~\citep{cressie2015statistics}, each of which is allowed to change over seasons. Observe that while we have used the normal distribution here for simplicity, all our results hold for general subgaussian distributions.

The model \eqref{eq::space} also allows modeling the dynamic gene regulatory and brain connectivity networks with topological (e.g.,
Erd\H{o}s-R\'{e}nyi random graph, small-world graph, or modular
graphs) constraints via degree specifications as well as spatial
constraints in the set of $\{B(t), t=1, 2,\ldots\}$. 
When $A = {\bf 0}$, we return to the
case of~\cite{zhou:TV} where there is no temporal correlation, i.e. $y^1, \ldots, y^n$ assumed to be independent. 

We propose methodologies to study the model as constructed
in~\eqref{eq::time} and~\eqref{eq::space}. Building upon and extending techniques of~\cite{zhou:TV}
and~\cite{rudelson2015high,zhou2014gemini}, we aim to design estimators to estimate graph sequence $G(1),
G(2),\ldots$, where the temporal graph $H$ and spatial graphs
$G(i)$ are determined by the zeros of $A^{-1}$ and $B(t)^{-1}$.   
Intuitively, the temporal correlation and spatial correlation are
modeled as two additive processes. The covariance of $X$ is now
\begin{equation}
\label{eq:kronCov}
\mathrm{Cov} [\mathrm{vec}(X)] = \Sigma = A \otimes I_n + \sum\nolimits_{i = 1}^m (e_i e_i^T) \otimes B(i/m)
\end{equation}
where $e_i \in \mathbb{R}^m, \ \forall i$ are the $m$-dimensional standard basis vectors.


In the context of this model, we aim to develop a nonparametric method for estimating time varying graphical structure for matrix-variate normal distributions using an $\ell_1$ regularization method. We will show that, as long as the covariances change smoothly over time, we can estimate the spatial and temporal covariance matrices well in terms of predictive risk even when $n, m$ are both large. We will investigate the following theoretical properties: (a) consistency and rate of convergence in the operator and Frobenius norm of the covariance matrices and their inverse, (b) large deviation results for covariance matrices for simultaneously correlated and non-identically distributed observations, and (c) conditions that guarantee smoothness of the covariances.

Besides the model \eqref{eq:kronCov}, another well-studied option for modeling spatio-temporal covariances $\Sigma$ is to introduce structure via the Kronecker product of smaller symmetric positive definite matrices, i.e.
$
{\Sigma} = {A}\otimes {B}.
$
The Kronecker product model, however, is restrictive when applied to general spatio-temporal covariances as it assumes the covariance is separable (disallowing such simple scenarios as the presence of additive noise), and does not allow for time varying spatial structure. When used to estimate covariances not following Kronecker product structure, many estimators will respond to the model mismatch by giving ill-conditioned estimates \citep{greenewaldTSP}.

Human neuroscience data is a notable application where time-varying structure emerges. In neuroscience, one must take into account temporal correlations as
well as spatial correlations, which reflect the connectivity formed by the
neural pathways. It is conceivable that the brain
connectivity graph will change over a sufficiently long period of
measurements. For example, as a child learns to associate symbols in the
environment, certain pathways within the brain are reinforced. When they begin to associate images with words, the
correlation between a particular sound like Mommy and the sight of a face
becomes stronger and forms a well worn pathway. On the other hand, long
term non-use of connections between sensory and motor neurons can
result in a loss of the pathway. 

\subsection{Datasets and Related Work}


Estimating graphical models (connectomes) in fMRI data using sparse inverse covariance techniques has enjoyed wide application \citep{huang2010learning,varoquaux2010, narayan2015two, kim2015highly}. However, recent research has only now begun exploring observed phenomena such as temporal correlations and additive temporally correlated noise \citep{chen2015m,arbabshirani2014impact,kim2015highly, qiu2016joint}, and time-varying dynamics and graphical models (connectomes) \citep{calhoun2014chronnectome,liu2013time,chang2010time,chen2015m}. 


We consider the ADHD-200 fMRI dataset \citep{biswal2010toward}, and study resting state fMRIs for a variety of healthy patients in the dataset at different stages of development. Using our methods, we are able to directly estimate age-varying graphical models across brain regions, chronicling the development of brain structure throughout childhood.

Several models have emerged to generalize the Kronecker product model to allow it to model more realistic covariances while still maintaining many of the gains associated with Kronecker structure. Kronecker PCA, discussed in \cite{tsiliArxiv}, approximates the covariance matrix using a sum of Kronecker products.
An algorithm (Permuted Rank-penalized Least Squares (PRLS)) for fitting the KronPCA model to a measured sample covariance matrix was introduced in \citep{tsiliArxiv} and was shown to have strong high dimensional MSE performance guarantees. 
From a modeling perspective, the strengths of Kronecker PCA lie in its ability to handle ``near separable" covariances and a variety of time-varying effects. While the Kronecker PCA model is very general, so far incorporation of sparsity in the inverse covariance has not been forthcoming. This motivates our introduction of the sparse model \eqref{eq:kronCov}, which we demonstrate experimentally in Section \ref{supp:compare} of the supplement to enjoy better statistical convergence.

\cite{carvalho2007dynamic} proposed a Bayesian additive time-varying graphical model, where the spatially-correlated noise term is a parameter of the driving noise covariance in a temporal dynamical model. Unlike our method, they did not estimate the temporal correlation, instead requiring the dynamical model to be pre-set. Our proposed method has wholly independent spatial and temporal models, directly estimating an inverse covariance graphical model for the temporal relationships of the data. This allows for a much richer temporal model and increases its applicability.

In the context of fMRI, the work of \cite{qiu2016joint} used a similar kernel-weighted estimator for the spatial covariance, however they modeled the temporal covariance with a simple AR-1 model which they did not estimate, and their estimator did not attempt to remove. Similarly, \cite{monti2014estimating} used a smoothed kernel estimator for $B^{-1}(t)$ with a penalty to further promote smoothness, but did not model the temporal correlations. Our additive model allows the direct estimation of the temporal behavior, revealing a richer structure than a simple AR-1, and allowing for effective denoising of the data, and hence better
estimation of the spatial graph structures. 





\section{The model and method}\label{sec:ModelFramework}



Let the elements of $A \succ 0$ and $B(t)$ be denoted as $[A]_{ij} := a_{ij}$ and $[B(t)]_{ij} := b_{ij}(t),  t \in [0,1]$.  
Similar to the setting in \citep{zhou:TV}, we assume that $b_{ij}(t)$ is a smooth function of time $t$ for all $i,j$, and assume that $B^{-1}(t)$ is sparse. Furthermore, we suppose that $m \gg n$, corresponding to there being more time points than spatial variables.
 For a random variable $Y$, the subgaussian norm of $Y$, $\|Y\|_{\psi_2}$, is defined via 
$
 \|Y\|_{\psi_2} = \sup_{p \geq 1} p^{-1/2}(E|Y|^p)^{1/p}$.
 Note that if $E[Y] = 0$, we also have $E[\exp(tY)] \leq \exp(Ct^2\|Y\|_{\psi_2}^2) \quad \forall t \in \mathbb{R}$.
Define an $n \times m$ random matrix $Z$ with independent, zero mean entries $Z_{ij}$ satisfying $E[Z_{ij}^2] = 1$ and having subgaussian norm $\|Z_{ij}\|_{\psi_2} \leq K$. Matrices $Z_1, Z_2$ denote independent copies of $Z$. 
We now write an additive generative model for subgaussian data $X \in \mathbb{R}^{n \times m}$ having covariance given in \eqref{eq:kronCov}. Let
\begin{equation}
\label{eq:TVMVGM}
X = Z_1 A^{1/2} + Z_B
\end{equation}
where
$Z_B = [B(1/m)^{1/2} Z_2 e_1, \dots, B(i/m)^{1/2} Z_2 e_i, \dots, B(1)^{1/2} Z_2 e_m],
$
and $e_i \in \mathbb{R}^m, \ \forall i$ are the $m$-dimensional standard basis vectors. Then the covariance
\begin{align*}
&\Sigma = \mathrm{Cov}[\mathrm{vec}(X)] = \mathrm{Cov}[\mathrm{vec}(Z_1 A^{1/2})] + \mathrm{Cov}[\mathrm{vec}(Z_B)]\\&= \mathrm{Cov}[\mathrm{vec}(Z_1 A^{1/2})] + \sum\nolimits_{i=1}^m  (e_i e_i^T) \otimes \mathrm{Cov}[B(i/m)^{1/2} Z_2 e_i]\\ & = A\otimes I_n + \sum\nolimits_{i=1}^m (e_i e_i^T) \otimes B(i/m).
\end{align*}
Thus \eqref{eq:TVMVGM} is a generative model for data following the covariance model \eqref{eq:kronCov}.


\subsection{Estimators}

As in \cite{rudelson2015high}, we can exploit the large-$m$ convergence of $Z_1 A Z_1^T$ to $\tr(A) I$ to project out the $A$ part and create an estimator for the $B$ covariances. As $B(t)$ is time-varying, we use a weighted average across time to create local estimators of spatial covariance matrix $B(t)$. 

It is often assumed that knowledge of the trace of one of the factors
is available a priori.
For example, the spatial signal variance may be known and time invariant, corresponding to $\tr(B(t))$ being known. Alternatively, the temporal component variance may be constant and known, corresponding to $\tr(A)$ being known. In our analysis below, we suppose that $\tr (A)$ is known or otherwise estimated (similar results hold when $\tr(B(t))$ is known). For simplicity in stating the trace estimators, in what follows we suppose that $\tr(B(t)) = \tr(B)$ is constant, and without loss of generality that the data has been normalized such that diagonal elements $A_{ii}$ are constant over $i$. 

 As $B(t)$ is smoothly varying over time, the estimate at time $t_0$ should depend strongly on the time samples close to $t_0$, and less on the samples further from $t_0$. For any time of interest $t_0$, we thus construct a weighted estimator using a weight vector $w_i (t_0)$ such that $\sum_{t = 1}^m  w_t(t_0) = 1$. 
Our weighted, unstructured sample-based estimator for $B(t_0)$ is then given by 
\begin{align}\label{eq:PlugInEst}
\widehat{S}_m(t_0) := \sum\nolimits_{i = 1}^m w_{i}({t_0}) \left(x_i  x_i^T - \frac{\tr(A)}{m} I_n\right), \: \mathrm{where} \quad w_{i}(t_0) =  \frac{1}{mh} K\left(\frac{i/m-t_0}{h}\right),
\end{align}
and we have considered the class of weight vectors $w_i(t_0)$
arising from a symmetric nonnegative kernel function $K$ with compact support $[0,1]$ and bandwidth determined by parameter $h$. A list of minor regularity assumptions on $K$ are listed in the supplement.
For kernels such as the Gaussian kernel, this $w_t(t_0)$ will result in samples close to $t_0$ being highly weighted, with the ``weight decay" away from $t_0$ scaling with the bandwidth $h$. A wide bandwidth will be appropriate for slowly-varying $B(t)$, and a narrow bandwidth for quickly-varying $B(t)$.

To enforce sparsity in the estimator for $B^{-1}(t_0)$, we substitute $\widehat{S}_m(t_0)$ into the widely-used GLasso objective function, resulting in a penalized estimator for $B(t_0)$ with regularization parameter $\lambda_m$ 
\begin{equation}
\label{eq:ColEst}
\widehat{B}_\lambda(t_0) := \arg\min_{B_\lambda\succ 0} \mathrm{tr}\left(B_\lambda^{-1} \widehat{S}_m(t_0)\right) + \log |B_\lambda| + \lambda_m |B_\lambda^{-1}|_{1}.
\end{equation}
For a matrix $B$, we let $|B|_{1} := \sum_{i  j} |B_{ij}|$. Increasing the parameter $\lambda_m$ gives an increasingly sparse $\widehat{B}_\lambda^{-1}(t_0)$.
Having formed an estimator for $B$, we can now form a similar estimator for $A$. Under the constant-trace assumption, we construct an estimator for $\tr(B)$
\begin{align}
\hat{\tr}(B) &= \sum\nolimits_{i = 1}^m w_i\|X_i \|_2^2 - \frac{n}{m} \tr(A), \mathrm{with}\: w_i = \frac{1}{m}.
\end{align}
For a time-varying trace $\tr(B(t))$, use the time-averaged kernel 
\begin{align}
\hat{\tr}(B(t_0)) &= \sum_{i = 1}^m w_i(t_0) \|X_i \|_2^2 - \frac{n}{m} \tr(A), \mathrm{with}\: w_i(t_0) = \frac{1}{mh}  K \left(\frac{ i/m  -t_0}{h}\right).
\end{align}
In the future we will derive rates for the time varying case by choosing an optimal $h$.
These estimators allow us to construct a sample covariance matrix for $A$:
\begin{align} 
\label{eq:atilde}
\tilde{A} &= \frac{1}{n} X^T X - \frac{1}{n}\mathrm{diag}\{\hat{\mathrm{tr}}(B(1/m)), \dots,\hat{\mathrm{tr}}(B(1))\}. 
\end{align}
We (similarly to $B(t)$) apply the GLasso approach to $\tilde{A}$. Note that with $m > n$, $\tilde{A}$ has negative eigenvalues since $\lambda_{\min}\left(\frac{1}{n} X^T X\right)=0$.
We obtain a positive semidefinite matrix $\tilde{A}_+$ as: 
\begin{equation}
\label{solve:A}
\tilde{A}_{+} = \arg\min_{A \succeq 0} \|\tilde{A}-A\|_{\max}.
\end{equation}
We use alternating direction method of multipliers (ADMM) to solve \eqref{solve:A} as in \citet{Boyd:2011}, and prove that this retains a tight elementwise error bound. Note that while we chose this method of obtaining a positive semidefinite $\tilde{A}_+$ for its simplicity, there may exist other possible projections, the exact method is not critical to our overall Kronecker sum approach. In fact, if the GLasso is not used, it is not necessary to do the projection \eqref{solve:A}, as the elementwise bounds also hold for $\tilde{A}$.

We provide a regularized estimator for the correlation matrices $\rho(A)= \mathrm{diag}(A)^{-1/2} A \mathrm{diag}(A)^{-1/2}$ using the positive semidefinite $\tilde{A}_+$ as the initial input to the GLasso problem
\begin{equation}
\label{eq:RowEst}
\hat{\rho}_{\lambda}(A)= \mathrm{argmin}_{A_\rho \succ 0}\  \mathrm{tr}(A_\rho^{-1} \rho(\tilde{A}_+)) + \log|A_\rho| + \lambda_n |A^{-1}_\rho|_{1,\mathrm{off}},
\end{equation}
where $\lambda_n > 0$ is a regularization parameter and $|\cdot|_{1,\mathrm{off}}$ is the L1 norm on the offdiagonal. 

Form the estimate for $A$ as $\frac{\tr(A)}{m} \hat{\rho}_{\lambda}(A)$. Observe that our method has three tuning parameters, two if $\tr(A)$ is known or can be estimated. If $\tr(A)$ is not known, we present several methods to choose it in Section \ref{supp:tuning} in the supplement.  Once $\tr(A)$ is chosen, the estimators \eqref{eq:ColEst} and \eqref{eq:RowEst} for $A$ and $B(t)$ respectively do not depend on each other, allowing $\lambda_m$ and $\lambda_n$ to be tuned independently.

\section{Statistical convergence}

We first bound the estimation error for the time-varying $B(t)$. Since $\hat{B}(t)$ is based on a kernel-smoothed sample covariance, $\hat{B}(t)$ is a biased estimator, with the bias depending on the kernel width and the smoothness of $B(t)$. In Section \ref{supp:prel} of the supplement, we derive the bias and variance of $\hat{S}_m(t_0)$, using arguments from kernel smoothing and subgaussian concentration respectively.


In the following results, we assume that the eigenvalues of the matrices $A$ and $B(t)$ are bounded:

\textbf{Assumption 1}: There exist positive constants $c_A, c_B$ such that $\frac{1}{c_A} \leq \lambda_{\min}(A) \leq \lambda_{\max}(A) \leq c_A$ and $\frac{1}{c_B} \leq \lambda_{\min}(B(t)) \leq \lambda_{\max}(B(t)) \leq c_B$ for all $t$.

\textbf{Assumption 2}: $B(t)$ has entries with bounded second derivatives on $[0,1]$.

Putting the bounds on the bias and variance together and optimizing the rate of $h$, we obtain the following, which we prove in the supplementary material.
\begin{theorem}
\label{thm}


Suppose that the above Assumption holds, the entries $B_{ij}(t)$ of  $B(t)$ have bounded second derivatives for all $i,j$, and $t \in [0,1]$, $s_b+n=o((m/\log m)^{2/3})$, and that $h \asymp (m^{-1}\log m)^{1/3}$. Then with probability at least $1 - \frac{c''}{m^{8/3}}$, $\widehat{S}_m(t_0)$ is positive definite and for some $C$
\[
\mathrm{max}_{ij} |\widehat{S}_m(t_0, i, j) - B({t_0, i, j})| \leq C \left(m^{-1}\log m\right)^{1/3}.
\]
\end{theorem}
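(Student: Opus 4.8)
The plan is to split $\widehat{S}_m(t_0)$ into a deterministic (bias) part and a mean-zero stochastic part, bound each entrywise, and then balance over $h$. Writing $W = \mathrm{diag}(w_1(t_0),\dots,w_m(t_0))$ and using the generative model \eqref{eq:TVMVGM}, one has $\widehat{S}_m(t_0) = X W X^T - \frac{\tr(A)}{m} I_n$ with $X = Z_1 A^{1/2} + Z_B$ and $\sum_i w_i(t_0) = 1$. Since $E[Z_1 M Z_1^T] = \tr(M) I_n$ for any fixed $M$, the terms linear in the cross products of $Z_1,Z_2$ vanish in expectation and $E[Z_1 A^{1/2} W A^{1/2} Z_1^T] = \tr(WA) I_n = \frac{\tr(A)}{m} I_n$ under the normalization $a_{ii} = \tr(A)/m$; hence $E[\widehat{S}_m(t_0)] = \sum_{i=1}^m w_i(t_0) B(i/m)$. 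This isolates the bias as $\sum_i w_i(t_0)\bigl(B(i/m) - B(t_0)\bigr)$.

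First I would bound the bias entrywise. A Taylor expansion of $B_{jk}(\cdot)$ about $t_0$, together with Assumption 2 and the kernel normalization and compact support, controls $\bigl|\sum_i w_i(t_0)(B_{jk}(i/m) - B_{jk}(t_0))\bigr|$ by $O(h)$: the leading term linear in $(i/m - t_0)$ is $O(h)$ and the quadratic remainder is $O(h^2)$, while the Riemann-sum error in passing from sums to kernel integrals is lower order. Thus the bias is $O(h)$ uniformly over $j,k$.

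The heart of the argument is the stochastic part $\widehat{S}_m(t_0) - E[\widehat{S}_m(t_0)]$. The main obstacle is that the columns $x_i$ are \emph{not} independent: the temporal factor $A$ correlates them, so $\widehat{S}_m(t_0)$ is not a sum of independent rank-one matrices and classical sample-covariance concentration does not apply directly. Following the device of \cite{rudelson2015high,zhou2014gemini}, I would view each fixed entry $(j,k)$ of the deviation as a centered quadratic (and bilinear) form in the independent subgaussian entries of $Z_1$ and $Z_2$, arising from the four pieces $Z_1 A^{1/2} W A^{1/2} Z_1^T - \tr(WA) I_n$, $Z_1 A^{1/2} W Z_B^T$, its transpose, and $Z_B W Z_B^T - \sum_i w_i B(i/m)$, and apply the Hanson--Wright inequality. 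The variance proxy is governed by Frobenius norms such as $\|A^{1/2} W A^{1/2}\|_F^2 \le \|A\|_2^2 \|W\|_F^2$ and the subexponential scale by operator norms such as $\|A\|_2 \|W\|_2$; using Assumption 1 together with $\|W\|_F^2 = \sum_i w_i^2(t_0) \asymp 1/(mh)$ and $\|W\|_2 = \max_i w_i(t_0) \asymp 1/(mh)$, this gives a per-entry deviation of order $\sqrt{\log m/(mh)} + \log m/(mh)$ at failure probability $m^{-8/3}/n^2$. A union bound over the $n^2$ entries, using $n \lesssim m$ so that $\log(n^2 m^{8/3}) \lesssim \log m$, yields $\max_{jk}\bigl|[\widehat{S}_m(t_0) - E\widehat{S}_m(t_0)]_{jk}\bigr| \lesssim \sqrt{\log m/(mh)}$, the linear term being lower order in the relevant regime.

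Finally I would combine the two bounds: the entrywise error is $\lesssim h + \sqrt{\log m/(mh)}$, and balancing the two terms forces $h \asymp (m^{-1}\log m)^{1/3}$, at which both are of order $(m^{-1}\log m)^{1/3}$, giving the stated max-norm bound. For positive definiteness I would upgrade the stochastic control to the operator norm via an $\epsilon$-net over $S^{n-1}$ combined with Hanson--Wright, obtaining $\|\widehat{S}_m(t_0) - E\widehat{S}_m(t_0)\|_2 \lesssim \sqrt{(n + \log m)/(mh)}$; the hypothesis $s_b + n = o((m/\log m)^{2/3})$ ensures $n/(mh) \to 0$, so this deviation and the $O(h)$ bias are both $o(1)$ and in particular smaller than $\lambda_{\min}(B(t_0)) \ge 1/c_B$, whence $\widehat{S}_m(t_0) \succ 0$ with the claimed probability. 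The genuinely delicate step is the Hanson--Wright bookkeeping for the correlated, non-identically-distributed columns; the bias estimate and the net argument are comparatively routine.
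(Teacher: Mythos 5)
Your proposal is correct and follows essentially the same route as the paper: a bias--variance split with the kernel bias bounded by $O(h)$, entrywise concentration via Hanson--Wright applied to quadratic forms in the independent subgaussian entries (with variance proxy $\|W\|_F^2 \asymp 1/(mh)$ and a union bound over the $n^2$ entries), positive definiteness via an $\epsilon$-net over $S^{n-1}$ using $\lambda_{\min}(B(t)) \ge 1/c_B$, and balancing $h \asymp (m^{-1}\log m)^{1/3}$. The only cosmetic difference is bookkeeping: you split $XWX^T$ into four quadratic/bilinear pieces and treat each separately, whereas the paper packages each entry deviation as a single form $[w^{(ij)}]^T\mathrm{vec}(\hat{S}-\Sigma) = z^T\Sigma^{1/2}\,\mathrm{vec}^{-1}(w^{(ij)})\,\Sigma^{1/2}z$ with $\|\Sigma\| \le \|A\| + \max_t\|B(t)\|$ and invokes Hanson--Wright once.
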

This result confirms that the $mh$ temporal samples selected by the kernel act as replicates for estimating $B(t)$.
We can now substitute this elementwise bound on $\widehat{S}_m(t_0)$ into the GLasso proof, obtaining the following theorem which demonstrates that $\hat{B}(t)$ successfully exploits sparsity in $B^{-1}(t)$.
\begin{theorem}
\label{thm:GLasso}

Suppose the conditions of Theorem \ref{thm} and that $B^{-1}(t)$ has at most $s_b$ nonzero off-diagonal elements for all $t$.  If $\lambda_m \sim \sqrt{\frac{\log m}{m^{2/3}}}$, then the GLasso estimator \eqref{eq:ColEst} satisfies
\begin{align*}
\|\hat{B}(t_0) - B(t_0) \|_F = O_p \left(\sqrt{\frac{(s_b+n)\log m}{m^{2/3}}}\right), \|\hat{B}^{-1}(t_0) - B^{-1}(t_0) \|_F &= O \left(\sqrt{\frac{(s_b+n)\log m}{m^{2/3}}}\right)
\end{align*}

\end{theorem}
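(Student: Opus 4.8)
The plan is to transfer the elementwise control of Theorem~\ref{thm} into a Frobenius-norm bound via the Rothman--Bickel--Levina--Zhu analysis of the graphical Lasso. After the change of variables $\Theta = B_\lambda^{-1}$, the objective in \eqref{eq:ColEst} becomes the standard form $G(\Theta) = \tr(\Theta\,\widehat{S}_m(t_0)) - \log|\Theta| + \lambda_m|\Theta|_1$, so that $\hat\Theta := \hat B^{-1}(t_0)$ is its minimizer over $\Theta \succ 0$. Write $\Theta_0 := B^{-1}(t_0)$ and $\Delta := \hat\Theta - \Theta_0$. Theorem~\ref{thm} supplies, on an event $\mathcal{E}$ of probability at least $1 - c''/m^{8/3}$, both that $\widehat{S}_m(t_0) \succ 0$ (so $G$ is strictly convex and $\hat\Theta$ is well defined) and the elementwise bound $\delta := \|\widehat{S}_m(t_0) - B(t_0)\|_{\max} \le C(m^{-1}\log m)^{1/3}$. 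The whole argument runs on $\mathcal{E}$.

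First I would use $G(\hat\Theta) \le G(\Theta_0)$ and expand $G(\Theta_0 + \Delta) - G(\Theta_0)$. The log-determinant contribution is treated through the integral form of its Taylor remainder: $-\log|\Theta_0 + \Delta| + \log|\Theta_0| + \tr(B(t_0)\Delta)$ equals $\mathrm{vec}(\Delta)^\top \big[\int_0^1 (1-v)(\Theta_0 + v\Delta)^{-1}\otimes(\Theta_0 + v\Delta)^{-1}\,dv\big]\mathrm{vec}(\Delta)$, which is bounded below by $\kappa\|\Delta\|_F^2$ for a constant $\kappa$ depending only on $c_B$ (Assumption~1), as long as $\Theta_0 + v\Delta$ stays positive definite with bounded largest eigenvalue along the segment. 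The remaining linear term $\tr\!\big(\Delta(\widehat{S}_m(t_0) - B(t_0))\big)$ is bounded by H\"older's inequality as $\delta\,|\Delta|_1$, and the penalty increment is $\lambda_m(|\Theta_0 + \Delta|_1 - |\Theta_0|_1)$.

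The core of the argument is the sparsity bookkeeping. Splitting $\Delta = \Delta^{+} + \Delta^{-}$, where $\Delta^{+}$ collects the entries on the diagonal and on the support of $\Theta_0$ (at most $s_b + n$ of them) and $\Delta^{-}$ the rest, the vanishing off-support entries of $\Theta_0$ give $|\Theta_0 + \Delta|_1 - |\Theta_0|_1 \ge |\Delta^{-}|_1 - |\Delta^{+}|_1$. Because the prescribed $\lambda_m \asymp (\log m)^{1/2} m^{-1/3}$ dominates $\delta \asymp (\log m)^{1/3} m^{-1/3}$ by the factor $(\log m)^{1/6}$, for large $m$ we have $\lambda_m \ge 2\delta$, so the off-support penalty $\lambda_m|\Delta^{-}|_1$ absorbs the linear term, leaving only the on-support piece $\lambda_m|\Delta^{+}|_1 \le \lambda_m\sqrt{s_b+n}\,\|\Delta\|_F$. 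Combining with the quadratic lower bound yields $\kappa\|\Delta\|_F^2 \le C\lambda_m\sqrt{s_b+n}\,\|\Delta\|_F$, hence $\|\hat B^{-1}(t_0) - B^{-1}(t_0)\|_F = \|\Delta\|_F = O\big(\lambda_m\sqrt{s_b+n}\big) = O\big(\sqrt{(s_b+n)\log m / m^{2/3}}\big)$, the stated precision-matrix rate.

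The step I expect to be the main obstacle is closing the self-consistency in the log-determinant expansion: the quadratic lower bound is only licensed when $\Delta$ is small enough in spectral norm that $\Theta_0 + v\Delta$ remains positive definite with controlled top eigenvalue, yet the size of $\Delta$ is exactly the conclusion. I would resolve this in the standard fashion, restricting to the ball $\{\|\Delta\|_F \le r\}$ with $r \asymp \lambda_m\sqrt{s_b+n}$, showing $G(\Theta_0 + \Delta) - G(\Theta_0) > 0$ on its boundary, and invoking convexity to place the minimizer strictly inside; here the hypothesis $s_b + n = o((m/\log m)^{2/3})$ is what keeps this radius inside the admissible region. Finally I would pass from the precision to the covariance via $\hat\Theta^{-1} - \Theta_0^{-1} = -\hat\Theta^{-1}(\hat\Theta - \Theta_0)\Theta_0^{-1}$: Assumption~1 bounds $\|\Theta_0^{-1}\|_2 = \|B(t_0)\|_2$, and the same self-consistency bounds the eigenvalues of $\hat\Theta$ away from $0$, so both inverses have $O(1)$ operator norm and $\|\hat B(t_0) - B(t_0)\|_F \le \|\hat\Theta^{-1}\|_2\|\Theta_0^{-1}\|_2\|\Delta\|_F = O_p\big(\sqrt{(s_b+n)\log m / m^{2/3}}\big)$, matching the covariance rate.
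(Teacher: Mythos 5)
Your proposal is correct and takes essentially the same route as the paper: the paper's proof of this theorem consists of substituting the elementwise bound of Theorem \ref{thm} into the standard graphical-Lasso analysis of \cite{rothman2008sparse} and \cite{zhou2011high}, which is exactly the argument you reconstruct in detail (basic inequality, Taylor expansion of the log-determinant with integral remainder, support/off-support decomposition with $\lambda_m \gtrsim 2\delta$, and the boundary-of-the-ball convexity argument for self-consistency). The one caveat, inherited from the paper's own statement rather than introduced by you, is that the self-consistency step needs the radius $\lambda_m\sqrt{s_b+n}$ to be small, and the stated hypothesis $s_b+n = o\left((m/\log m)^{2/3}\right)$ guarantees this only up to a $(\log m)^{1/6}$ factor.
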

Observe that this single-sample bound converges whenever the $A$ part dimensionality $m$ grows. 
The proof follows from the concentration bound in Theorem \ref{thm} using the argument in \cite{zhou:TV}, \cite{zhou2011high}, and \cite{rothman2008sparse}. Note that $\lambda_m$ goes to zero as $m$ increases, in accordance with the standard bias/variance tradeoff. 

We now turn to the estimator for the $A$ part. As it does not involve kernel smoothing, we simply need to bound the variance. We have the following bound on the error of $\tilde{A}$: 
\begin{theorem}
\label{lem:Apart2}
Suppose the above Assumption holds. 
Then 
\[
\mathrm{max}_{ij} |\tilde{A}_{ij} - A_{ij}| \le C (c_A + c_B) \sqrt{n^{-1}\log m}
\]
with probability $1-\frac{c}{m^4}$ for some constants $C,c>0$.
\end{theorem}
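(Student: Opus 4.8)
The plan is to use the additive generative model \eqref{eq:TVMVGM} to write each entry of $\tilde A$ as a centered quadratic form in the independent subgaussian entries of $Z_1$ and $Z_2$, control it by a Hanson--Wright concentration inequality, and finish with a union bound over the $O(m^2)$ entries. Writing $a_i := A^{1/2}e_i$, $C_i := B(i/m)^{1/2}$, and $z_{2,i}:=Z_2 e_i$, the $i$th column of $X$ is $X_i = Z_1 a_i + C_i z_{2,i}$, so
\[
X_i^T X_j = a_i^T Z_1^T Z_1 a_j + a_i^T Z_1^T C_j z_{2,j} + z_{2,i}^T C_i Z_1 a_j + z_{2,i}^T C_i C_j z_{2,j}.
\]
Using independence of $Z_1,Z_2$ and $E[Z_1^T Z_1]=nI_m$ gives $E[X_i^T X_j]=nA_{ij}+\delta_{ij}\tr(B(i/m))$, so by \eqref{eq:atilde} the off-diagonal error is exactly $\tilde A_{ij}-A_{ij}=\tfrac1n\bigl(X_i^T X_j-E[X_i^T X_j]\bigr)$, while the diagonal error additionally picks up $\tfrac1n\bigl(\tr(B(i/m))-\hat{\tr}(B(i/m))\bigr)$ from the plug-in trace.

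The clean way to bound the centered form is to note that the model makes $X_i$ a \emph{linear} image of the concatenated vector $\zeta:=(\mathrm{vec}(Z_1);\mathrm{vec}(Z_2))$, whose coordinates are independent, mean-zero, unit-variance, and $K$-subgaussian; write $X_i=L_i\zeta$, so that $L_iL_i^T=\mathrm{Cov}(X_i)=A_{ii}I_n+B(i/m)$. Then $X_i^T X_j=\zeta^T M_{ij}\zeta$ with symmetric $M_{ij}=\tfrac12(L_i^T L_j+L_j^T L_i)$, and Assumption~1 yields the two norm estimates that drive everything: $\|L_i\|_2^2=\|A_{ii}I_n+B(i/m)\|_2\le c_A+c_B$ and $\|L_i\|_F^2=nA_{ii}+\tr(B(i/m))\le n(c_A+c_B)$, hence $\|M_{ij}\|_2\le\|L_i\|_2\|L_j\|_2\le c_A+c_B$ and $\|M_{ij}\|_F\le\|L_i\|_2\|L_j\|_F\le\sqrt n\,(c_A+c_B)$. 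Hanson--Wright then gives, for each fixed $(i,j)$,
\[
\prob{\bigl|X_i^T X_j-E[X_i^T X_j]\bigr|>t}\le 2\exp\!\left(-c\min\!\left(\frac{t^2}{K^4 n(c_A+c_B)^2},\ \frac{t}{K^2(c_A+c_B)}\right)\right).
\]
Note this joint-$\zeta$ view automatically handles the correlation between columns $X_i,X_j$ induced by the shared temporal noise $Z_1$, so no decoupling is needed.

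Choosing $t=C(c_A+c_B)\sqrt{n\log m}$ with $C$ large makes the Gaussian branch active and sends each per-entry probability below $m^{-6}$; a union bound over the $\binom{m}{2}+m$ distinct entries gives the claimed $1-c/m^4$ and the elementwise rate $\tfrac1n t = C(c_A+c_B)\sqrt{n^{-1}\log m}$. For the diagonal I would separately bound the trace error: under the constant-trace normalization $\hat{\tr}(B)-\tr(B)=\tfrac1m(\|X\|_F^2-E\|X\|_F^2)$ is a centered quadratic form in $\zeta$ with covariance $\Sigma=A\otimes I_n+\sum_i (e_ie_i^T)\otimes B(i/m)$ satisfying $\|\Sigma\|_2\le c_A+c_B$ and $\|\Sigma\|_F^2\le mn(c_A+c_B)^2$, so Hanson--Wright yields $\tfrac1n|\hat{\tr}(B)-\tr(B)|\lesssim (c_A+c_B)\sqrt{\log m/(mn)}$, which is smaller than the main term by a factor $m^{-1/2}$ and is absorbed into the constant.

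The main obstacle is ensuring the Frobenius (Gaussian) regime, rather than the operator-norm (exponential) regime, governs the tail over the full admissible range of $(n,m)$: the Gaussian branch dominates precisely when $t\le n K^2(c_A+c_B)$ at $t\asymp(c_A+c_B)\sqrt{n\log m}$, i.e.\ when $n\gtrsim\log m$, which I would need to argue holds in the stated regime $s_b+n=o((m/\log m)^{2/3})$ for $m$ large (otherwise the rate degrades to $(c_A+c_B)\log m/n$). The remaining care is purely in the constants — verifying the \emph{additive} dependence $c_A+c_B$ (which falls out of the additive covariance $A_{ii}I_n+B(i/m)=L_iL_i^T$ rather than a product structure) and confirming the per-entry exponent survives the $m^2$-fold union bound.
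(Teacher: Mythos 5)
Your proof is correct and is essentially the paper's own argument: the paper likewise writes each entry of $\tilde A$ as a centered quadratic form in the underlying independent subgaussian coordinates (its Lemma \ref{lemma:var}, applied with a sparse weight vector $w$ satisfying $\|w\|_2 = 1/\sqrt{n}$, playing the role of your $M_{ij}$), applies Hanson--Wright with the same operator-norm/Frobenius-norm bounds, union-bounds over the $O(m^2)$ entries, and handles the diagonal exactly as you do, by adding the trace-estimator error $\frac{1}{n}|\hat{\tr}(B) - \tr(B)| \lesssim (c_A+c_B)\sqrt{\log m/(mn)}$, which is of lower order. The only caveat concerns your closing remark: the condition $n \gtrsim \log m$ needed to stay in the Gaussian branch does \emph{not} follow from $s_b + n = o((m/\log m)^{2/3})$ (that is an upper bound on $n$, not a lower bound); the paper instead assumes it implicitly through the hypothesis $\epsilon\|\Sigma\|\|w\|_2 = o(1)$ in Lemma \ref{lemma:var} (and explicitly via $s_a = o(n/\log m)$ in the companion GLasso theorem, which forces $n/\log m \to \infty$), so your proof stands on exactly the same footing as the paper's in this respect.
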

Recall that we have assumed that $m > n$, so the probability converges to 1 with increasing $m$ or $n$.
While $\tilde{A}$ is not positive definite, the triangle inequality implies a bound on the positive definite projection $\tilde{A}_+$ \eqref{solve:A}:
\begin{align}
\label{proj:pos}
\|\tilde{A}_+ - A\|_{\max} &\le \|\tilde{A}_+ - \tilde{A}\|_{\max} + \|\tilde{A}-A\|_{\max} \le 2\|\tilde{A}-A\|_{\max} = O_p\left(\sqrt{n^{-1}{\log m}} \right).
\end{align}
Thus, similarly to the earlier result for $B(t)$, the estimator \eqref{eq:RowEst} formed by substituting the positive semidefinite $\rho(\tilde{A}_+)$ into the GLasso objective enjoys the following error bound \citep{zhou2011high}.
\begin{theorem}
\label{thm:GLasso}

Suppose the conditions of Theorem \ref{lem:Apart2} and that $A^{-1}$ has at most $s_a =o(n/\log m)$ nonzero off-diagonal elements.  If $\lambda_n \sim \sqrt{\frac{\log m}{n}}$, then the GLasso estimator \eqref{eq:RowEst} satisfies
\begin{align*}
\|\hat{A} -A \|_F &= O_p \left(\sqrt{\frac{s_a\log m}{n}}\right),\quad \|\hat{A}^{-1} - A^{-1} \|_F = O_p \left(\sqrt{\frac{s_a\log m}{n}}\right).
\end{align*}

\end{theorem}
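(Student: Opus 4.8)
The plan is to reduce this to the well-understood convergence theory for the graphical lasso on a correlation matrix (as in \citep{rothman2008sparse,zhou2011high}): first establish an elementwise deviation bound on the plug-in input $\rho(\tilde A_+)$, then invoke the standard Frobenius-norm guarantee, and finally transfer the resulting bound from the correlation scale back to $A$. Such a guarantee needs only two hypotheses, namely (i) an elementwise deviation bound on the input correlation matrix, and (ii) uniform control of the eigenvalues of the target $\rho(A)$; both are supplied by Theorem \ref{lem:Apart2} together with the Assumption.

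First I would pass from the covariance scale to the correlation scale. Equation \eqref{proj:pos} already gives $\|\tilde A_+ - A\|_{\max} = O_p(\sqrt{n^{-1}\log m})$. Since the Assumption forces the diagonal entries of $A$ to satisfy $1/c_A \le A_{ii}\le c_A$, the map $M\mapsto \mathrm{diag}(M)^{-1/2}M\,\mathrm{diag}(M)^{-1/2}$ is Lipschitz on the relevant neighborhood, so a routine perturbation argument yields $\|\rho(\tilde A_+) - \rho(A)\|_{\max} = O_p(\sqrt{n^{-1}\log m})$ as well. The same diagonal bounds show that the eigenvalues of $\rho(A)$ lie in a fixed interval bounded away from $0$ and $\infty$, verifying (ii).

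Next I would feed $\rho(\tilde A_+)$ into the general graphical-lasso Frobenius bound with $\lambda_n\asymp\sqrt{\log m/n}$ matched to the elementwise rate above, with dimension $m$ in place of the usual $p$ and effective sample size $n$. The crucial point, and the step I expect to require the most care, is why the rate depends only on the off-diagonal sparsity $s_a$ and carries no additive dimension term (contrast the $B(t)$ result, whose full $\ell_1$ penalty produced the $+n$). By construction $\rho(\tilde A_+)$ and $\rho(A)$ both have \emph{exactly} unit diagonal, so $(\rho(\tilde A_+)-\rho(A))_{ii}=0$ for every $i$; consequently, writing $\Delta=\hat\rho_\lambda(A)^{-1}-\rho(A)^{-1}$ and linearizing $\mathrm{tr}(A_\rho^{-1}\rho(\tilde A_+))+\log|A_\rho|$ about the truth, the remainder pairing $\mathrm{tr}((\rho(\tilde A_+)-\rho(A))\Delta)$ involves only off-diagonal entries, which the off-diagonal penalty $|\cdot|_{1,\mathrm{off}}$ is built to control. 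This removes the $\sqrt m$ that an unpenalized or fully-penalized diagonal would otherwise contribute, the unpenalized diagonal of the precision matrix being pinned down purely by the curvature of $-\log|A_\rho|$. On the support of size $s_a$ one then obtains $\|\Delta\|_F=O_p(\sqrt{s_a\log m/n})$, provided the condition $s_a=o(n/\log m)$ keeps the estimator in the consistency regime. Converting between $\|\hat\rho_\lambda(A)^{-1}-\rho(A)^{-1}\|_F$ and $\|\hat\rho_\lambda(A)-\rho(A)\|_F$ costs only the bounded factor $\|\rho(A)^{-1}\|_2\,\|\hat\rho_\lambda(A)^{-1}\|_2$, again finite by (ii).

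Finally I would transfer the bound back to $A$. Under the normalization that $A_{ii}$ is constant we have $A=\frac{\tr(A)}{m}\rho(A)$ and $\hat A=\frac{\tr(A)}{m}\hat\rho_\lambda(A)$, with the scalar $\tr(A)/m\in[1/c_A,c_A]$ bounded by the Assumption; hence $\|\hat A-A\|_F=\frac{\tr(A)}{m}\|\hat\rho_\lambda(A)-\rho(A)\|_F$ inherits the rate $O_p(\sqrt{s_a\log m/n})$, while the identity $A^{-1}=\frac{m}{\tr(A)}\rho(A)^{-1}$ yields the matching bound for $\|\hat A^{-1}-A^{-1}\|_F$ (any error in $\tr(A)$ being of smaller order by the trace-estimation bounds). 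The genuinely delicate ingredient is the diagonal-is-free argument of the previous paragraph; the remaining steps are the standard perturbation and eigenvalue bookkeeping.
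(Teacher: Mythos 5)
Your proposal is correct and follows essentially the same route as the paper, which proves this theorem by combining the elementwise bound of Theorem \ref{lem:Apart2} with the projection bound \eqref{proj:pos} and then invoking the standard correlation-based GLasso convergence theory of \citet{rothman2008sparse} and \citet{zhou2011high} with effective sample size $n$ and dimension $m$. Your reconstruction in fact supplies details the paper leaves implicit in its citation---notably the covariance-to-correlation perturbation step and the observation that the exact unit diagonal plus the off-diagonal penalty $|\cdot|_{1,\mathrm{off}}$ is what eliminates the additive dimension term (in contrast to the $+n$ in the $B(t)$ bound) and makes the rate $\sqrt{s_a \log m / n}$ attainable under $s_a = o(n/\log m)$.
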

Observe that this single-sample bound converges whenever the $B(t)$
dimensionality
$n$ grows since the sparsity $s_a = o(n/\log m)$.  For relaxation of
this stringent sparsity assumption, one can use other assumptions, see for
example Theorem 3.3 in \cite{zhou2014gemini}.

\section{Simulation study}
\label{sec:sims}


%
%
%
%
%
%


We generated a time varying sequence of spatial covariances $B(t_i) = \Theta(t_i)^{-1}$ according to the method of \cite{zhou:TV}, which follows a type of Erdos-Renyi random graph model. Initially we set $\Theta(0) = 0.25I_{n \times n}$, where $n = 100$. Then, we randomly select $k$ edges and update $\Theta(t)$ as follows: for each new edge $(i,j)$, a weight $a>0$ is chosen uniformly at random from $[0.1,0.3]$; we subtract $a$ from $\Theta_{ij}$ and $\Theta_{ji}$, and increase $\Theta_{ii},\Theta_{jj}$ by $a$. This keeps $B(t)$ positive definite. When we later delete an existing edge from the graph, we reverse the above procedure. 

We consider $t \in [0,1]$, changing the graph structure at the points $t_i = i/5$ as follows. At each $t_i$, five existing edges are deleted, and five new edges are added. For each of the five new edges, a target weight is chosen. Linear interpolation of the edge weights between the $t_i$ is used to smoothly add the new edges and gradually delete the ones to be deleted. Thus, almost always, there are 105 edges in the graph and 10 edges have weights that are varying smoothly (Figure \ref{Fig:Graphs}). 

\begin{figure}[h]
\centering
\includegraphics[width=3.4in]{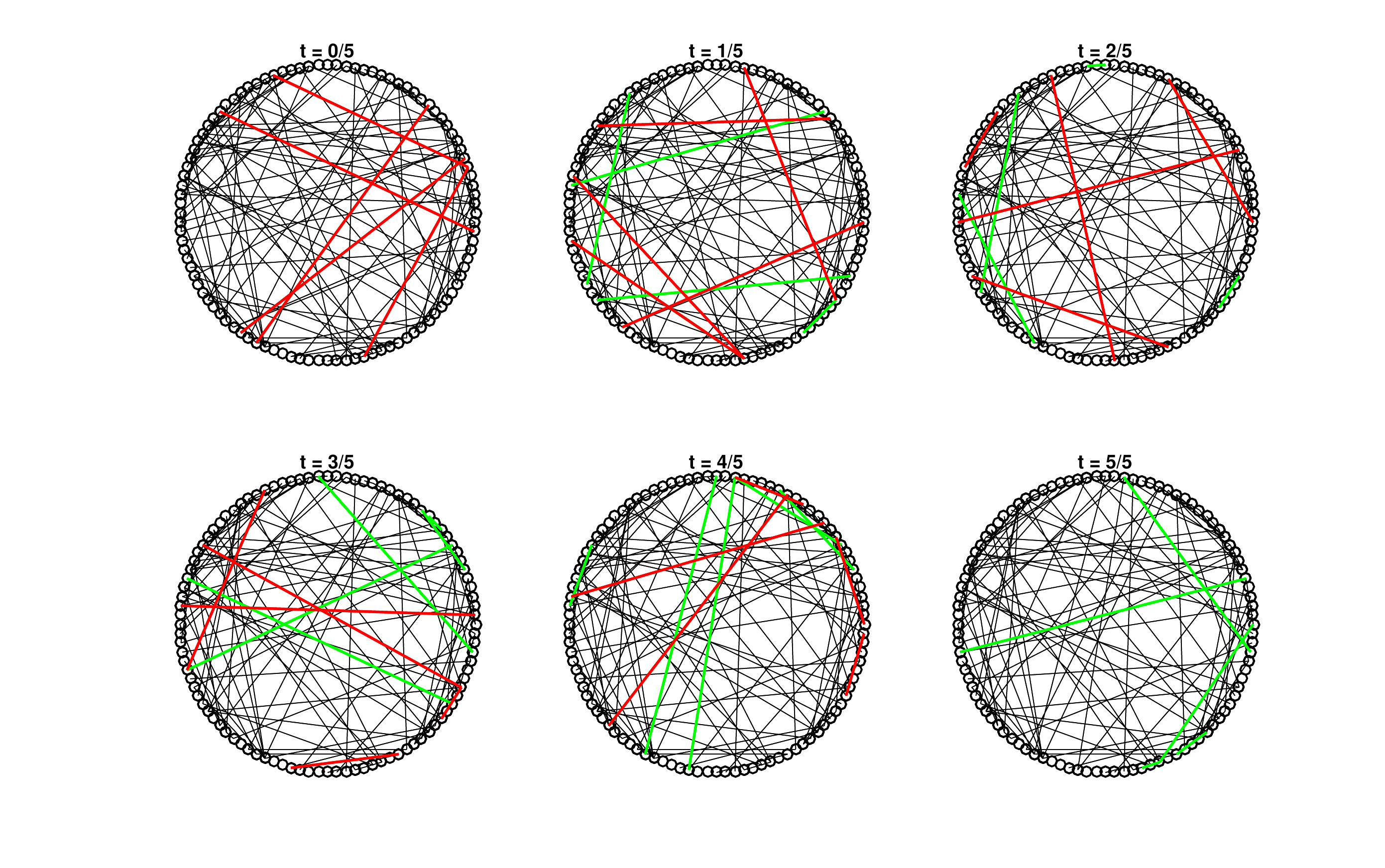}
\caption{Example sequence of Erdos-Renyi $B^{-1}(t) = \Theta(t)$ graphs. At each time point, the 100 edges connecting $n=100$ nodes are shown. Changes are indicated by red and green edges: red edges indicate edges that will be deleted in the next increment and green indicates new edges. }
\label{Fig:Graphs}
\end{figure}


In the first set of experiments we consider $B(t)$ generated from the ER time-varying graph procedure and $A$ an AR-1 covariance with parameter $\rho$. The magnitudes of the two factors are balanced. We set $n = 100$ and vary $m$ from 200 to 2400. For each $n,m$ pair, we vary the $B(t)$ regularization parameter $\lambda$, estimating every $B(t)$, $t = 1/m,\dots, 1$ for each. We evaluate performance using the mean relative Frobenius $B(t)$ estimation error ($\|\hat{B}(t) - B(t)\|_F/\|B(t)\|_F$), the mean relative L2 estimation error ($\|\hat{B}(t) - B(t)\|_2/\|B(t)\|_2$), and the Matthews correlation coefficient (MCC). 

The MCC quantifies edge support estimation performance, and is defined as follows. Let the number of true positive edge detections be TP, true negatives TN, false positives FP, and false negatives FN. The Matthews correlation coefficient is defined as 
$
\mathrm{MCC} = \frac{\mathrm{TP \cdot TN}-\mathrm{FP \cdot FN}}{\sqrt{\mathrm{(TP + FP)(TP + FN)(TN+FP)(TN + FN)}}}.
$
Increasing values of MCC imply better edge estimation performance, with $\mathrm{MCC} = 0$ implying complete failure and $\mathrm{MCC} = 1$ implying perfect edge set estimation.

Results are shown in Figure \ref{Fig:BER}, for $\rho = .5$ and 50 edges in $B$, $\rho = .5$ and 100 edges in $B$, and $\rho = .95$ and 100 edges in $B$. As predicted by the theory, increasing $m$ improves performance and increasing $\rho$ decreases performance. Increasing the number of edges in $B$ changes the optimal $\lambda$, as expected. 
Figure \ref{Fig:AR1} shows performance results for the penalized estimator $\hat{A}$ using MCC, Frobenius error, and L2 error, where $A$ follows an AR(1) model with $\rho=0.5$ and $B$ follows a random ER model.
Note the MCC, Frobenius, spectral norm errors are improved with larger $n$.
In the supplement (Section \ref{supp:randgrid}), we repeat these experiments, using an alternate random graph topologies, with similar results.




\begin{figure}[htb]
\centering
\includegraphics[width=3.25in]{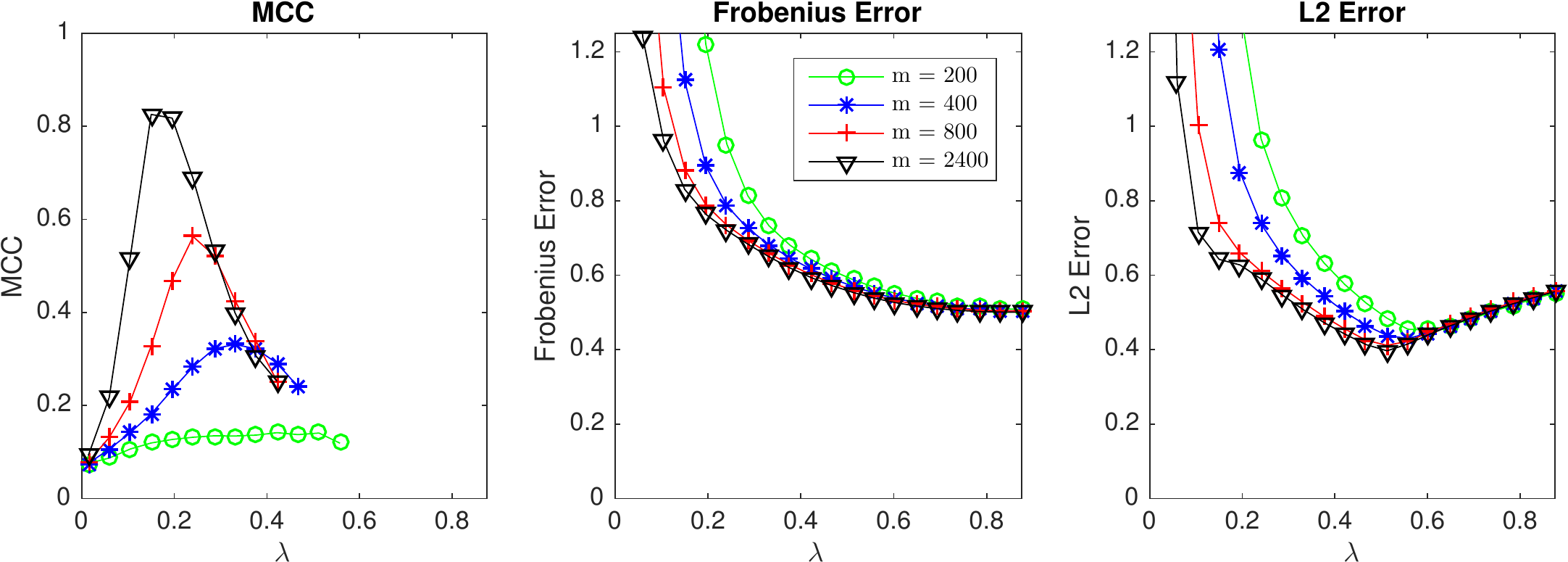}\\
\includegraphics[width=3.25in]{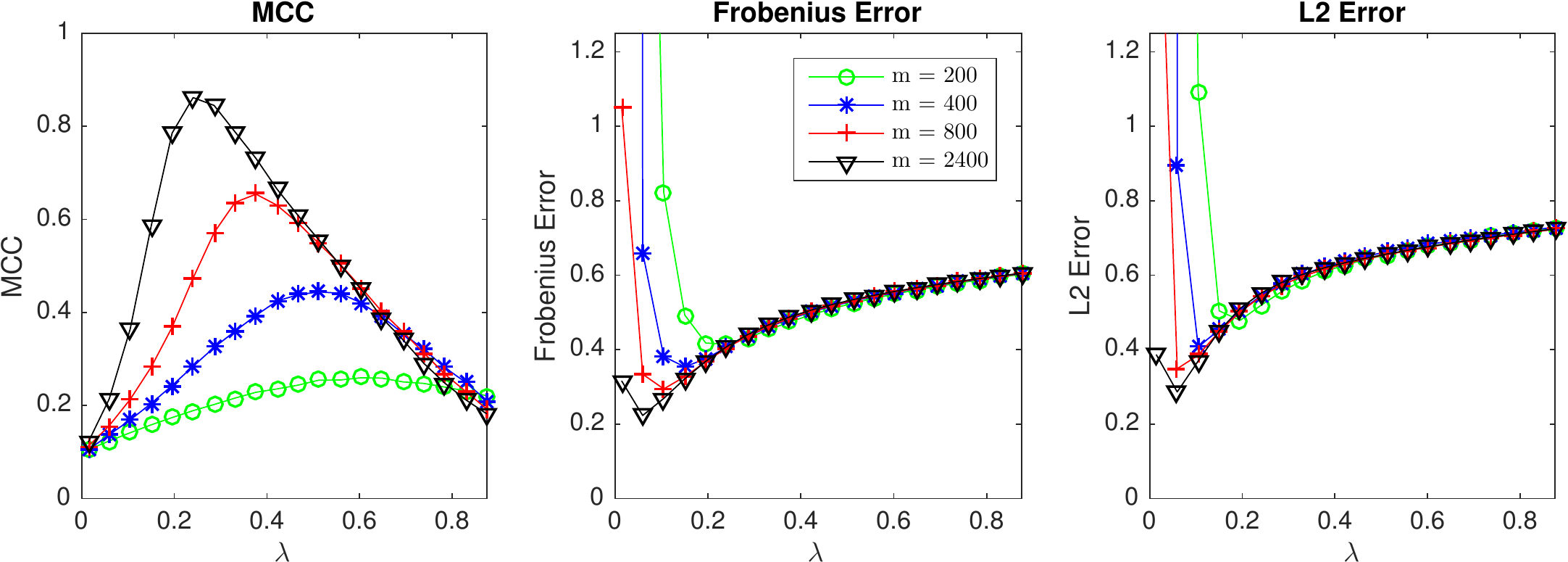}\\
\includegraphics[width=3.25in]{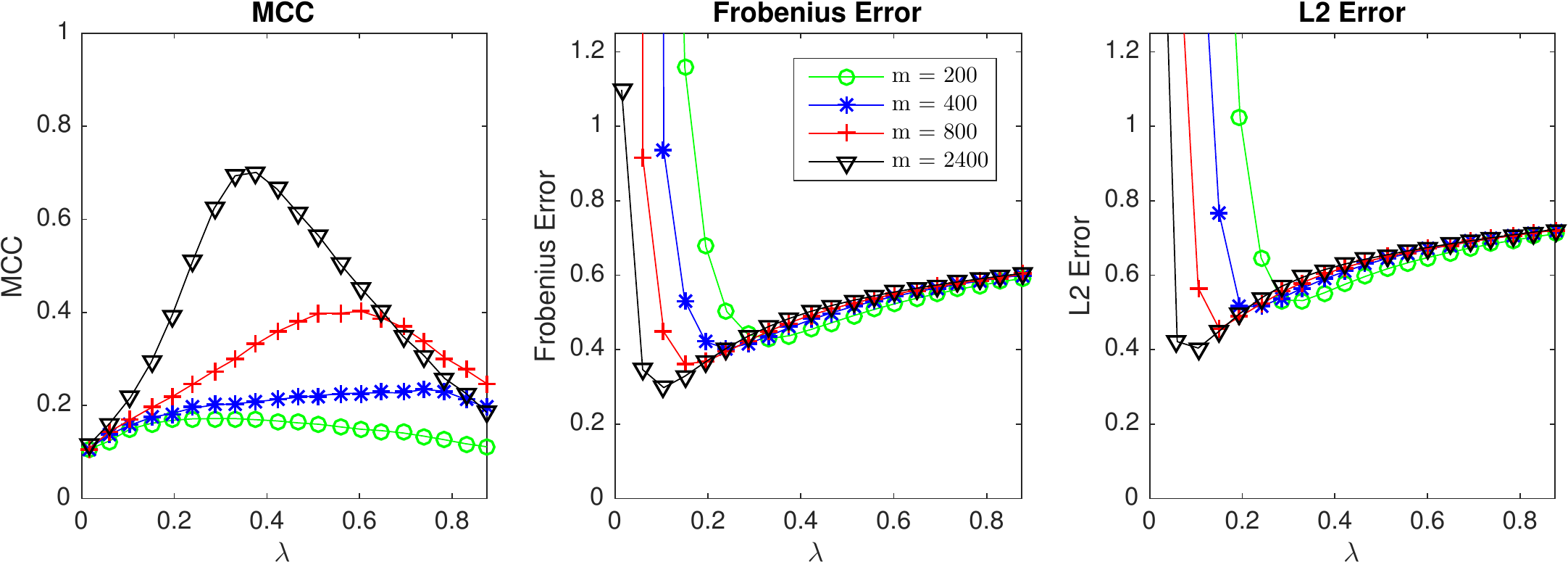}
\caption{MCC, Frobenius, and L2 norm error curves for $B$ a random ER graph and $n = 100$. Top: $A$ is AR covariance with $\rho = .5$ and 50 edges in $B$, Middle: $A$ is AR(1) covariance with $\rho = .5$ and $B$ having 100 edges, Bottom: AR covariance with $\rho = .95$ and 100 edges in $B$. }
\label{Fig:BER}
\end{figure}


\begin{figure}[h]
\centering
\includegraphics[width=3.25in]{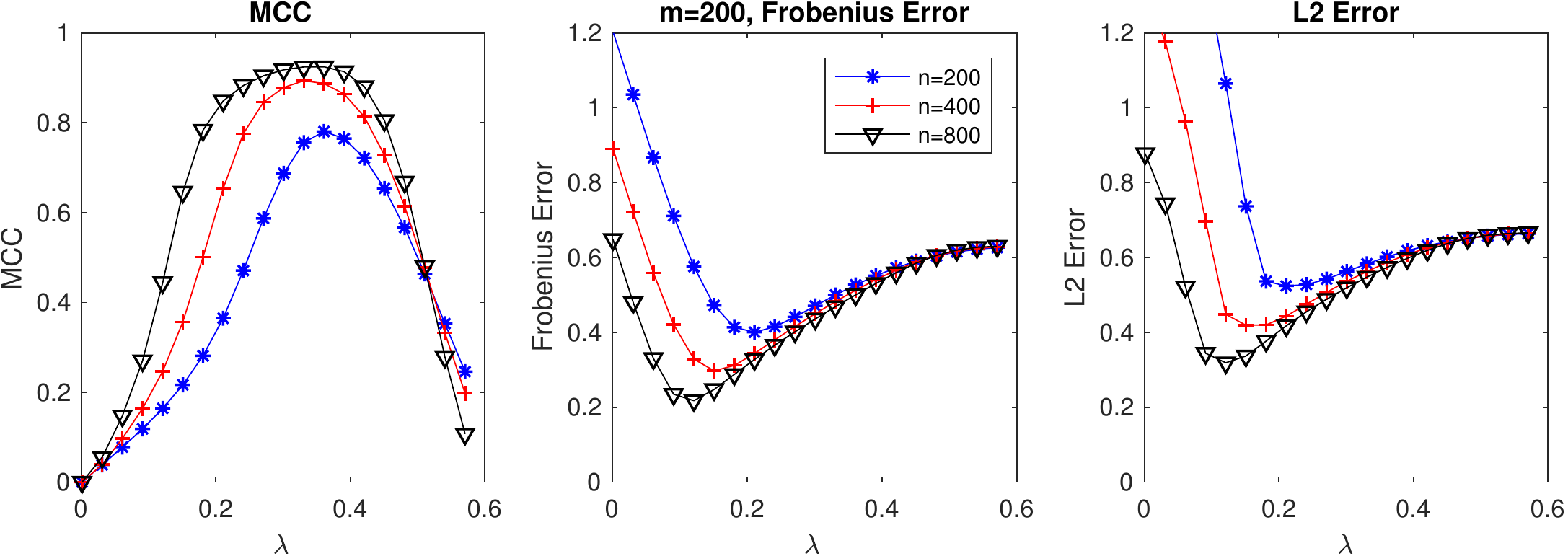}
\\\includegraphics[width=3.25in]{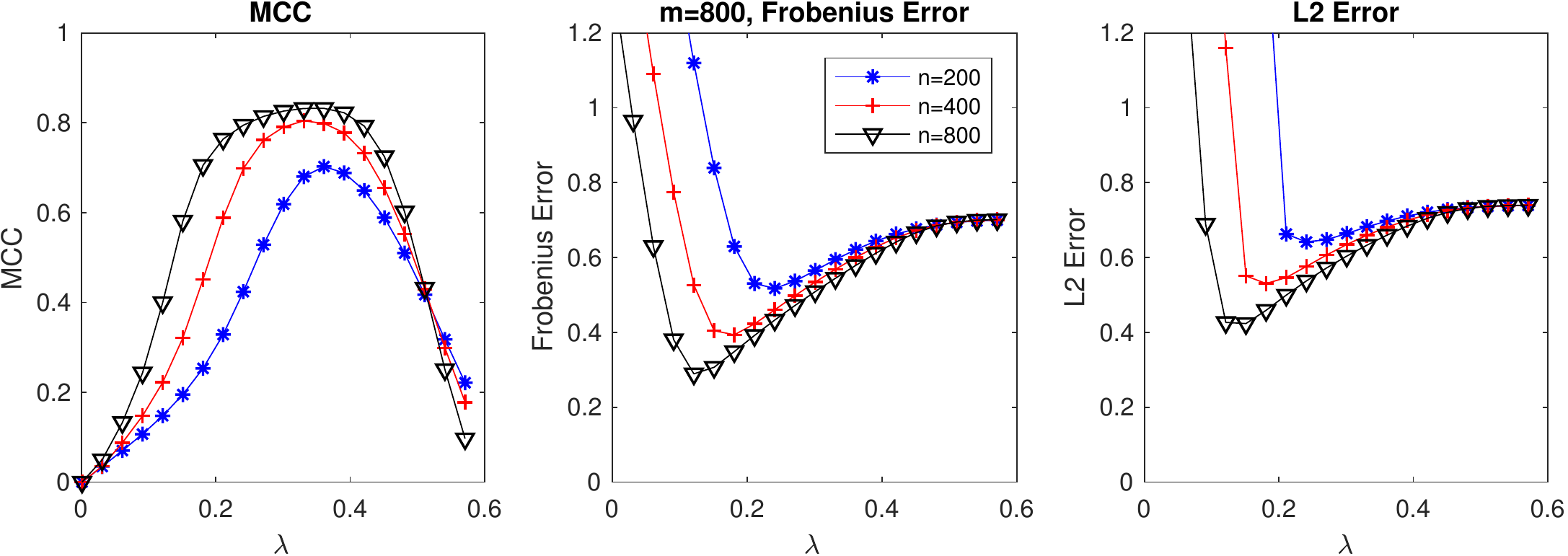}
\caption{MCC, Frobenius, and L2 norm error curves for $A$ a AR(1) with $\rho=0.5$ when $B$ is a random ER graph. From top to bottom: $m=200$ and $m=800$.}
\label{Fig:AR1}
\end{figure}

\section{fMRI Application}
The ADHD-200 resting-state fMRI dataset \citep{biswal2010toward} was collected from 973 subjects, 197 of which were diagnosed with ADHD types 1, 2, or 3. The fMRI images have varying numbers of voxels which we divide into 90 regions of interest for graphical model analysis \citep{wehbe2014simultaneously}, and between 76 and 276 images exist for each subject. Provided covariates for the subjects include age, gender, handedness, and IQ. 
Previous works such as \citep{qiu2016joint} used this dataset to establish that the brain network density increases with age, corresponding to brain development as subjects mature. We revisit this problem using our additive approach. 
Our additive model allows the direct estimation of the temporal behavior, revealing a richer structure than a simple AR-1, and allowing for effectively a denoising of the data, and better estimation of the spatial graph structure.  




We estimate the temporal $A$ covariances for each subject using the voxels contained in the regions of interest, with example results shown in Figure \ref{Fig:fA} in the supplement. We choose $\tau_B$ as the lower limit of the eigenvalues of $\frac{1}{n}X^T X$, as in the high sample regime it is an upper bound on $\tau_B$. 

We then estimate the brain connectivity network at a range of ages from 8 to 18, using both our proposed method and the method of \cite{monti2014estimating}, as it is an optimally-penalized version of the estimator in \cite{qiu2016joint}. We use a Gaussian kernel with bandwidth $h$, and estimate the graphs using a variety of values of $\lambda$ and $h$. Subjects with fewer than 120 time samples were eliminated, and those with more were truncated to 120 to reduce bias towards longer scans. The number of edges in the estimated graphs are shown in Figure \ref{Fig:fEdges}. Note the consistent increase in network density with age, becoming more smooth with increasing $h$. 


\begin{figure}[h]
\centering
\begin{subfigure}[b]{\textwidth}
\centering
\includegraphics[width=5in]{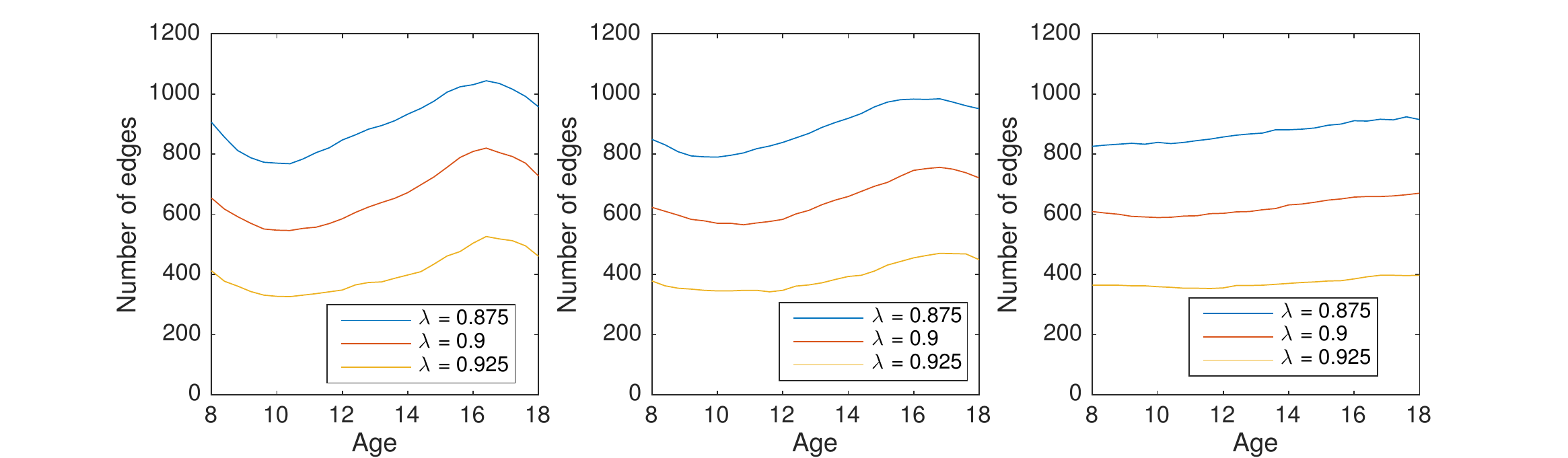}
\caption{Non-additive method of \cite{monti2014estimating} (optimally penalized version of \cite{qiu2016joint}).}
\end{subfigure}
\vfill
\begin{subfigure}[b]{\textwidth}
\centering
\includegraphics[width=5in]{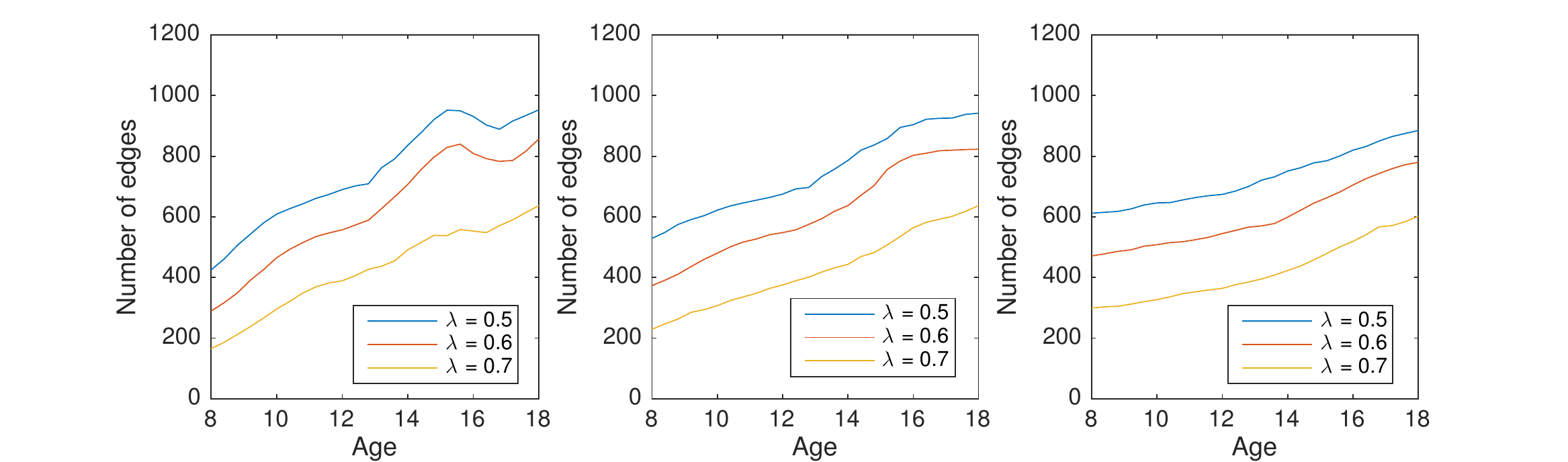}
\caption{Our proposed additive method, allowing for denoising of the time-correlated data.}
\end{subfigure}
\caption{Number of edges in the estimated $B^{-1}(t)$ graphical models across 90 brain regions as a function of age. Shown are results using three different values of the regularization parameter $\lambda$, and from left to right the kernel bandwidth parameter used is $h$ = 1.5, 2, and 3. Note the consistently increasing edge density in our estimate, corresponding to predictions of increased brain connectivity as the brain develops, leveling off in the late teenage years. Compare this to the method of \cite{monti2014estimating}, which successfully detects the trend in the years 11-14, but fails for other ages. 
}
\label{Fig:fEdges}
\end{figure}

\section{Conclusion}


In this work, we presented estimators for time-varying graphical models in the presence of time-correlated signals and noise. We revealed a bias-variance tradeoff scaling with the underlying rate of change, and proved strong single sample convergence results in high dimensions. We applied our methodology to an fMRI dataset, discovering meaningful temporal changes in functional connectivity, consistent with scientifically expected childhood growth and development.
\subsubsection*{Acknowledgement}
This work was supported in part by NSF under Grant DMS-1316731,
Elizabeth Caroline Crosby Research Award from the Advance Program at the
University of Michigan, and by AFOSR grant FA9550-13-1-0043.
\bibliographystyle{icml2017}
\bibliography{jmlr_bib}

\begin{thebibliography}{23}
\providecommand{\natexlab}[1]{#1}
\providecommand{\url}[1]{\texttt{#1}}
\expandafter\ifx\csname urlstyle\endcsname\relax
  \providecommand{\doi}[1]{doi: #1}\else
  \providecommand{\doi}{doi: \begingroup \urlstyle{rm}\Url}\fi

\bibitem[Arbabshirani et~al.(2014)Arbabshirani, Damaraju, Phlypo, Plis, Allen,
  Ma, Mathalon, Preda, Vaidya, and Adali]{arbabshirani2014impact}
Arbabshirani, M., Damaraju, E., Phlypo, R., Plis, S., Allen, E., Ma, S.,
  Mathalon, D., Preda, A., Vaidya, J., and Adali, T.
\newblock Impact of autocorrelation on functional connectivity.
\newblock \emph{Neuroimage}, 102:\penalty0 294--308, 2014.

\bibitem[Biswal et~al.(2010)Biswal, Mennes, Zuo, Gohel, Kelly, Smith, Beckmann,
  Adelstein, Buckner, and Colcombe]{biswal2010toward}
Biswal, B., Mennes, M., Zuo, X., Gohel, S., Kelly, C., Smith, S., Beckmann, C.,
  Adelstein, J., Buckner, R., and Colcombe, S.
\newblock Toward discovery science of human brain function.
\newblock \emph{Proceedings of the National Academy of Sciences}, 107\penalty0
  (10):\penalty0 4734--4739, 2010.

\bibitem[Boyd et~al.(2011)Boyd, Parikh, Chu, Peleato, and Eckstein]{Boyd:2011}
Boyd, S., Parikh, N., Chu, E., Peleato, B., and Eckstein, J.
\newblock Distributed optimization and statistical learning via {ADMM}.
\newblock \emph{Foundations and Trends{\textregistered} in Machine Learning},
  3\penalty0 (1):\penalty0 1--122, 2011.

\bibitem[Calhoun et~al.(2014)Calhoun, Miller, Pearlson, and
  Adal{\i}]{calhoun2014chronnectome}
Calhoun, V., Miller, R., Pearlson, G., and Adal{\i}, T.
\newblock The chronnectome: time-varying connectivity networks as the next
  frontier in f{MRI} data discovery.
\newblock \emph{Neuron}, 84\penalty0 (2):\penalty0 262--274, 2014.

\bibitem[Carvalho et~al.(2007)Carvalho, West, et~al.]{carvalho2007dynamic}
Carvalho, C., West, M., et~al.
\newblock Dynamic matrix-variate graphical models.
\newblock \emph{Bayesian analysis}, 2\penalty0 (1):\penalty0 69--97, 2007.

\bibitem[Chang \& Glover(2010)Chang and Glover]{chang2010time}
Chang, C. and Glover, G.
\newblock Time--frequency dynamics of resting-state brain connectivity measured
  with fmri.
\newblock \emph{Neuroimage}, 50\penalty0 (1):\penalty0 81--98, 2010.

\bibitem[Chen et~al.(2015)Chen, Liu, Yang, Xu, Lee, Lindquist, Caffo, and
  Vogelstein]{chen2015m}
Chen, S., Liu, K., Yang, Y., Xu, Y., Lee, S., Lindquist, M., Caffo, B., and
  Vogelstein, J.
\newblock An m-estimator for reduced-rank high-dimensional linear dynamical
  system identification.
\newblock \emph{arXiv:1509.03927}, 2015.

\bibitem[Cressie(2015)]{cressie2015statistics}
Cressie, N.
\newblock \emph{Statistics for spatial data}.
\newblock John Wiley \& Sons, 2015.

\bibitem[Greenewald \& Hero(2015)Greenewald and Hero]{greenewaldTSP}
Greenewald, K. and Hero, A.
\newblock Robust kronecker product {PCA} for spatio-temporal covariance
  estimation.
\newblock \emph{Signal Processing, IEEE Transactions on}, 63\penalty0
  (23):\penalty0 6368--6378, Dec 2015.

\bibitem[Huang et~al.(2010)Huang, Li, Sun, Ye, Fleisher, Wu, Chen, and
  Reiman]{huang2010learning}
Huang, S., Li, J., Sun, L., Ye, J., Fleisher, A., Wu, T., Chen, K., and Reiman,
  E.
\newblock Learning brain connectivity of alzheimer's disease by sparse inv.
  cov. est.
\newblock \emph{NeuroImage}, 50\penalty0 (3):\penalty0 935--949, 2010.

\bibitem[Kim et~al.(2015)Kim, Pan, Initiative, et~al.]{kim2015highly}
Kim, J., Pan, W., Initiative, Alzheimer's Disease~Neuroimaging, et~al.
\newblock Highly adaptive tests for group differences in brain functional
  connectivity.
\newblock \emph{NeuroImage: Clinical}, 9:\penalty0 625--639, 2015.

\bibitem[Liu \& Duyn(2013)Liu and Duyn]{liu2013time}
Liu, X. and Duyn, J.
\newblock Time-varying functional network information extracted from brief
  instances of spontaneous brain activity.
\newblock \emph{Proc. of the Natl. Academy of Sciences}, 110\penalty0
  (11):\penalty0 4392--4397, 2013.

\bibitem[Monti et~al.(2014)Monti, Hellyer, Sharp, Leech, Anagnostopoulos, and
  Montana]{monti2014estimating}
Monti, R., Hellyer, P., Sharp, D., Leech, R., Anagnostopoulos, C., and Montana,
  G.
\newblock Estimating time-varying brain conn. networks from f{MRI} time series.
\newblock \emph{NeuroImage}, 103:\penalty0 427--443, 2014.

\bibitem[Narayan et~al.(2015)Narayan, Allen, and Tomson]{narayan2015two}
Narayan, M., Allen, G., and Tomson, S.
\newblock Two sample inference for populations of graphical models with
  applications to functional connectivity.
\newblock \emph{arXiv preprint arXiv:1502.03853}, 2015.

\bibitem[Qiu et~al.(2016)Qiu, Han, Liu, and Caffo]{qiu2016joint}
Qiu, H., Han, F., Liu, H., and Caffo, B.
\newblock Joint estimation of multiple graphical models from high dimensional
  time series.
\newblock \emph{Journal of the Royal Statistical Society: Series B},
  78\penalty0 (2):\penalty0 487--504, 2016.

\bibitem[Rothman et~al.(2008)Rothman, Bickel, Levina, Zhu,
  et~al.]{rothman2008sparse}
Rothman, A., Bickel, P., Levina, E., Zhu, J., et~al.
\newblock Sparse permutation invariant covariance estimation.
\newblock \emph{Electronic Journal of Statistics}, 2:\penalty0 494--515, 2008.

\bibitem[Rudelson \& Zhou(2017)Rudelson and Zhou]{rudelson2015high}
Rudelson, M. and Zhou, S.
\newblock Errors-in-variables models with dependent measurements.
\newblock \emph{The Electronic Journal of Statistics}, 11\penalty0
  (1):\penalty0 1699--1797, 2017.

\bibitem[Tsiligkaridis \& Hero(2013)Tsiligkaridis and Hero]{tsiliArxiv}
Tsiligkaridis, T. and Hero, A.
\newblock Covariance estimation in high dimensions via kronecker product
  expansions.
\newblock \emph{IEEE Trans. on Sig. Proc.}, 61\penalty0 (21):\penalty0
  5347--5360, 2013.

\bibitem[Varoquaux et~al.(2010)Varoquaux, Gramfort, Poline, and
  Thirion]{varoquaux2010}
Varoquaux, G., Gramfort, A., Poline, J-B., and Thirion, B.
\newblock Brain covariance selection: better individual functional connectivity
  models using population prior.
\newblock \emph{Advances in Neural Information Processing Systems},
  23:\penalty0 2334--2342, 2010.

\bibitem[Wehbe et~al.(2014)Wehbe, Murphy, Talukdar, Fyshe, Ramdas, and
  Mitchell]{wehbe2014simultaneously}
Wehbe, L., Murphy, B., Talukdar, P., Fyshe, A., Ramdas, A., and Mitchell, T.
\newblock Simultaneously uncovering the patterns of brain regions involved in
  different story reading subprocesses.
\newblock \emph{PLOS ONE}, 9\penalty0 (11):\penalty0 e112575, 2014.

\bibitem[Zhou(2014)]{zhou2014gemini}
Zhou, S.
\newblock Gemini: Graph estimation with matrix variate normal instances.
\newblock \emph{The Annals of Statistics}, 42\penalty0 (2):\penalty0 532--562,
  2014.

\bibitem[Zhou et~al.(2010)Zhou, Lafferty, and Wasserman]{zhou:TV}
Zhou, S., Lafferty, J., and Wasserman, L.
\newblock Time varying undirected graphs.
\newblock \emph{Machine Learning}, 80\penalty0 (2-3):\penalty0 295--319, 2010.

\bibitem[Zhou et~al.(2011)Zhou, R{\"u}timann, Xu, and
  B{\"u}hlmann]{zhou2011high}
Zhou, S., R{\"u}timann, P., Xu, M., and B{\"u}hlmann, P.
\newblock High-dimensional covariance estimation based on gaussian graphical
  models.
\newblock \emph{The Journal of Machine Learning Research}, 12:\penalty0
  2975--3026, 2011.

\end{thebibliography}

\section{Estimating covariance $A$ for fMRI data}
We estimate the temporal covariance $A$ for each subject using the voxels contained in the regions of interest. The results for several example subjects are shown in Figure \ref{Fig:fA}, along with the corresponding sample covariance $\frac{1}{n}X^T X$ and its eigenvalues. We choose $\tau_B$ as the lower ``asymptote" of the eigenvalues of $\frac{1}{n}X^T X$, as in the high sample regime it is an upper bound on $\tau_B$. 

The sample covariance matrix $\frac{1}{n}X^T X$ is significantly diagonally dominant, supporting our subtraction of a $\tau_B = \tr(B)/n$ scaled identity matrix. Note the evident sparsity in the inverse. 
Many of the sparsity patterns indicate local AR-type behavior as assumed in \citep{qiu2016joint}, but this pattern is not stationary, and in fact tends to group in blocks. Hence, the assumptions in \citep{qiu2016joint} do not fully capture the richness of the data. 

\begin{figure}[h]
\centering
\includegraphics[width=6in]{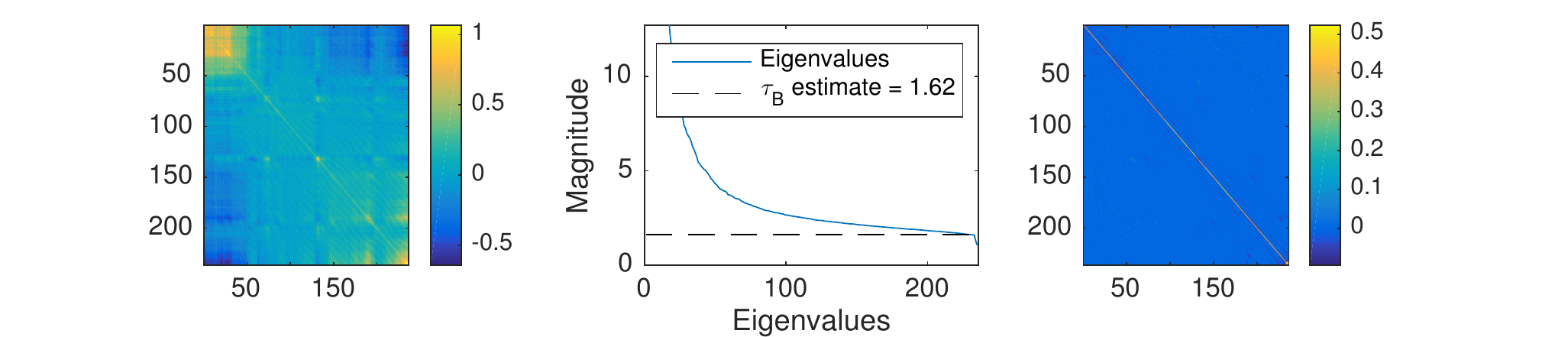}\\
\includegraphics[width=6in]{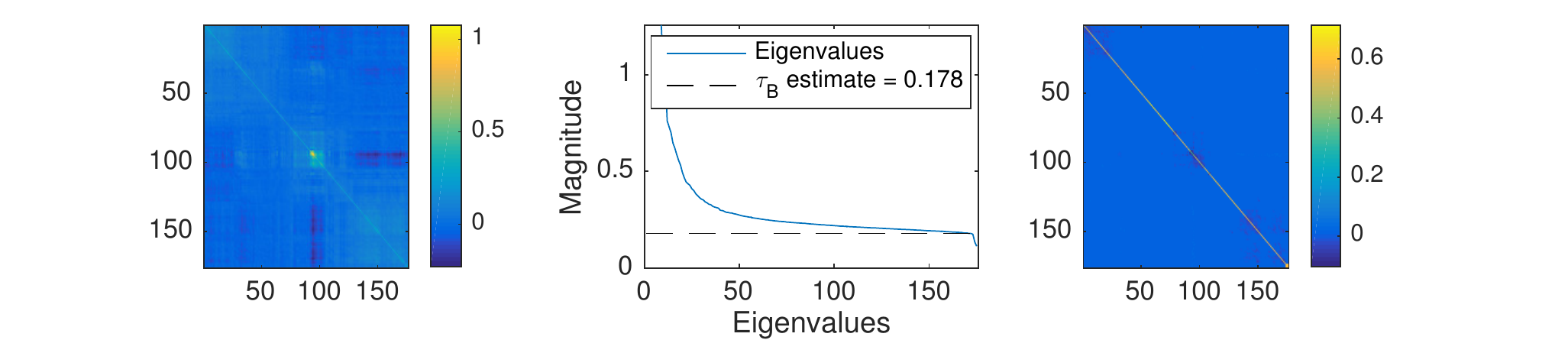}\\
\includegraphics[width=6in]{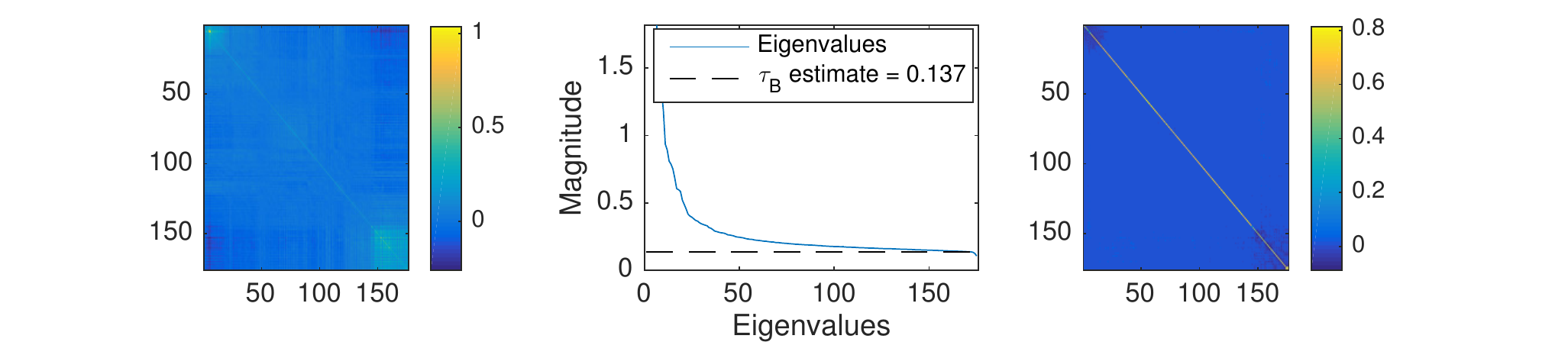}
\caption{Estimated $A$ covariance for three example subjects. Left: $A$ part sample covariance $\frac{1}{n}X^T X$. Center: Eigenvalues of $\frac{1}{n}X^T X$, showing estimate of $\tau_B$ factor. Right: Estimated $A^{-1}$ graphical model. Note the sparsity in the inverse, and that the eigenvalue spectra are consistent with the additive model.}
\label{Fig:fA}
\end{figure}

\subsection{Choosing $\mathrm{tr}(A)$}
\label{supp:tuning}

Due to the nonidentifiability of $\tr(A)$ and $\tr(B)$ when a single
copy of the data is observed, we assume that the $\tr(A)$ parameter is
either known~\citep{rudelson2015high}, or chosen as a tuning parameter.
If $A$ corresponds to a ``signal'' component in a physical system, then
as the mean signal strength $\tr(A)/m$ may be known by design or may
be estimated directly by running an experiment with the ``noise''
component involving the covariance parameter $B(t)$ to be
set at $0$ for a certain period of time.

The first step of relaxation is by assuming only one entry of $\mathrm{diag}(A)$,
for
example, $a_{11}$ is known. This is feasible when we assume we can observe
for a short period of time only $X_0 := Z_1 A^{1/2}$ so as to obtain the
knowledge of a
single element in $\mathrm{diag}(A)$. It is our conjecture that there is an
interesting tradeoff between the number of covariates in $X_0$ we are
allowed to observe with no measurement errors and the rate of
convergence we can obtain for estimating $A$ and $B$.
Moreover, even when no covariate $X_0$ is observed directly, we can rely
on
the recent progress
on high dimensional regression and signal reconstruction to help
establish theoretical limits on recovering $\tr(A)$ and $\tr(B(t))$, when
replicates
are available. For example, if a second sample of either $Z_B$ or $X_0 =
Z_1 A^{1/2}$ (cf. Equation (5)) is available, the corresponding
$\tr(B(t))$ or $\tr(A)$ can be estimated directly without needing to
specify any of $\tr(A)$ or $\tr(B(t))$.   We leave these as future
directions.


We now use two experimental settings to illustrate how
robust the estimation procedure for covariance $A$ is with respect to
the misspecification of $\mathrm{tr}(A)$.
We consider the case in which $m=400$, $n \in \{200,400\}$,
$\tau_A= \mathrm{tr}(A)/m=1$, $\tau_B=0.5$,
and $A$ is generated using the AR(1) or Star-Block, while $B$ follows
random ER graph.
We estimate the inverse of $A$ using $\hat{\tau}_A  \in \{0.4, 0.5,
\cdots,
1.4\}$ and $\lambda_A \in (0, 0.7)$, and use MCC to measure the
performance
of edge selection.


As shown in Figures \ref{Fig:Robust:A1}-\ref{Fig:Robust:A2},
we observe that when the topology is sophisticated (e.g., Star-Block) and
the misspecification error of $\hat{\tau}_A$ does not approach 0,
although joint tuning of $\lambda$ and $\hat{\tau}_A$
can not resolve the edge selection
inconsistency, as expected, although
it appears to give certain improvements.

When the topology is relatively simple, e.g., like the chain graph
correspondong to the  AR(1) model, the edge selection performance is
robust to the misspecification of $\hat{\tau}_A$; in this case, joint
tuning of $\lambda$ and $\tr(A)$ is not even necessary, as illustrated
in Figures \ref{Fig:Robust:A1}-\ref{Fig:Robust:A2}).

%

\begin{figure}[h]
\centering
\includegraphics[width=4.4in]{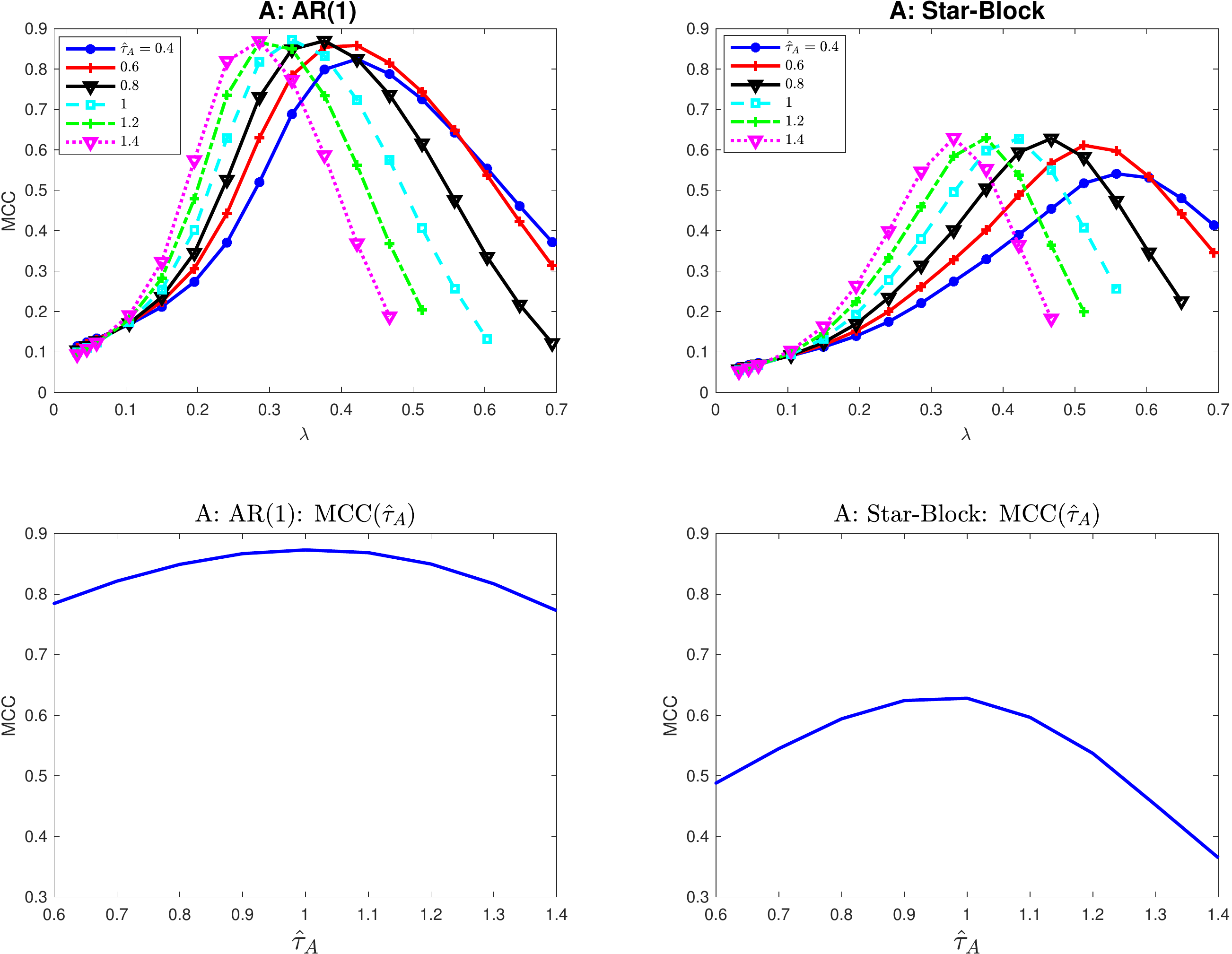}\\
\caption{The figures are MCC of the estimators $\hat{A}$ with respect to the true $A$ as a function of $\lambda$ for each $\hat{\mathrm{tr}}(A) \in \{0.6,  \cdots, 1.4\}$;
$m=400$ and $n=200$;
From left to right: $A$ is AR(1) and Star-Block model.
 }
\label{Fig:Robust:A1}
\end{figure}

\begin{figure}[h]
\centering
\includegraphics[width=4.4in]{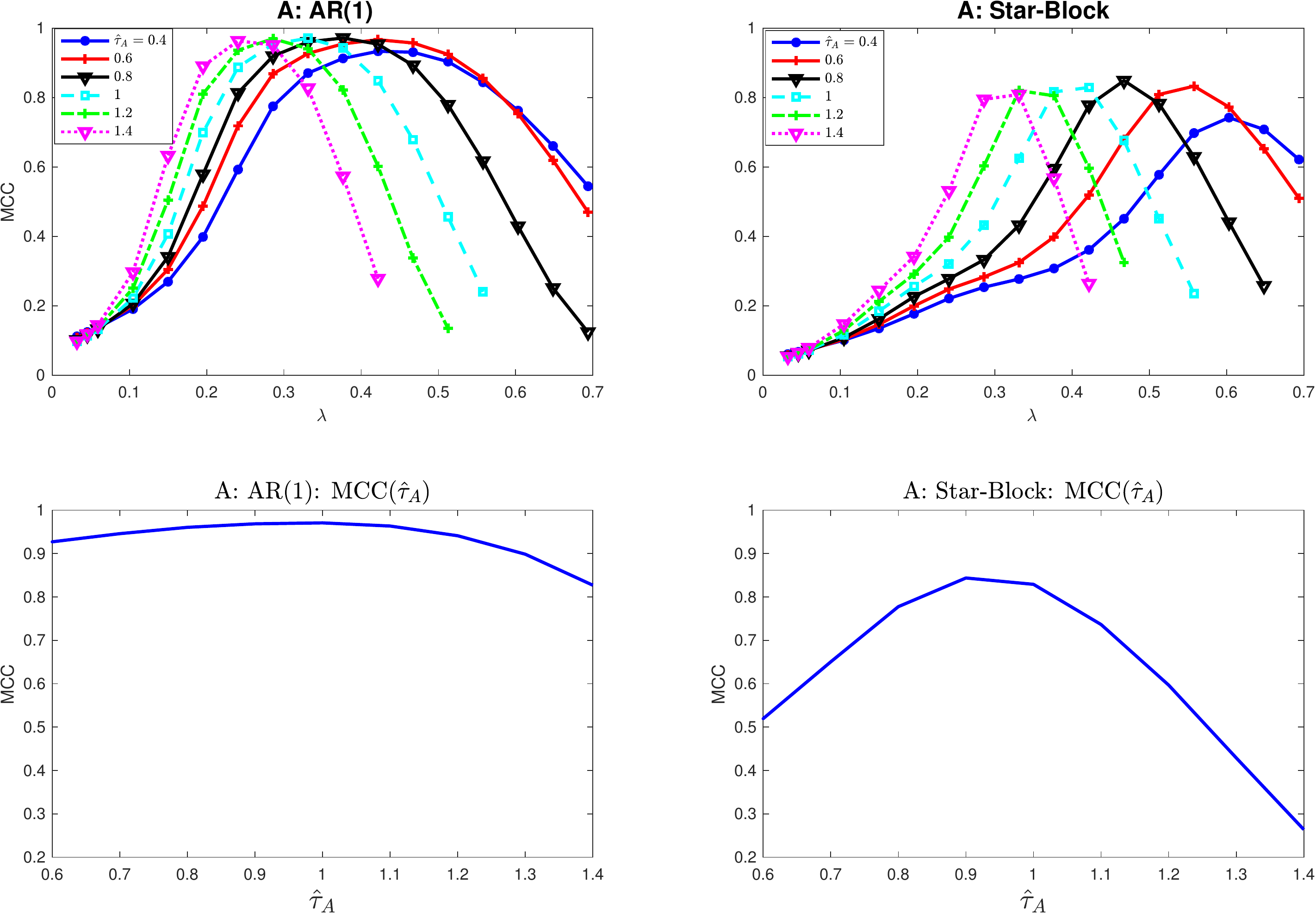}\\
\caption{The figures are MCC of the estimators $\hat{A}$ with respect to the true $A$ as a function of $\lambda$ for each $\hat{\mathrm{tr}}(A) \in \{0.6,  \cdots, 1.4\}$;
$m=400$ and $n=400$;
From left to right: $A$ is AR(1) and Star-Block model.
 }
\label{Fig:Robust:A2}
\end{figure}

\section{Additional analysis of ADHD-200 data}
In addition to the plots showing the relationship between brain connectivity and age in the main text, in this section we reproduce those plots using only healthy subjects and only using ADHD subjects. The results are shown in Figure \ref{Fig:fEdgesDiag}.

Observe that the leftmost plots (those with the narrowest kernel) are quite rough. This is caused by the reduction in the number of available subjects as compared to the plots in the main text that used all the subjects, increasing the effect of the noise. Shown in Figure \ref{Fig:fHist} are the age histograms for all patients, healthy patients only, and ADHD patients. Note the nonsmooth age ranges in the plots of Figure \ref{Fig:fEdgesDiag} correspond to regions with fewer available subjects, as expected. 

While the sample size is such that we cannot make definitive conclusions on the exact form of the differences between healthy and ADHD brain development, the fact that we observe significant differences is not surprising given the nature of ADHD and its effects on childhood development. In particular, note the lower rate of development among teenage ADHD subjects as opposed to healthy teenage subjects. We hypothesize that this corresponds to the common observation that ADHD patients face additional developmental challenges in the teenage years. 

\begin{figure}[h]
\centering
\begin{subfigure}[b]{\textwidth}
\centering
\includegraphics[width=5in]{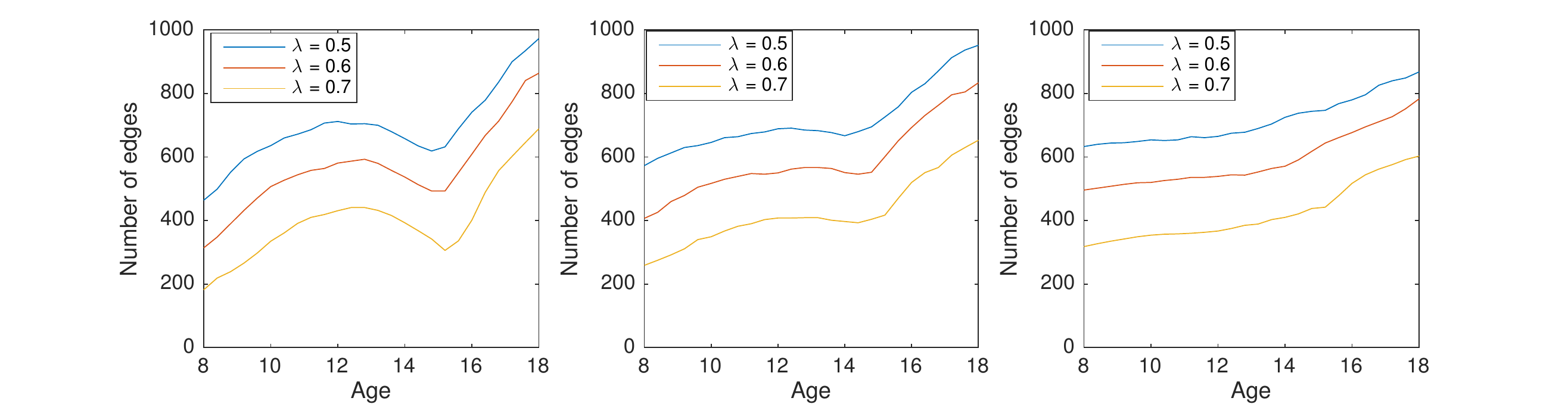}
\caption{Our proposed additive method, trained on healthy subjects only.}
\end{subfigure}
\vfill
\begin{subfigure}[b]{\textwidth}
\centering
\includegraphics[width=5in]{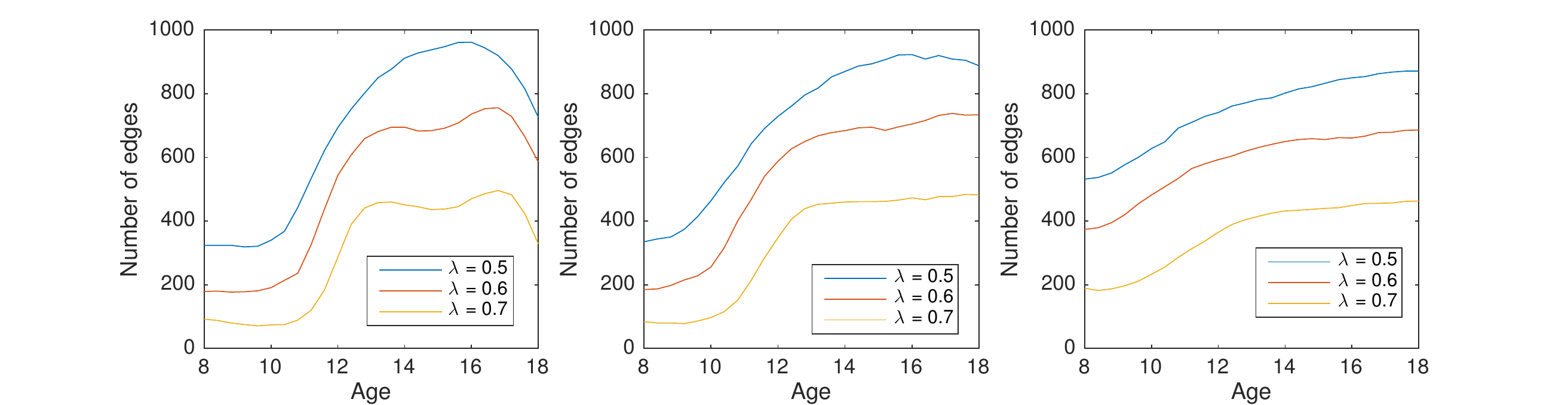}
\caption{Our proposed additive method, trained on ADHD subjects only.}
\end{subfigure}
\caption{Number of edges in the estimated $B^{-1}(t)$ graphical models across 90 brain regions as a function of age. Shown are results using three different values of the regularization parameter $\lambda$, and from left to right the kernel bandwidth parameter used is $h$ = 1.5, 2, and 3 for both methods. Note the consistently increasing edge density in our estimate, corresponding to predictions of increased brain connectivity as the brain develops, and the difference in teenage development rates between healthy and ADHD patients. 
}
\label{Fig:fEdgesDiag}
\end{figure}
\begin{figure}[h]
\centering
\begin{subfigure}[b]{.3\textwidth}
\centering
\includegraphics[width=1.75in]{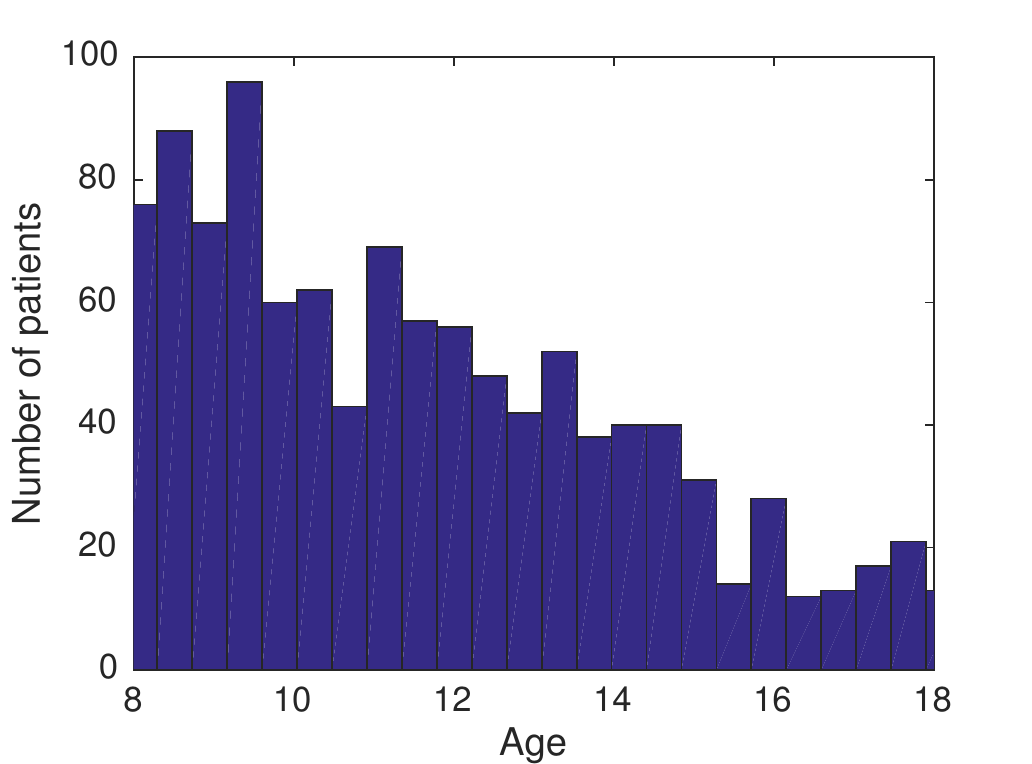}
\caption{Histogram of the ages of all subjects in the ADHD-200 dataset.}
\end{subfigure}
\hfill
\begin{subfigure}[b]{.3\textwidth}
\centering
\includegraphics[width=1.75in]{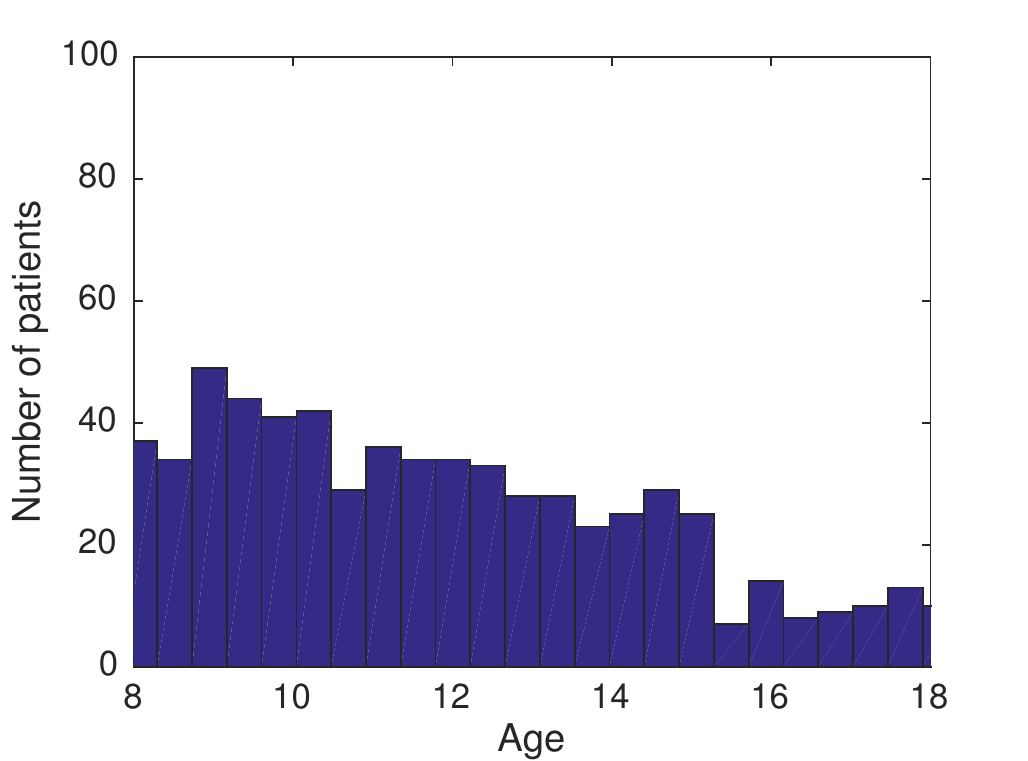}
\caption{Histogram of the ages of healthy subjects in the ADHD-200 dataset.}
\end{subfigure}
\hfill
\begin{subfigure}[b]{.3\textwidth}
\centering
\includegraphics[width=1.75in]{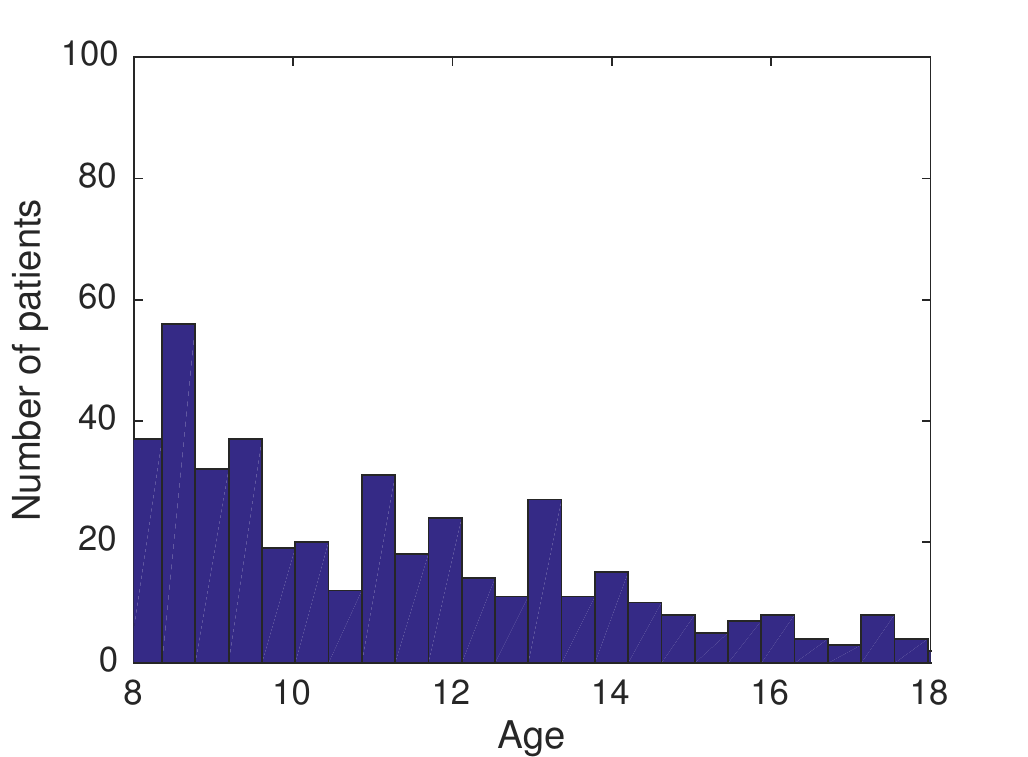}
\caption{Histogram of the ages of subjects diagnosed with ADHD in the ADHD-200 dataset.}
\end{subfigure}
\caption{Histograms showing the age distributions of all patients, healthy patients, and ADHD diagnosed patients in the ADHD-200 dataset. A set of subjects in the dataset have no diagnosis, these left out of both the healthy and ADHD groups. 
}
\label{Fig:fHist}
\end{figure}

\section{Technical assumptions}
\subsection{Assumptions}
In this section, we repeat the assumptions that we required in the main text.   
\begin{description}
      \item[Assumption A1] There exists a positive constant $c_A$ such that 
  \[
  \frac{1}{c_A} \leq \lambda_{\min}(A) \leq \lambda_{\max}(A) \leq c_A. 
  \]
\item[Assumption A2] The diagonal elements $A_{ii}$ are constant across all $i$, and the trace $\tr(B(t))$ is constant over time.
 \item[Assumption A3] $A^{-1}$ has at most $s_a  =o( n /\log m)$ nonzero off-diagonal elements.
  \item[Assumption B1] $B(t)$ is a symmetric and positive definite matrix for all $t$, and its entries have bounded second derivatives on $[0,1]$. 
   \item[Assumption B2]  There exists a positive constant $c_B$ such that for all $t  \in[0,1]$
                 \[
                 1/c_B \le  \lambda_{\min}(B(t)) \le \lambda_{\max}(B(t)) \le c_B.
                 \]

 \item[Assumption B3] $B(t)^{-1}$ has at most $s_b+n=o((m/\log m)^{2/3})$ nonzero off-diagonal elements.

\end{description}

Note that two conditions in Assumption A2 can be relaxed, we include them only to make statement of the estimators easier. In the nonnormalized cases the appropriate modifications are easy to derive.

We make the following assumptions on the smooth kernel function $K(\cdot)$ used to create the $\widehat{S}_m(t)$:
\begin{description}
  \item[Assumption K1] $K(\cdot)$ is non-negative, symmetric, twice differentiable, and has compact support $[-1,1]$.
  \item[Assumption K2] $\int K(u) du = 1$.
  \item[Assumption K3] $\int u^2 K(u) du < \infty$.
  \item[Assumption K4] $\sup_{u \in [-1,1]} K(u)  \le K_1$.
  \item[Assumption K5] $\sup_{u \in [-1,1]} K''\left(u/h\right) = O(h^{-4})$.
\end{description}
These are satisfied for most common smooth kernel functions, including the Gaussian kernel \citep{zhou:TV}. 
\section{Comparison of our method and Kronecker PCA}\label{supp:compare}
In this section, we compare our time-varying Kronecker sum model \eqref{eq:kronCov} to the sum of Kronecker products (KronPCA) model of \cite{tsiliArxiv,greenewaldTSP}
\[
\Sigma = \sum_{i=1}^r A_i \otimes B_i. 
\]
Both our method and KronPCA are a sum of Kronecker products, KronPCA is a more general model, but our method exploits sparsity while KronPCA cannot. Consider data generated from a time-varying Kronecker sum model \eqref{eq:kronCov}, where $A^{-1}\in \mathbf{R}^{60 \times 60}$ is a random ER graph and $B(t)^{-1}\in \mathbf{R}^{20\times 20}$ is a time-varying random ER graph as in Section \ref{sec:sims} in the main text. The sizes of $A, B$ where chosen to be relatively small since the computational complexity of KronPCA is $O(\min (m^6, n^6))$ (compared to the $O(m^3 + n^3)$ complexity of our method). Figure \ref{Fig:Comp} shows Frobenius norm results for our L1-penalized method, KronPCA, and for comparison, the baseline sample covariance. Note that due to the sparsity of the true model, our method performs significantly better than KronPCA, especially when the number of available replicates is small. Since in a time-varying setting the number of replicates is small or even 1, this is a significant advantage. Additionally, our method provides interpretable graph estimates, while the factors $A_i, B_i$ of KronPCA are nonsparse and not interpretable \cite{tsiliArxiv,greenewaldTSP}.

\begin{figure}[h]
\centering
\includegraphics[width=3.25in]{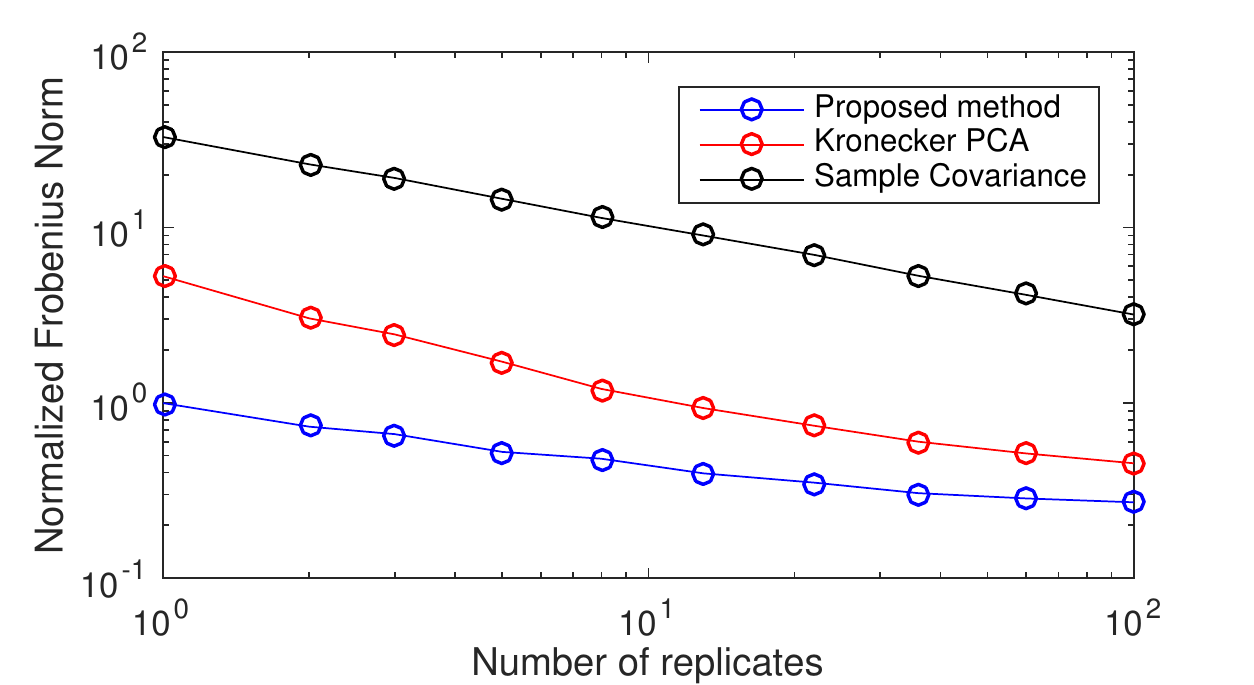}
\caption{Comparison of our method with KronPCA and the sample covariance. Shown is a logarithmic plot of the Frobenius norm error as a function of available replicates, with data generated using $A^{-1}\in \mathbf{R}^{60 \times 60}$, $B(t)^{-1}\in \mathbf{R}^{20\times 20}$ random ER graphs. }
\label{Fig:Comp}
\end{figure}

\section{Additional experiments using alternate graph topologies}
\label{supp:randgrid}

%
\subsection{$A$ star-block and MA}
In Figure \ref{Fig:BER2}, we repeat the experiments of main text Figure \ref{Fig:BER}, showing results for $A$ changed to a star-block graph (edge weights defined as for ER), and $A$ an moving average (MA) covariance (band width 15). The results confirm the trends found for the AR case. 
\begin{figure}[h]
\centering
\includegraphics[width=4.25in]{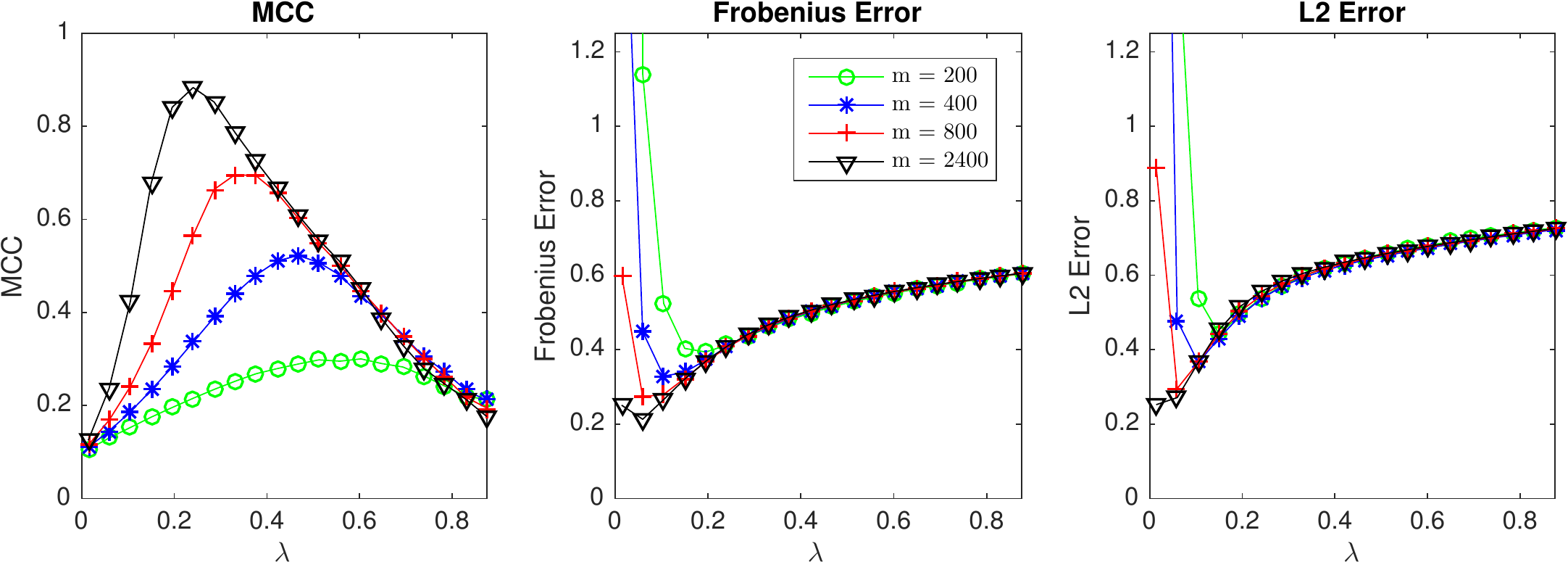}
\\\includegraphics[width=4.25in]{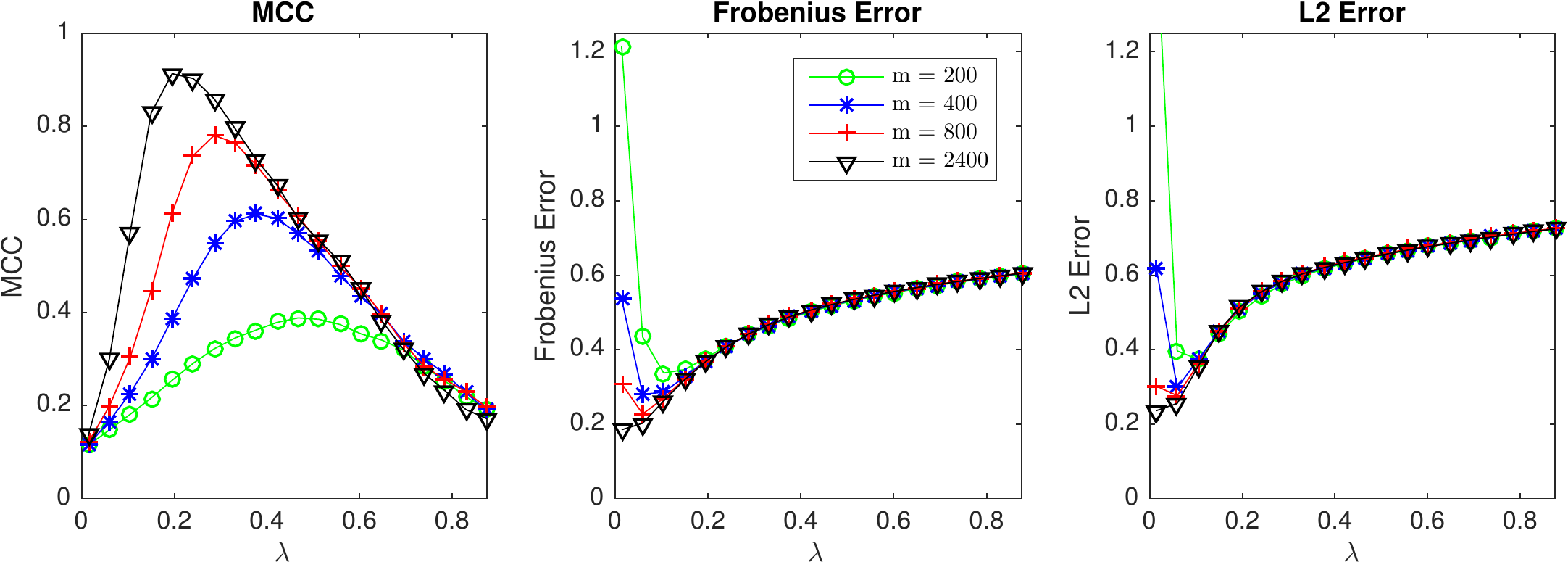}
\caption{MCC, Frobenius, and L2 norm error curves for $B$ a random ER graph and $n = 100$. From top to bottom: $A$ is star-block covariance, and MA covariance. }
\label{Fig:BER2}
\end{figure}
Similarly, in Figure \ref{Fig:St1}, we show the results for the $A$ estimator when $A$ is a star-block graph.
\begin{figure}[h]
\centering
\includegraphics[width=4.25in]{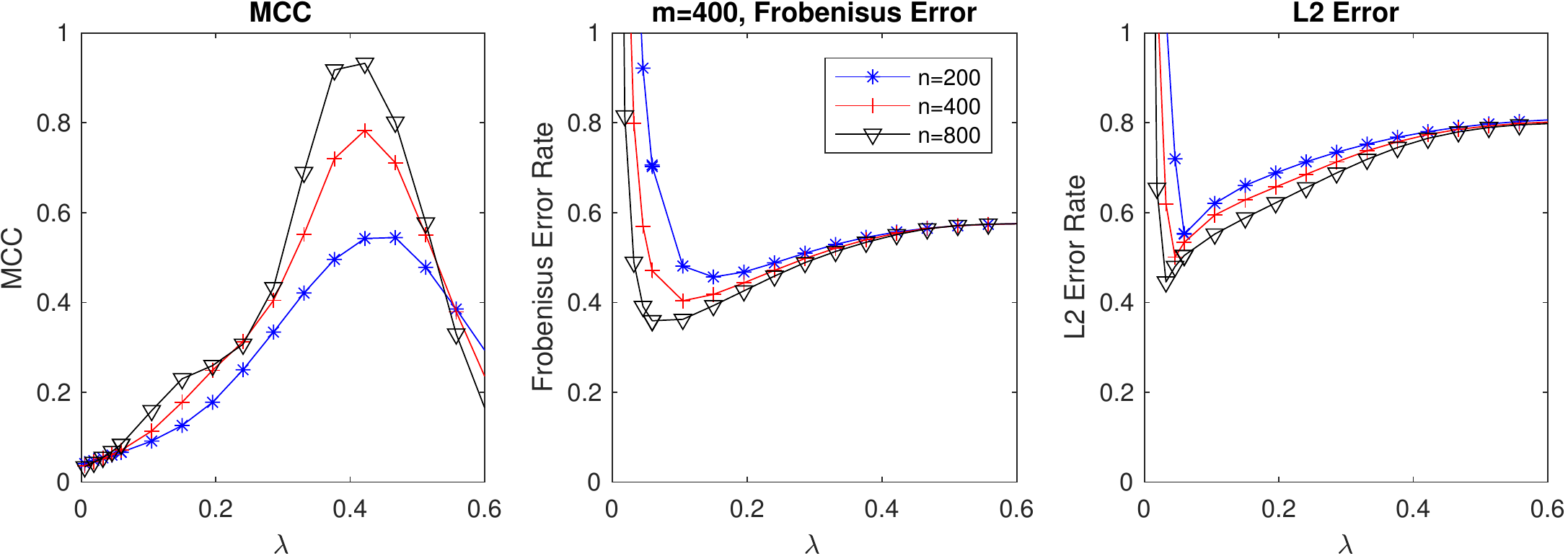}
\\\includegraphics[width=4.25in]{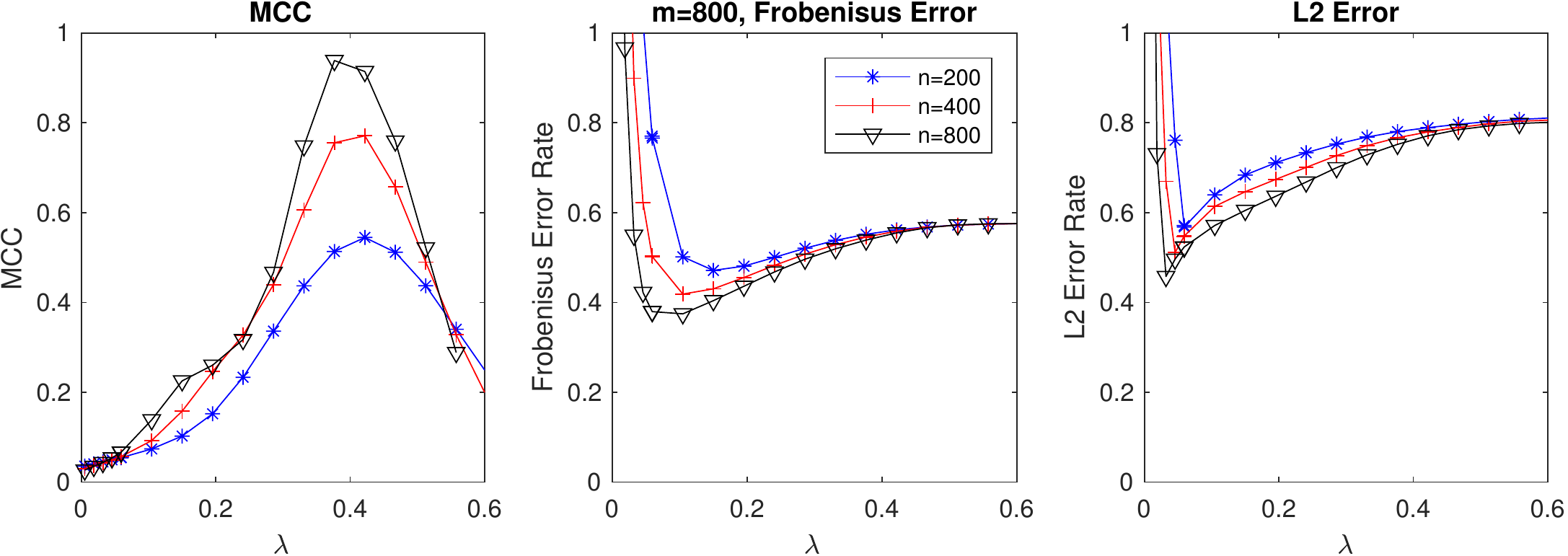}
\\\includegraphics[width=4.25in]{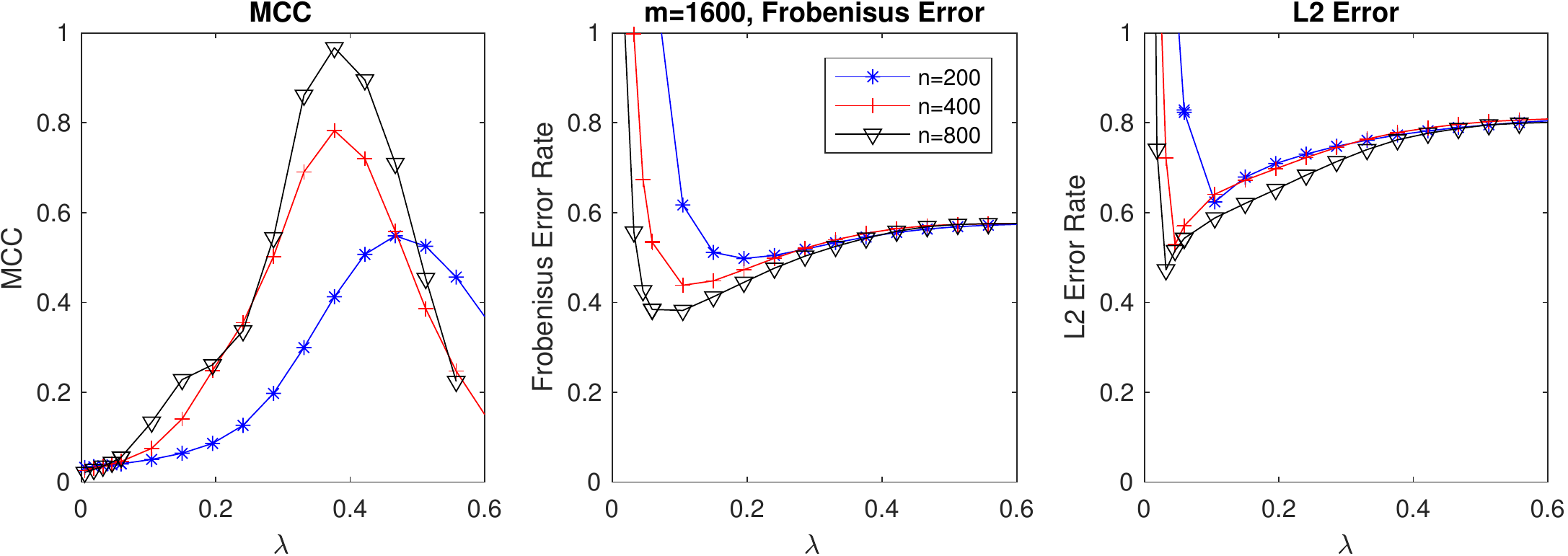}
\caption{MCC, Frobenius, and L2 norm error curves for $A$ a Star-Block graph when $B$ is a random ER graph. From top to bottom: $m=400$, $m=800$, and $m=1600$.}
\label{Fig:St1}
\end{figure}

\subsection{$B(t)$ random grid graph}

In this section, we use a random grid graph which is produced in the same way as the random ER graph, except edges are only allowed on between adjacent nodes in a square 2 dimensional grid (Figure \ref{Fig:GraphsGrid}). We replace the random ER model for $B(t)$ used in the main text with the random grid graph model, and repeat the main text experiments using this alternate $B(t)$ topology.

Random grid graph results for the experiments shown in main text Figure \ref{Fig:BER} are shown in Figure \ref{Fig:BGrid1}, showing similar results as expected. Similarly, random grid graph results for the experiments shown in Figure \ref{Fig:BER2} are shown in Figure \ref{Fig:BGrid2}. 

For the $A$ part, Figure \ref{Fig:ARg1} repeats the experiments of main text Figure \ref{Fig:AR1}, and Figure \ref{Fig:Stg1} repeats the experiments of Figure \ref{Fig:St1}, both using the random grid graph for $B(t)$.

\begin{figure}[h]
\centering
\includegraphics[width=4.4in]{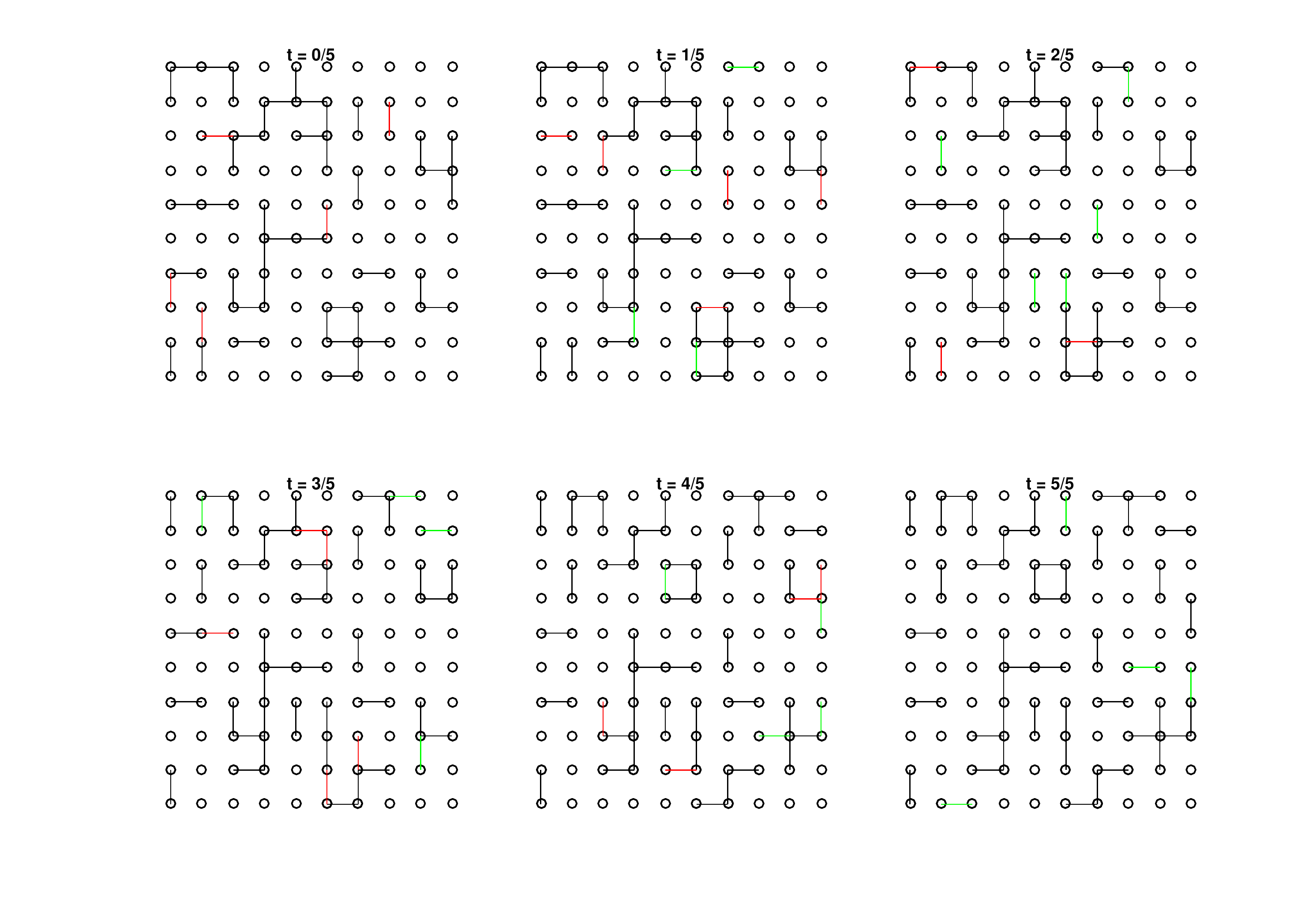}
\caption{Example sequence of $B^{-1}(t) = \Theta(t)$ random grid graphs used in the experiments. At each time point, the 50 edges connecting $n=100$ nodes are shown. Changes are indicated by red and green edges: red edges indicate edges that will be deleted in the next increment and green indicates new edges. }
\label{Fig:GraphsGrid}
\end{figure}
\begin{figure}[h]
\centering
\includegraphics[width=4.25in]{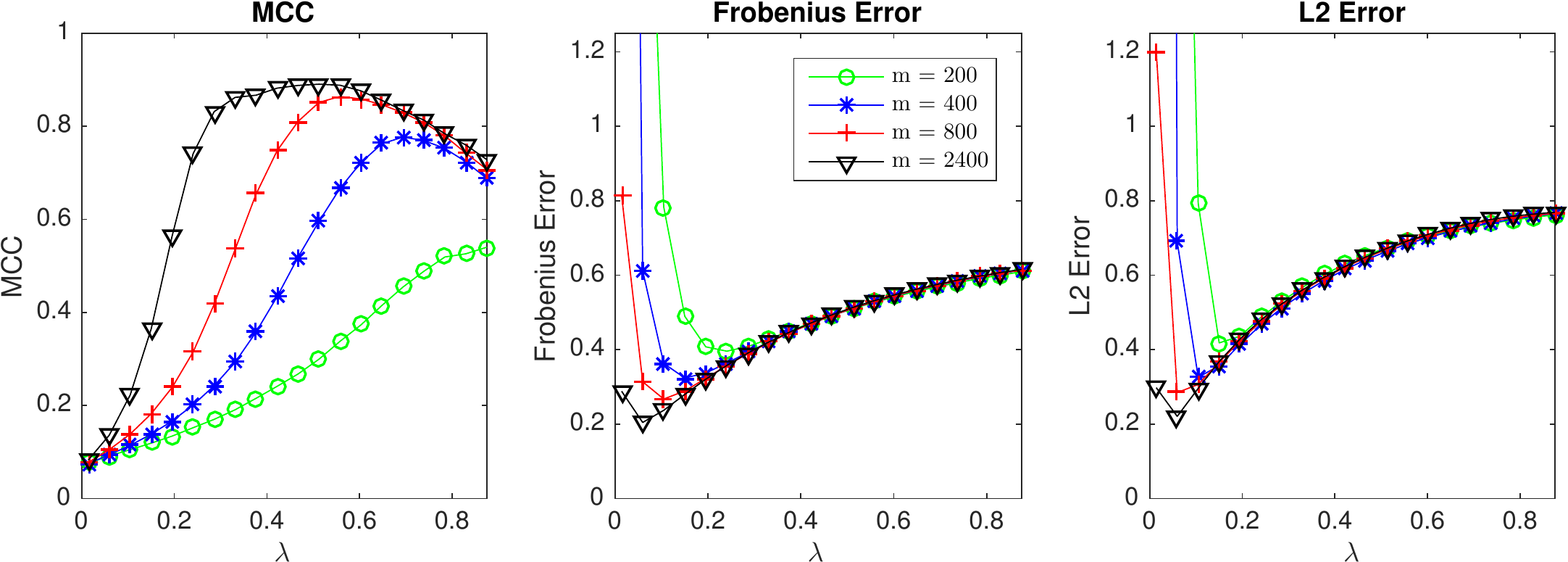}\\\includegraphics[width=4.25in]{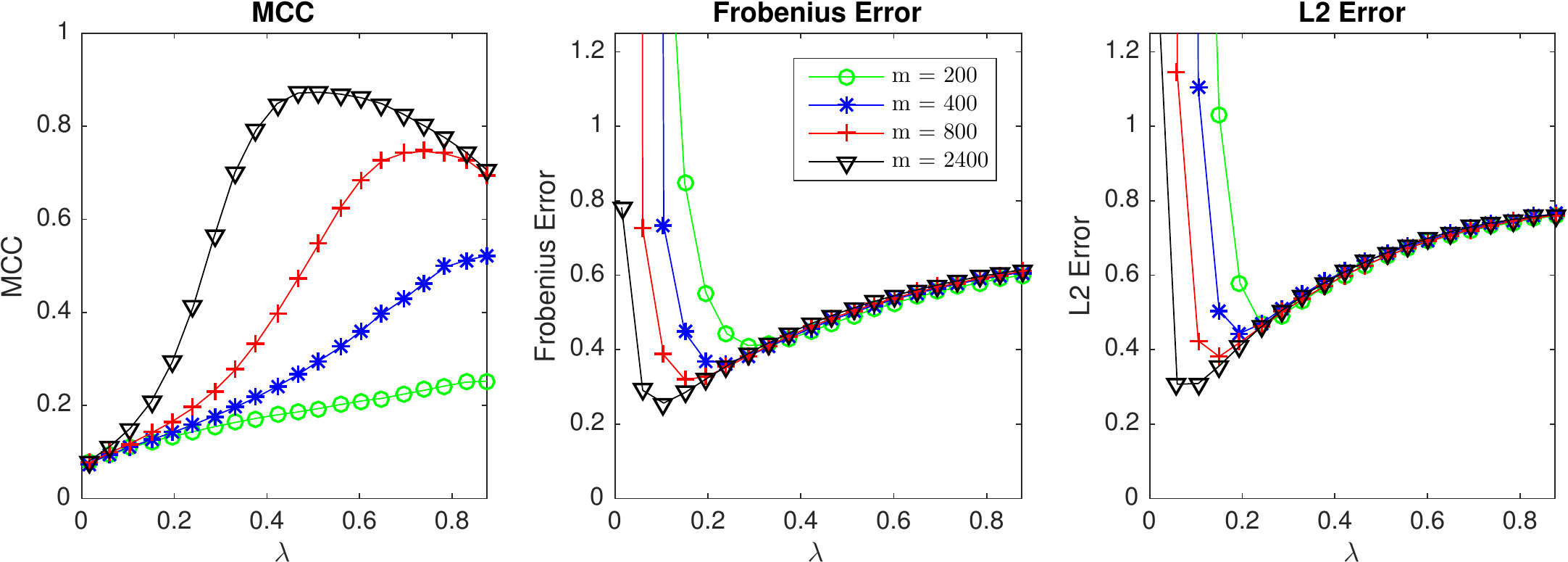}
\caption{MCC, Frobenius, and L2 norm error curves for $B$ a random grid graph and $n = 100$. From top to bottom: $A$ is AR covariance with $\rho = .5$, AR covariance with $\rho = .95$. }
\label{Fig:BGrid1}
\end{figure}
\begin{figure}[h]
\centering
\includegraphics[width=4.25in]{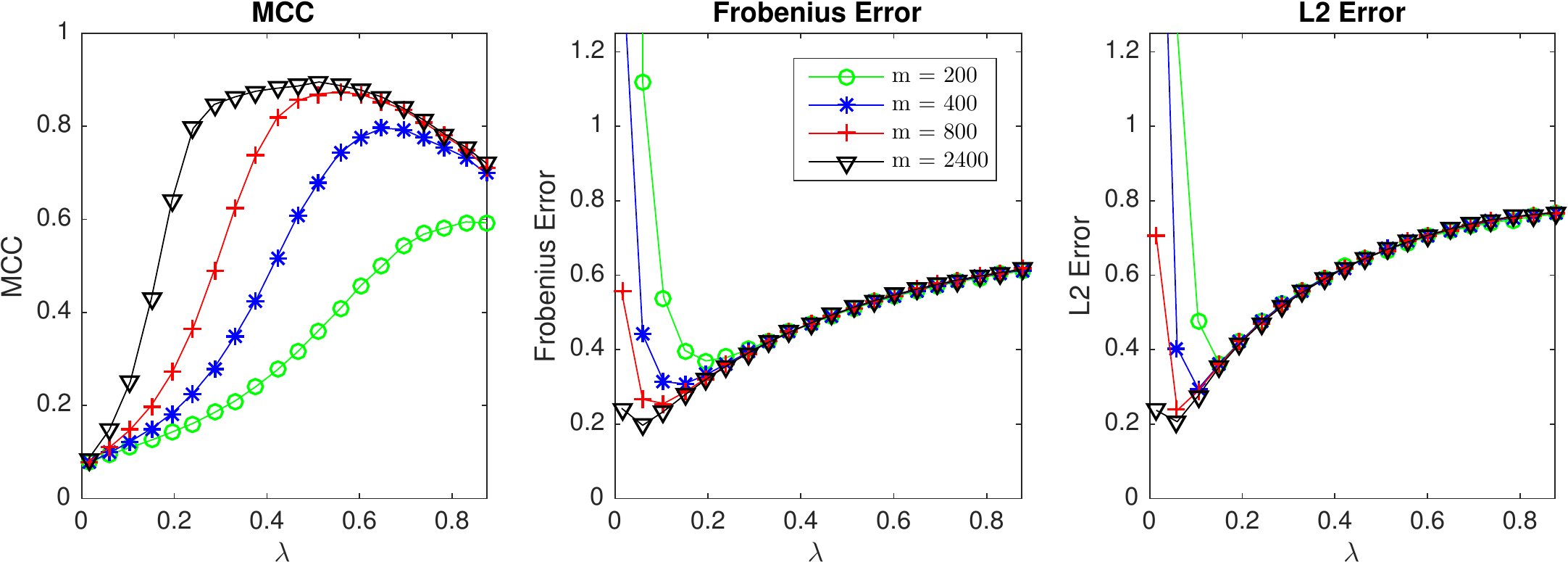}\\\includegraphics[width=4.25in]{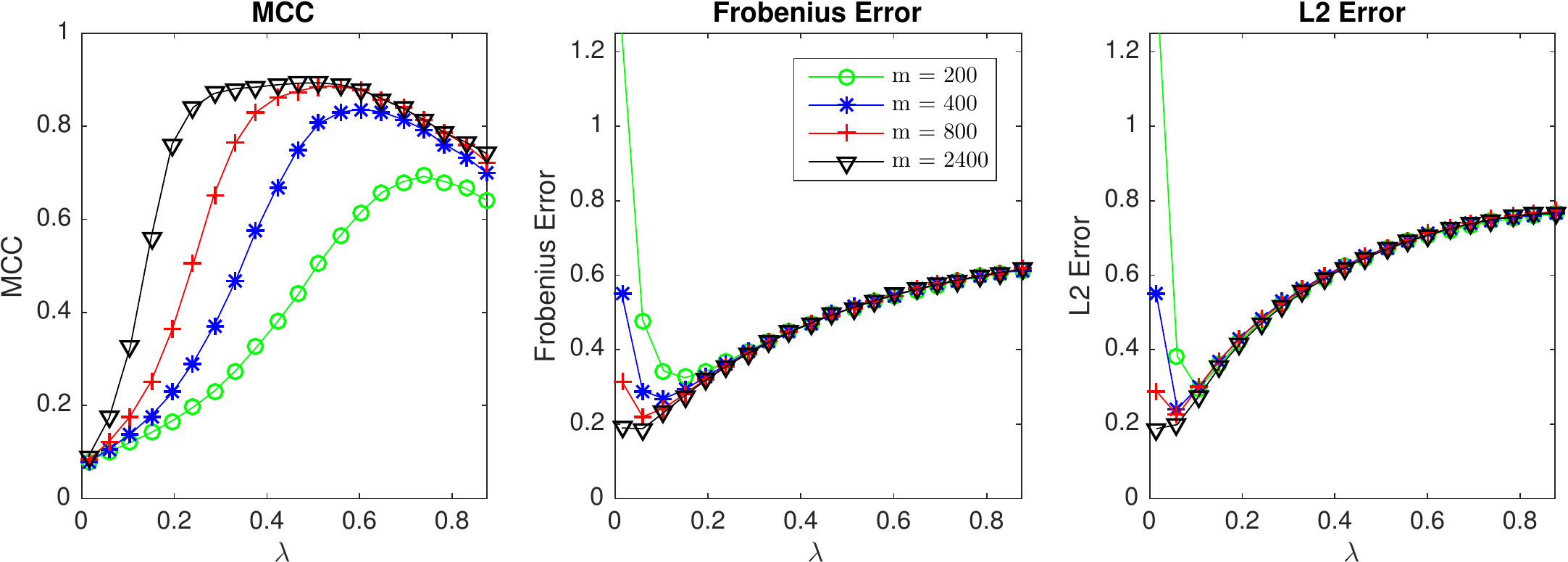}
\caption{MCC, Frobenius, and L2 norm error curves for $B$ a random grid graph and $n = 100$. From top to bottom: $A$ is star-block covariance, and MA covariance.}
\label{Fig:BGrid2}
\end{figure}
\begin{figure}[h]
\centering
\includegraphics[width=4.25in]{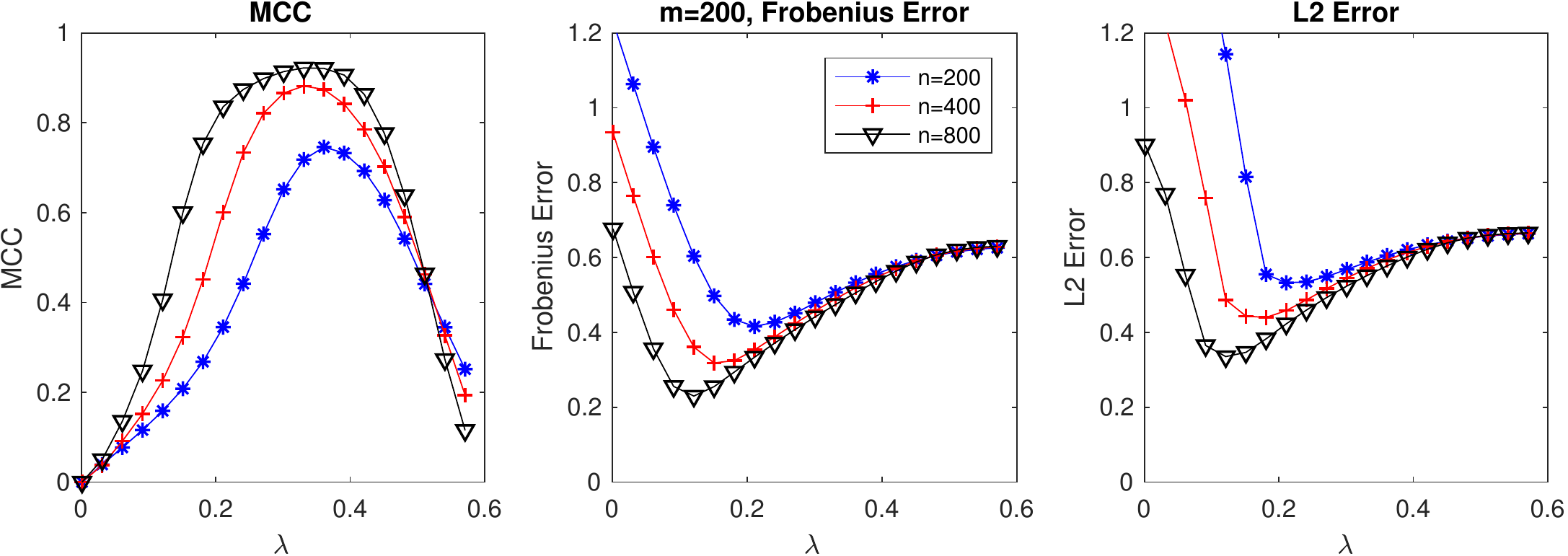}
\\\includegraphics[width=4.25in]{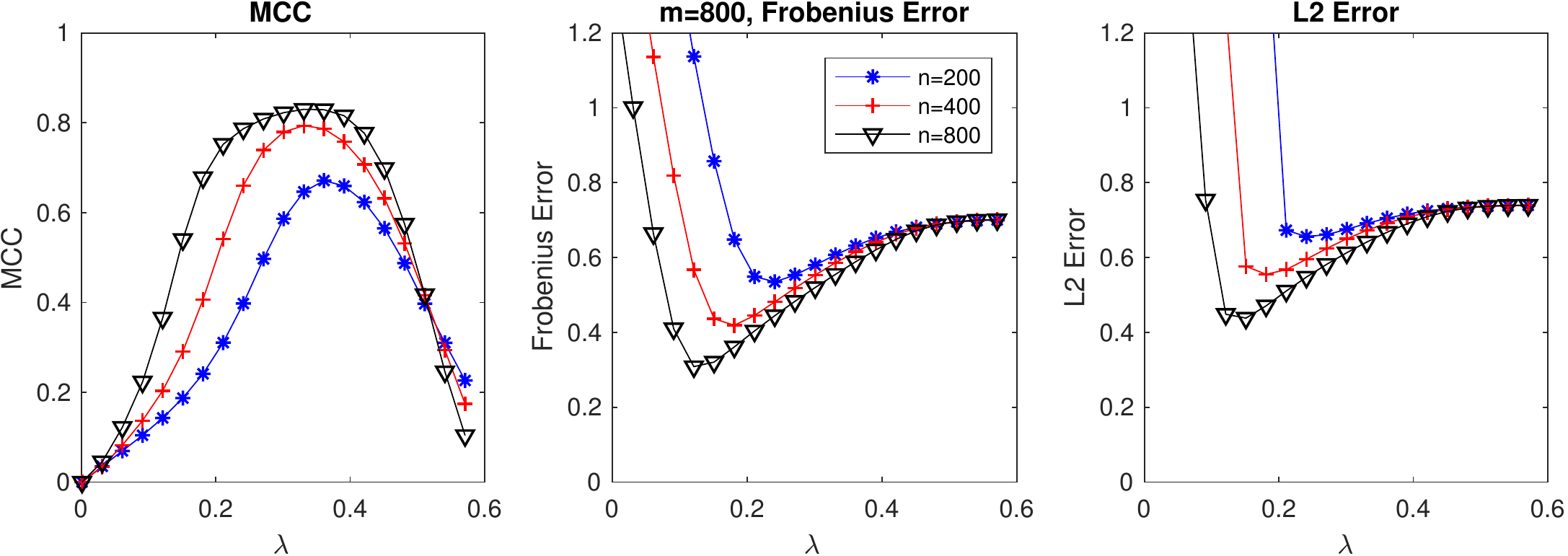}
\caption{MCC, Frobenius, and L2 norm error curves for $A$ a AR(1) with $\rho=0.5$ when $B$ is a random grid graph. From top to bottom: $m=200$ and $m=800$.}
\label{Fig:ARg1}
\end{figure}
\begin{figure}[h]
\centering
\includegraphics[width=4.25in]{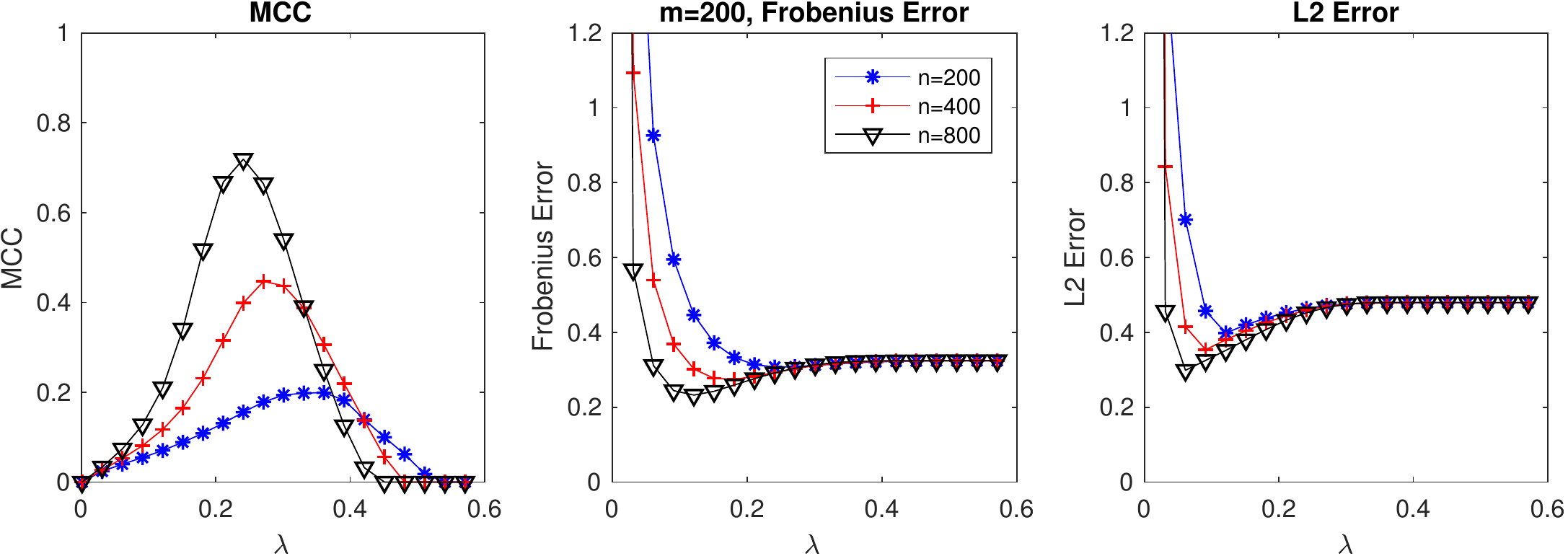}
\\\includegraphics[width=4.25in]{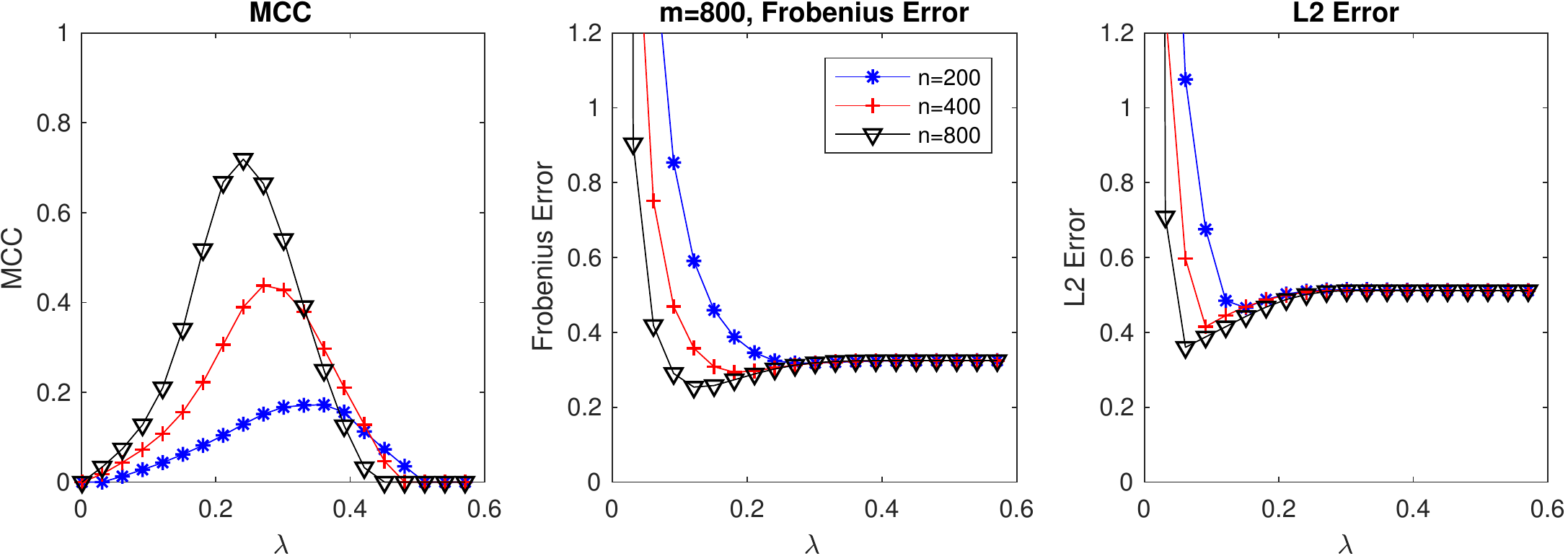}
\caption{MCC, Frobenius, and L2 norm error curves for $A$ a Star-Block graph when $B$ is a random grid graph. From top to bottom: $m=200$ and $m=800$.}
\label{Fig:Stg1}
\end{figure}

\section{Estimation error bound for ${B}$ part: Proof of Theorem \ref{thm}}
\subsection{Preliminary results}
\label{supp:prel}
Define $\tilde{B}(t_0)$ to be the expected value of the kernel-smoothed estimator $\hat{S}_m(t_0)$ at time $t_0$:
\begin{equation}
\tilde{B}(t_0) = E[\hat{S}_m (t_0)] = \sum_{i = 1}^m w_t(t_0) B(i/m).
\end{equation}

Using this notation, the estimator bias can be bounded via
\begin{lemma}[Bias]\label{lemma:bias1}
Suppose there exists $C>0$ such that $\max_{i,j} \sup_t | B''_{i,j}(t)| \leq C$. 
Then for a kernel $K(\cdot)$ satisfying assumptions (K1-K5)
we have
\begin{equation*}
\sup_{t\in[0,1]} \max_{i,j} | \tilde{B}(t_0,i,j) - B(t_0,i,j)| = O\left(h  + \frac{1}{m^2 h^5}\right).
\end{equation*}
\end{lemma}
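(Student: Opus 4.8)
The plan is to bound $\tilde B(t_0,i,j) - B(t_0,i,j)$ for a fixed pair $(i,j)$ and a fixed $t_0$, and to take $\max_{i,j}$ and $\sup_{t_0}$ only at the very end, since every constant produced below depends on the kernel and on the uniform bound $C \ge \max_{i,j}\sup_t |B''_{i,j}(t)|$, but not on $(i,j)$ or $t_0$. Fixing $i,j$ and writing $b(s) := B_{i,j}(s)$, I would use $\sum_{k=1}^m w_k(t_0)=1$ to express the bias as
\[
\tilde B(t_0,i,j) - B(t_0,i,j) = \sum_{k=1}^m w_k(t_0)\big(b(k/m) - b(t_0)\big),\qquad w_k(t_0) = \frac{1}{mh}K\!\left(\frac{k/m - t_0}{h}\right).
\]
The key idea is to split this weighted average into a \emph{continuous kernel-smoothing bias} and a \emph{discretization (Riemann-sum) error}, attacking each with a separate tool.

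For the continuous part I would replace $\frac{1}{m}\sum_k$ by $\int_0^1 ds$, i.e. compare against $\frac{1}{h}\int_0^1 K(\frac{s-t_0}{h})(b(s)-b(t_0))\,ds$; substituting $u=(s-t_0)/h$ turns this into $\int K(u)(b(t_0+hu)-b(t_0))\,du$ over the admissible range. A second-order Taylor expansion $b(t_0+hu) = b(t_0) + hu\,b'(t_0) + \tfrac{h^2u^2}{2}b''(\xi)$ combined with the kernel symmetry (K1) kills the first-order term, and the uniform bound on $b''$ together with $\int u^2 K<\infty$ (K3) leaves $O(h^2)$ for interior $t_0$. Near the boundary (within $h$ of $0$ or $1$) the compact kernel is clipped, the symmetric first moment no longer vanishes, and the weighted mean of $(k/m-t_0)$ is only $O(h)$; this boundary mismatch is what produces the leading $O(h)$ term once the supremum over $t_0\in[0,1]$ is taken.

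For the discretization error I would exploit that $s\mapsto K(\frac{s-t_0}{h})(b(s)-b(t_0))$ is twice continuously differentiable and (in the interior) vanishes together with its first derivative at $s=0,1$, so a second-order Euler--Maclaurin / Taylor estimate bounds the gap between $\frac{1}{m}\sum_k$ and $\int_0^1$ by $O(m^{-2})$ times an integral of the integrand's second derivative. Differentiating the scaled kernel twice costs chain-rule factors of $h^{-1}$, and bounding $K,K',K''$ via assumptions (K4)--(K5), in particular the $O(h^{-4})$ control in (K5), together with the $h^{-1}$ prefactor in $w_k$, yields the stated $O(\frac{1}{m^2 h^5})$ contribution. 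Adding the two pieces and restoring $\max_{i,j}\sup_{t}$ gives the claimed $O(h + \frac{1}{m^2 h^5})$.

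I expect the main obstacle to be the discretization step: one must carefully track how the $m^{-1}$ grid spacing interacts with the $h$-rescaled kernel (each derivative of $K(\frac{\cdot-t_0}{h})$ costing a factor $h^{-1}$), while simultaneously handling the boundary regime, where the truncation of the kernel destroys the clean cancellation of endpoint terms in Euler--Maclaurin. Pinning down the correct power of $h$ in the Riemann term, and verifying it is dominated by the $O(h)$ smoothing bias at the bandwidth $h\asymp(m^{-1}\log m)^{1/3}$ used in Theorem \ref{thm}, is the delicate bookkeeping; by contrast the continuous Taylor/symmetry argument is routine.
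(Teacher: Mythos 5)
Your decomposition is the same one the paper uses: a Riemann-sum discretization error, charged $O(1/(m^2 h^5))$ via assumption (K5) and the bound on $B''_{i,j}$, plus the continuous kernel-smoothing bias, charged $O(h)$; the paper simply outsources both bounds to Lemma 5 of \cite{zhou:TV} rather than rederiving them. Your Taylor-plus-symmetry argument (interior $O(h^2)$ degrading to $O(h)$ at the clipped boundary) and the Euler--Maclaurin control of the sum-versus-integral gap are exactly the content of that cited lemma, so the proposal is correct and essentially identical in approach.
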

This lemma is proved in Section \ref{app:kernel}.

The variance of the estimator $\hat{S}_m(t_0)$ can be bounded as:
\begin{lemma}[Variance]
\label{lemma:varPSD}
Suppose $mh > n$. 
Define event $\mathcal{A}$ 
\begin{align}
\max_{i,j} |\hat{S}_m(t_0, i,j) - \tilde{B}(t_0,i,j)| \leq C \sqrt{\frac{\log mh}{mh}},
\end{align}
for some $C > 0$.
Then 
$
\mathbb{P}(\mathcal{A}) \geq 1-\frac{c}{m^4h^4}. 
$
\end{lemma}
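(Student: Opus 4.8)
The plan is to recognize that $\hat S_m(t_0)-\tilde B(t_0)$ is a centered quadratic form in the underlying independent subgaussian entries of $Z_1,Z_2$, and to control each of its $n^2$ entries by the Hanson--Wright inequality before taking a union bound. Writing $W=\mathrm{diag}(w_1(t_0),\dots,w_m(t_0))$, we have $\sum_i w_i(t_0)\,x_i x_i^T = X W X^T$, so that $\hat S_m(t_0)=X W X^T-\frac{\tr(A)}{m}I_n$. Substituting $X=Z_1A^{1/2}+Z_B$ and using $Z_1\perp Z_2$, $E[Z_{1,ij}]=E[Z_{2,ij}]=0$, $A_{ii}=\tr(A)/m$, and $\sum_i w_i(t_0)=1$, a direct computation gives $E[X W X^T]=\frac{\tr(A)}{m}I_n+\tilde B(t_0)$, confirming $E[\hat S_m(t_0)]=\tilde B(t_0)$. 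Note that the columns of $X$ are \emph{not} independent across time, since $Z_1A^{1/2}e_i$ and $Z_1A^{1/2}e_j$ share $Z_1$ whenever $A_{ij}\neq0$; this is precisely the temporal correlation the model encodes, and it rules out a naive Bernstein bound over independent columns.

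To circumvent this, collect the independent entries of $Z_1,Z_2$ into a single vector $\xi\in\mathbb{R}^{2nm}$ with independent, mean-zero, unit-variance coordinates and $\|\xi_k\|_{\psi_2}\le K$, and write $\mathrm{vec}(X)=L\xi$ with $L L^T=\Sigma$ the covariance in \eqref{eq:kronCov}; concretely $L=[\,A^{1/2}\otimes I_n \mid \mathrm{blkdiag}(B(i/m)^{1/2})\,]$, so that $\|L\|_2^2=\|\Sigma\|_2\le c_A+c_B$ by Assumptions A1, B2 and $A_{ii}\le c_A$. For each pair $(p,q)$ there is a matrix $V_{pq}$ with $(X W X^T)_{pq}=\mathrm{vec}(X)^T V_{pq}\,\mathrm{vec}(X)$, block-diagonal over the $m$ time blocks with $i$-th block $\tfrac12 w_i(t_0)(e_pe_q^T+e_qe_p^T)$, whence $\|V_{pq}\|_2\le\max_i w_i(t_0)$ and $\|V_{pq}\|_F^2\le\sum_i w_i(t_0)^2$. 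Assumption (K4) gives $\max_i w_i(t_0)\le K_1/(mh)$, and since $\sum_i w_i(t_0)=1$ we also get $\sum_i w_i(t_0)^2\le\max_i w_i(t_0)\le K_1/(mh)$. Setting $\tilde V_{pq}=L^TV_{pq}L$ and using $\|\tilde V_{pq}\|_2\le\|\Sigma\|_2\|V_{pq}\|_2$ and $\|\tilde V_{pq}\|_F\le\|\Sigma\|_2\|V_{pq}\|_F$, we obtain the effective-sample-size bounds $\|\tilde V_{pq}\|_2\le (c_A+c_B)K_1/(mh)$ and $\|\tilde V_{pq}\|_F\le (c_A+c_B)\sqrt{K_1/(mh)}$.

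Applying Hanson--Wright to $\xi^T\tilde V_{pq}\xi-E[\xi^T\tilde V_{pq}\xi]=(X W X^T)_{pq}-E[(X W X^T)_{pq}]$ with $t=C\sqrt{\log(mh)/(mh)}$, the Frobenius branch of the $\min$ dominates because $mh\gg\log(mh)$, yielding for a constant $c'$ proportional to $C^2/(K^4(c_A+c_B)^2)$
\[
\mathbb{P}\Big(\big|\hat S_m(t_0,p,q)-\tilde B(t_0,p,q)\big|>C\sqrt{\tfrac{\log(mh)}{mh}}\Big)\le 2(mh)^{-c'}.
\]
Choosing $C$ large enough that $c'\ge 6$ and taking a union bound over the $n^2\le(mh)^2$ entries (using $mh>n$) gives failure probability at most $2(mh)^{-4}=2/(m^4h^4)$, which is the claim. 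The main obstacle is exactly the cross-time dependence induced by $A$: it forces us to abandon the independent-column viewpoint and instead treat $\hat S_m(t_0)$ as a single quadratic form in the base variables $\xi$, the crux then being to bound $\|\tilde V_{pq}\|_F$ and $\|\tilde V_{pq}\|_2$ via the bounded spectrum of $\Sigma$ together with the kernel-weight sums $\sum_i w_i(t_0)^2\asymp 1/(mh)$ and $\max_i w_i(t_0)\asymp 1/(mh)$ that quantify the $mh$ effective replicates.
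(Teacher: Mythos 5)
Your proof is correct and follows essentially the same route as the paper's: both reduce each entry $\hat S_m(t_0,p,q)-\tilde B(t_0,p,q)$ to a centered quadratic form in the independent subgaussian base variables, bound its Frobenius/operator norms via $\|\Sigma\|\le c_A+c_B$ and the kernel-weight bound $\sum_i w_i(t_0)^2\lesssim 1/(mh)$, apply Hanson--Wright, and finish with a union bound over the $n^2$ entries using $mh>n$. The only difference is packaging: the paper routes the Hanson--Wright step through its standalone concentration bound (Lemma \ref{lemma:var}), writing each entry as $[w^{(ij)}(t_0)]^T\mathrm{vec}(\hat S-\Sigma)$ with $\|w^{(ij)}(t_0)\|_2\le c/\sqrt{mh}$, whereas you build the quadratic-form matrices $V_{pq}$, $\tilde V_{pq}=L^TV_{pq}L$ explicitly from the generative model's factor $L$.
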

The proof of this result is based on an application of the Hanson-Wright inequality, and is found in the supplementary material.
We emphasize that this bound holds for both the diagonal and off-diagonal elements simultaneously. Using a similar approach, in Section \ref{app:posdef} we can also show that the estimator $\hat{S}_m(t_0)$ is positive definite with high probability:
\begin{lemma}[Positive definiteness]
\label{lemma:varPSD2}
Suppose $mh > n$. 
Define the event $\mathcal{B}$
\begin{align}
(1+\delta) \tilde{B}(t_0) \succ \hat{S}_m(t_0) \succ (1-\delta)\tilde{B}(t_0) \succ 0.
\end{align}
for some fixed $\sqrt{\frac{c\log mh}{mh}} \leq \delta < 1$.
Then 
$
\mathbb{P}(\mathcal{B}) \geq 1-\frac{c}{m^4h^4}. 
$
\end{lemma}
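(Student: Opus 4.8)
\emph{Proof proposal.} The plan is to derive both the positive definiteness of $\hat S_m(t_0)$ and the two-sided spectral sandwich from a single operator-norm estimate on the whitened fluctuation
\[
M := \tilde B(t_0)^{-1/2}\left(\hat S_m(t_0) - \tilde B(t_0)\right)\tilde B(t_0)^{-1/2}.
\]
Since $\tilde B(t_0) = \sum_i w_i(t_0) B(i/m)$ is a convex combination (the $w_i(t_0)\ge 0$ sum to one) of matrices obeying $B(i/m)\succeq I_n/c_B$ by Assumption B2, it is positive definite with $\lambda_{\min}(\tilde B(t_0))\ge 1/c_B$, so $\tilde B(t_0)^{\pm 1/2}$ are well defined. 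The event $\mathcal B$ is exactly $\{\|M\|_2\le\delta\}$, because $(1-\delta)\tilde B\prec\hat S_m\prec(1+\delta)\tilde B$ is equivalent to $-\delta I_n\prec M\prec\delta I_n$, and for $\delta<1$ the lower inequality forces $\hat S_m(t_0)\succ(1-\delta)\tilde B(t_0)\succeq (1-\delta)I_n/c_B\succ0$. So it suffices to show $\mathbb P(\|M\|_2>\delta)\le c/(m^4h^4)$.

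First I would discretize. As $M$ is symmetric, $\|M\|_2=\sup_{\|v\|_2=1}|v^TMv|$, and passing to a $1/4$-net $\mathcal N$ of $S^{n-1}$ with $|\mathcal N|\le 9^n$ gives $\|M\|_2\le 2\max_{v\in\mathcal N}|v^TMv|$; it then suffices to control $v^TMv$ for each fixed $v$ and union bound. Setting $u=\tilde B(t_0)^{-1/2}v$ (so $\|u\|_2^2\le\|\tilde B(t_0)^{-1}\|_2\le c_B$),
\[
v^TMv=\sum_{i=1}^m w_i(t_0)\left((u^Tx_i)^2-E[(u^Tx_i)^2]\right),
\]
which I would recognize as a centered quadratic form in the independent, unit-variance subgaussian entries $\zeta=(\mathrm{vec}(Z_1),\mathrm{vec}(Z_2))$. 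Using $x_i=Z_1a_i+B(i/m)^{1/2}z_{2,i}$ with $a_i$ the $i$-th column of $A^{1/2}$ and $z_{2,i}$ the $i$-th column of $Z_2$, one has $u^Tx_i=\langle Z_1,\,u a_i^T\rangle+\langle z_{2,i},\,B(i/m)^{1/2}u\rangle=c_i^T\zeta$, so $v^TMv=\zeta^TQ\zeta-\tr(Q)$ with $Q=\sum_i w_i(t_0)c_ic_i^T$.

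The core computation is bounding $\|Q\|_2$ and $\|Q\|_F$. Using $c_i^Tc_j=\|u\|^2 A_{ij}+\mathbb 1\{i=j\}\,u^TB(i/m)u$ (the $Z_2$-parts of the $c_i$ have disjoint support across $i$), together with $\sum_i w_i(t_0)=1$, $w_i(t_0)\le K_1/(mh)$ (Assumption K4), $\|A\|_2\le c_A$, $\|B(i/m)\|_2\le c_B$ and $\|u\|^2\le c_B$, I would obtain $\|Q\|_2=O((mh)^{-1})$ and $\|Q\|_F^2=\sum_{i,j}w_iw_j(c_i^Tc_j)^2=O((mh)^{-1})$; the kernel's compact support (only $\asymp mh$ active indices) and the identity $\sum_j A_{ij}^2=(A^2)_{ii}\le c_A^2$ keep both quantities at the $(mh)^{-1}$ scale. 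Applying the Hanson--Wright inequality with $t=\delta/2$ then gives, for each fixed $v$, $\mathbb P(|v^TMv|>\delta/2)\le 2\exp(-c\min(\delta^2/\|Q\|_F^2,\ \delta/\|Q\|_2))=2\exp(-c'\delta^2 mh)$ for $\delta\le1$.

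Finally I would union bound over $\mathcal N$: the failure probability is at most $9^n\cdot2\exp(-c'\delta^2mh)=2\exp(n\log 9-c'\delta^2mh)$. Here the hypothesis $mh>n$ (with the constant absorbed by taking $\delta$ a fixed constant below $1$) dominates the net cardinality, while the lower bound $\delta\ge\sqrt{c\log(mh)/(mh)}$ supplies the extra $\log(mh)$ so that the exponent beats $4\log(mh)$, yielding $\mathbb P(\|M\|_2>\delta)\le c''/(m^4h^4)$. I expect the main obstacle to be precisely this step: passing from entrywise to operator-norm control forces the $\epsilon$-net and hence the $e^{\Theta(n)}$ cardinality factor, which is exactly why $mh>n$ is assumed and must be used to absorb it, whereas the entrywise bound of Lemma \ref{lemma:varPSD} never incurs this factor. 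A secondary technical point is the bookkeeping of $\|Q\|_F$ and $\|Q\|_2$, where one must exploit both the compact kernel support and the disjointness of the $Z_2$ contributions to avoid a spurious extra factor of $1/h$.
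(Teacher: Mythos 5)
Your proposal follows essentially the same route as the paper's own proof: a Hanson--Wright tail bound for each fixed direction (the paper routes this through its Lemma \ref{lemma:var} applied to the weight vector $[w^{(1,1)},\dots,w^{(n,n)}]\,\mathrm{vec}(uu^T)$ of norm $O(1/\sqrt{mh})$, while you recompute the coefficient matrix $Q$ and its norms directly), followed by a union bound over an $\epsilon$-net of $S^{n-1}$ whose $e^{\Theta(n)}$ cardinality is absorbed using $mh>n$, with $\lambda_{\min}(\tilde B(t_0))\geq 1/c_B$ supplying positive definiteness. The only cosmetic difference is that you whiten by $\tilde B(t_0)^{-1/2}$ at the outset so the event becomes an operator-norm bound on $M$, whereas the paper bounds the additive deviation $u^T(\hat S_m(t_0)-\tilde B(t_0))u$ uniformly and converts it to the multiplicative sandwich at the end via $\delta\,\lambda_{\min}(\tilde B(t_0))\geq c\epsilon$.
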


\subsection{Proof of Theorem \ref{thm}}
\label{Sec:lemmaVar}
In this section, we derive the elementwise bound on the estimator $\widehat{S}_m(t_0)$ of the spatial covariance $B(t_0)$ at time $t_0$, and show that it is positive definite with high probability. 
To obtain the elementwise bound, we will first bound the bias and variance of $\widehat{S}_m(t_0, i, j)$ and then combine the bounds.

\subsection{Estimator bias bound}
Recall that $\tilde{B}(t_0) = E[\hat{S}_m (t_0)]$.
By Lemma \ref{lemma:bias1} with $p = 1$ (proof in Section \ref{app:kernel}),
we have 
\[
\sup_{t_0} \max_{i,j} |\tilde{B}(t_0,i,j) - B({t_0, i, j})| = O\left(h + \frac{1}{m^2 h^5}\right).
\]


\subsection{Estimator variance bound (Lemma \ref{lemma:varPSD})}
\setcounter{theorem}{1}
\begin{lemma}[Variance]
\label{lemma:varPSD}
Suppose $mh > n$. 
Define event $\mathcal{A}$ 
\begin{align}
\max_{i,j} |\hat{S}_m(t_0, i,j) - \tilde{B}(t_0,i,j)| \leq C \sqrt{\frac{\log mh}{mh}},
\end{align}
for some $C > 0$.
Then 
\begin{align}
\mathbb{P}(\mathcal{A}) \geq 1-\frac{c}{m^4h^4}. 
\end{align}
\end{lemma}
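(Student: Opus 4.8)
The plan is to write the random part of each entry $\hat{S}_m(t_0,i,j)$ as a centered quadratic form in the underlying independent subgaussian source entries, apply the Hanson--Wright inequality entrywise, and finish with a union bound over the $n^2$ index pairs. First I would collect all the source randomness into a single vector. Vectorizing the generative model $X = Z_1 A^{1/2} + Z_B$ (column-stacking) gives $\mathrm{vec}(X) = L\xi$, where $\xi = (\mathrm{vec}(Z_1)^T, \mathrm{vec}(Z_2)^T)^T$ has independent, mean-zero, unit-variance entries with $\|\cdot\|_{\psi_2}\le K$, and $L = [\,A^{1/2}\otimes I_n \mid D\,]$ with $D = \mathrm{blockdiag}(B(1/m)^{1/2},\dots,B(1)^{1/2})$. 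One checks $LL^T = A\otimes I_n + \sum_i (e_ie_i^T)\otimes B(i/m) = \Sigma$, so $\|L\|_2^2 = \|\Sigma\|_2 \le c_A + c_B$ by Assumptions A1 and B2, since $\Sigma$ is a block-diagonal term plus a Kronecker product with the identity.

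Writing $[x_l]_i = e_{(l-1)n+i}^T\,\mathrm{vec}(X)$, the random part of the estimator entry is $Q_{ij} := \sum_l w_l(t_0)[x_l]_i[x_l]_j = \mathrm{vec}(X)^T P_{ij}\,\mathrm{vec}(X) = \xi^T(L^T P_{ij} L)\xi$, where $P_{ij}$ is block-diagonal across the $m$ time blocks with $l$-th block $\tfrac{w_l}{2}(e_ie_j^T+e_je_i^T)$ (and $w_l\, e_ie_i^T$ when $i=j$). Because the deterministic offset $-\tfrac{\tr(A)}{m}\delta_{ij}$ cancels between $\hat{S}_m$ and its mean (using $A_{ii}=\tr(A)/m$ from Assumption A2), I obtain exactly $\hat{S}_m(t_0,i,j) - \tilde{B}(t_0,i,j) = \xi^T M_{ij}\xi - E[\xi^T M_{ij}\xi]$ with $M_{ij} = L^T P_{ij} L$. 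Then I would bound the two norms Hanson--Wright requires: $\|P_{ij}\|_2 \le \max_l w_l \le K_1/(mh)$ by Assumption K4, and $\|P_{ij}\|_F^2 \le \sum_l w_l^2 \le 3K_1^2/(mh)$, the latter because the compact support $[-1,1]$ of $K$ (Assumption K1) leaves at most $2hm+1$ active indices. Conjugating by $L$ gives $\|M_{ij}\|_2 \le (c_A+c_B)K_1/(mh)$ and $\|M_{ij}\|_F^2 \le 3(c_A+c_B)^2 K_1^2/(mh)$, the crucial point being that $\|L^T P L\|_2 \le \|L\|_2^2\|P\|_2$ and $\|L^T P L\|_F \le \|L\|_2^2\|P\|_F$, so conjugation costs only the bounded factor $\|\Sigma\|_2$.

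Applying Hanson--Wright with $t = C_0\sqrt{\log(mh)/(mh)}$, the Frobenius (subgaussian) term achieves the minimum in the exponent, since $t^2/\|M_{ij}\|_F^2 \asymp \log(mh)$ while $t/\|M_{ij}\|_2 \asymp \sqrt{mh\log(mh)}$, which is far larger for large $mh$. This yields $\mathbb{P}(|Q_{ij}-EQ_{ij}|>t)\le 2(mh)^{-c'}$ with $c'$ proportional to $C_0^2/((c_A+c_B)^2 K^4 K_1^2)$; choosing $C_0$ large forces $c'\ge 6$. A union bound over the at most $n^2$ pairs gives $\mathbb{P}(\mathcal{A}^c) \le 2n^2(mh)^{-c'} \le 2(mh)^{2-c'} \le 2(mh)^{-4}$, where the hypothesis $mh>n$ is used precisely to absorb the factor $n^2 \le (mh)^2$, leaving the stated $1 - c/(m^4h^4)$. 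Note the argument treats diagonal and off-diagonal $(i,j)$ uniformly, so the bound holds simultaneously for both.

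The main obstacle I anticipate is not any single inequality but the careful bookkeeping in the two variance-norm bounds: correctly counting the active indices from the kernel support to extract the sharp $1/(mh)$ scaling of $\|P_{ij}\|_F^2$, and verifying that conjugation by $L$ costs only the bounded spectral factor $\|\Sigma\|_2 \le c_A+c_B$. Getting both of these right is what makes the rate $\sqrt{\log(mh)/(mh)}$ emerge with constants depending only on $c_A, c_B, K$, and $K_1$, and what lets the subgaussian tail of Hanson--Wright dominate the sub-exponential one throughout the regime of interest.
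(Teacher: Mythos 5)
Your proposal is correct and follows essentially the same route as the paper: the paper also writes each entrywise deviation as a centered quadratic form in the independent subgaussian source variables (packaged as its generic concentration bound, Lemma \ref{lemma:var}, proved via Hanson--Wright with the weight-vector formulation $[w^{(ij)}(t_0)]^T \mathrm{vec}(\hat{S}-\Sigma)$, $\|w^{(ij)}(t_0)\|_2 \le c/\sqrt{mh}$ playing the role of your $\|P_{ij}\|_F$ bound), then takes a union bound over the $n^2$ entries and uses $mh>n$ to absorb the $n^2$ factor, exactly as you do. The only differences are cosmetic: you inline the concentration lemma and work with the explicit generative factor $L$ (with $LL^T=\Sigma$) rather than the whitened form $\Sigma^{1/2}z$, which is a slightly more faithful bookkeeping of the same argument.
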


\begin{proof}
Recall that
\begin{align}\label{eq:PlugInEst}
\widehat{S}_m(t_0) := \sum_{i = 1}^m w_{i}({t_0}) \left(X_i  X_i^T - \frac{\tr(A)}{m} I_n\right).
\end{align}
and
\begin{align*}
w_{i}(t_0) =  \frac{1}{mh} K\left(\frac{i/m-t_0}{h}\right). 
\end{align*}
Then
\begin{align*}
\widehat{S}_m(t_0) - \tilde{B}(t_0) &= \sum_{\ell= 1}^m w_{\ell}({t_0}) (X_\ell  X_\ell^T - E X_\ell  X_\ell^T).
\end{align*}
Let $\hat{S} = \mathrm{vec}(X) \mathrm{vec}(X)^T \in \mathbb{R}^{mn \times mn}$ be the overall sample covariance. Then observe that for fixed $i,j,t_0$ there exists a vector $w^{ij}(t_0)$ with $\sum_t w^{(ij)}_t (t_0)= 1$ and $\|w^{(ij)}(t_0)\|_2 \leq c/\sqrt{mh}$
such that
\[
\widehat{S}_m(t_0, i, j)- \tilde{B}(t_0,i,j) = [w^{(ij)}(t_0)]^T \mathrm{vec} (\hat{S}-\Sigma).
\]
By the triangle inequality, $\|\Sigma\| \leq \|A\| + \max_t \|B_t\|$, so $\|\Sigma\| \|w^{(ij)}(t_0)\|_2 \leq O(1/\sqrt{mh})$. 
We can thus apply Lemma \ref{lemma:var} in Section \ref{sec:conc}, giving for fixed $i,j$
\begin{align*}
\mathbb{P}( |\widehat{S}_m(t_0, i, j) - \tilde{B}(t_0,i,j)|  \geq \epsilon \|\Sigma\| \|w^{(ij}(t_0)\|_2 ) 
&=\mathbb{P}\left([w^{(ij)}(t_0)]^T \mathrm{vec} (\hat{S}-\Sigma)  \geq \epsilon C \sqrt{\frac{1}{mh}}\right) \\
&\leq 2\exp\left(-c \frac{\epsilon^2}{K^4}\right).
\end{align*}

Using the union bound over $i,j$ (cardinality $n^2$), the concentration bound Lemma \ref{lemma:var} implies
\begin{align*}
\mathbb{P}\left( \max_{i,j} |\widehat{S}_m(t_0, i, j) - \tilde{B}(t_0,i,j)|  \geq \epsilon C \sqrt{\frac{1}{mh}} \right) 
&\leq n^2 2\exp\left(-c \frac{\epsilon^2}{K^4}\right)\\
&= 2\exp\left(2 \log n -c \frac{\epsilon^2}{K^4}\right).
\end{align*}
Setting $\epsilon = c' \sqrt{\log mh}$, for large enough $c$ we have $2 \log n -c \frac{\epsilon^2}{K^4} \leq 2 \log n - c \log{mh}/K^4 \leq  -c''\frac{\log mh}{K^4}$ since $mh > n$ and that
\begin{equation}
\max_{i,j} |\widehat{S}_m(t_0, i, j) - \tilde{B}(t_0,i,j)|  \leq C \sqrt{\frac{\log mh}{mh}}
\end{equation}
with probability at least $1 - \frac{c}{m^4h^4}$.

\end{proof}

\subsection{Total error}

Putting the bias and variance together, we can bound the total error of the estimator. By the triangle inequality,
\begin{align*}
|\widehat{S}&_m(t_0, i, j) - B({t_0, i, j})| \le|\widehat{S}_m(t_0, i, j) - \tilde{B}(t_0,i,j)| + |\tilde{B}(t_0,i,j)-B({t_0, i, j})|.
\end{align*}
Hence,  
\begin{align*}
\max_{i,j} |\widehat{S}_m&(t_0, i, j) - B({t_0, i, j})| \\&= O_p \left(h+ \frac{1}{m^2 h^5} + \sqrt{\frac{\log m}{mh}}   \right).
\end{align*}
Optimizing over the order of $h$, 
we set $h \asymp  \left(\frac{\log m}{m}\right)^{1/3}$, giving
\begin{equation}\label{eq:bdd}
\max_{i,j} |\widehat{S}_m(t_0, i, j) - B({t_0, i, j})| \leq C  \left(\frac{\log m}{m}\right)^{1/3}.
\end{equation}
for some $C$. 

This completes the bound on the estimator error of $\widehat{S}_m(t_0)$. It remains to show that $\widehat{S}_m(t_0)$ is positive definite with high probability.


\subsection{Positive definiteness of $\widehat{S}_m(t_0)$ (Proof of Lemma \ref{lemma:varPSD2})}\label{app:posdef}
\begin{proof}

Let $u \in S^{n-1}$. Then 
\begin{align*}
u^T \hat S_m(t_0)  u &= \mathrm{vec}^T(uu^T) \mathrm{vec}(\hat S_m(t_0) ) \\&= \mathrm{vec}^T(uu^T)\left[\begin{array}{c} w^{(1,1)} , \dots ,w^{(n,n)}\end{array}\right]^T \mathrm{vec}\left(\hat{S} - \frac{\tr(A)}{m} I\right).
\end{align*}
Observe that $\|\mathrm{vec}^T(uu^T)\left[\begin{array}{c} w^{(1,1)} , \dots ,w^{(n,n)}\end{array}\right]^T\|_2 \leq c/\sqrt{mh}$, since the $w^{(ij)}$ have disjoint support. Thus by Lemma \ref{lemma:var}
\begin{equation}
\prob{u^T (\hat S_m(t_0)  - E[\hat S_m(t_0) ]) u > \epsilon \sqrt{\frac{c}{mh}}} \leq 2 \exp(-C\epsilon^2)
\end{equation}
Recall that $E[\hat S_m(t_0) ] = \tilde{B}(t_0)$.


Then by a standard argument using the union bound over an $\epsilon$ net of $S^{n-1}$, which has cardinality $\leq \exp (n \log(3/\epsilon))$,
\begin{align}
\label{Eq:HighProb}
&\prob{\exists  u \in S^{n-1} | u^T \left(S_m(t_0)  - \tilde{B}\right)u \leq c\epsilon  } \nonumber\\&\leq \exp (n \log(3/\epsilon)) 2 \exp(-C\epsilon^2 mh) \\\nonumber
&\leq C\exp\left(-c' \epsilon^2mh\right)
\end{align}
This holds for $c$ large enough, since $n < mh$.

Suppose that the event in \eqref{Eq:HighProb} holds. Then for all $u \in S^{n-1}$,
\begin{align*}
u^T S_m(t_0) u &\geq u^T \tilde{B}(t_0)u - c\epsilon\\\nonumber
&\geq u^T \tilde{B}(t_0) u (1-\delta)
\end{align*}
where $\delta \lambda_{min}(\tilde{B}(t_0)) \geq c\epsilon$. Note that since $\tilde{B}(t_0)$ is a positively-weighted average of matrices $B_i$ with minimum eigenvalues $\geq 1/c_b$ (assumption A4), $\lambda_{min}(\tilde{B}(t_0)) \geq 1/c_b$. By a similar argument, the upper bound holds. We thus have
\begin{equation}\label{eq:evnt}
(1 + \delta)\tilde{B}(t_0) \succeq \hat S_m(t_0) \succeq (1-\delta) \tilde{B}(t_0)
\end{equation}
with probability at least $1 - \frac{c'}{m^4h^4}$, for some fixed $\sqrt{\frac{c \log mh}{mh}}\leq \delta \leq 1$. Note that when \eqref{eq:evnt} holds, $S_m(t_0)$ is positive definite.  
\end{proof}

\subsection{Theorem \ref{thm}}
By the union bound the probability that the events $\mathcal{A}$ and $\mathcal{B}$ hold is $\mathbb{P}(\mathcal{A}\cap\mathcal{B}) = 1 - \frac{c}{m^4 h^4}$. 
Thus, combining the bound \eqref{eq:bdd} and the proof of positive definiteness in the previous subsection, the proof of Theorem \ref{thm} in the main text results.
\setcounter{theorem}{2}
\begin{theorem}
\label{thm}


Suppose that the conditions of Lemma \ref{lemma:varPSD} hold and $h \asymp \left(\frac{\log m}{m}\right)^{1/3}$. Then with probability at least $1 - \frac{c''}{m^{8/3}}$,
\[
\max_{ij} |\widehat{S}_m(t_0, i, j) - B({t_0, i, j})| \leq C \left(\frac{\log m}{m}\right)^{1/3}
\]
for some $C$, and $\widehat{S}_m(t_0, i, j)$ is positive semidefinite.
\end{theorem}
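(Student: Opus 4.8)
The plan is to assemble Theorem \ref{thm} directly from the three preliminary results already established: the bias bound (Lemma \ref{lemma:bias1}), the variance bound (event $\mathcal{A}$ in Lemma \ref{lemma:varPSD}), and the positive-definiteness statement (event $\mathcal{B}$ in Lemma \ref{lemma:varPSD2}). Since $\tilde{B}(t_0) = E[\widehat{S}_m(t_0)]$ sits between the estimator and the target, the first step is a triangle inequality decomposition of the elementwise error into a bias piece and a variance piece:
\[
|\widehat{S}_m(t_0,i,j) - B(t_0,i,j)| \le |\widehat{S}_m(t_0,i,j) - \tilde{B}(t_0,i,j)| + |\tilde{B}(t_0,i,j) - B(t_0,i,j)|.
\]
Taking $\max_{i,j}$ and substituting the two lemmas, I would obtain, on the event $\mathcal{A}$, a total bound of order $h + (m^2 h^5)^{-1} + \sqrt{(\log m)/(mh)}$, uniformly in $i,j$.

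The second step is to optimize the bandwidth. The dominant competition is between the leading bias term $h$ (from the second-order Taylor expansion of the smooth $B(\cdot)$ against the symmetric kernel) and the stochastic term $\sqrt{(\log m)/(mh)}$. Balancing these by setting $h \asymp (\log m / m)^{1/3}$ equates both at order $(\log m/m)^{1/3}$. I would then verify that the remaining bias term is strictly lower order: with this choice $h^5 \asymp (\log m/m)^{5/3}$, so $(m^2 h^5)^{-1} \asymp m^{-1/3}(\log m)^{-5/3}$, which is smaller than $(\log m/m)^{1/3}$ by a factor of $(\log m)^{-2}$ and hence absorbable into the constant $C$. This yields the claimed bound $\max_{ij}|\widehat{S}_m(t_0,i,j) - B(t_0,i,j)| \le C(\log m/m)^{1/3}$ on $\mathcal{A}$.

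The third step handles positive definiteness and the probability bookkeeping. Lemma \ref{lemma:varPSD2} already gives positive definiteness (indeed the sandwich $(1+\delta)\tilde{B}(t_0) \succ \widehat{S}_m(t_0) \succ (1-\delta)\tilde{B}(t_0) \succ 0$) on the event $\mathcal{B}$, so I simply invoke it. Both $\mathcal{A}$ and $\mathcal{B}$ fail with probability at most $c/(m^4 h^4)$, so a union bound gives $\mathbb{P}(\mathcal{A} \cap \mathcal{B}) \ge 1 - 2c/(m^4 h^4)$. Substituting $h \asymp (\log m/m)^{1/3}$ gives $m^4 h^4 \asymp m^{8/3}(\log m)^{4/3} \ge m^{8/3}$, so the failure probability is at most $c''/m^{8/3}$, matching the statement.

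I do not expect a serious obstacle here, since all the analytic difficulty is quarantined inside the three lemmas (the Taylor/kernel argument for the bias, Hanson--Wright concentration for the variance, and the $\epsilon$-net covering for positive definiteness). The only place demanding care is the bandwidth optimization in the second step: one must confirm that the $(m^2 h^5)^{-1}$ bias contribution really is negligible at the balancing point rather than dictating a different optimal $h$, and that the condition $mh > n$ required by Lemmas \ref{lemma:varPSD} and \ref{lemma:varPSD2} remains compatible with the chosen $h$ under the standing sparsity/dimension assumption $s_b + n = o((m/\log m)^{2/3})$. Everything else is a mechanical assembly of the pieces.
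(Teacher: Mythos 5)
Your proposal is correct and follows essentially the same route as the paper's own proof: triangle-inequality decomposition into bias and variance via $\tilde{B}(t_0)$, substitution of Lemmas \ref{lemma:bias1} and \ref{lemma:varPSD}, bandwidth balancing at $h \asymp (\log m / m)^{1/3}$, positive definiteness from Lemma \ref{lemma:varPSD2}, and a union bound over $\mathcal{A}$ and $\mathcal{B}$ yielding the $1 - c''/m^{8/3}$ probability. Your explicit check that the $(m^2 h^5)^{-1}$ term is lower order at the balancing point, and that $mh > n$ holds under the dimension assumption, is care the paper leaves implicit, but the argument is the same.
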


\section{Estimation error bound for $A$ part: Proof of Theorem \ref{lem:Apart2}}
\label{app:lem5}
\subsection{Trace Estimator}
We first bound the error for the estimator 
\begin{align}\label{eq:trdef}
\hat{\tr}(B) &= \sum_{i = 1}^m w_i \|X_t \|_2^2 - \frac{n}{m} \tr(A),\qquad
w_i = \frac{1}{m}  .
\end{align}
of the constant trace of $B$, ${\tr}(B)$.  

\begin{lemma}
\label{lem:trace}
Suppose that $\|A\| \leq c_A$ and $\|B(t)\|\leq c_B$ for all $t,m$, and $\tr(B(t))$ is constant over time. We have with probability $1-\frac{3}{m^5}$, 
\[
\frac{1}{n} |\hat{{\mathrm{tr}}}(B) -{ \mathrm{tr}}(B)| \le C(c_A + c_B)\sqrt{{\frac{\log m}{mn}}}.
\]
\end{lemma}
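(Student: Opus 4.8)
The plan is to verify that $\hat{\tr}(B)$ is exactly unbiased and then to control the single centered quadratic form $\|X\|_F^2 - E\|X\|_F^2$ via the Hanson--Wright-type concentration bound already established as Lemma~\ref{lemma:var}.

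First I would establish unbiasedness. By the marginal law \eqref{eq::space} the $i$-th column obeys $\mathrm{Cov}(X_i) = a_{ii} I_n + B(i/m)$, so $E\|X_i\|_2^2 = n\,a_{ii} + \tr(B(i/m))$. Summing against $w_i = 1/m$ and invoking the hypotheses that $\tr(B(t)) \equiv \tr(B)$ is time-invariant together with $\sum_i a_{ii} = \tr(A)$ gives $E\big[\tfrac1m \sum_i \|X_i\|_2^2\big] = \tfrac{n}{m}\tr(A) + \tr(B)$. Subtracting the deterministic correction $\tfrac{n}{m}\tr(A)$ in \eqref{eq:trdef} yields $E[\hat{\tr}(B)] = \tr(B)$, so that $\hat{\tr}(B) - \tr(B) = \tfrac1m\big(\|X\|_F^2 - E\|X\|_F^2\big)$.

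Next I would recast this fluctuation as a linear functional of the empirical covariance, exactly in the form handled by Lemma~\ref{lemma:var}. Writing $\hat S = \mathrm{vec}(X)\,\mathrm{vec}(X)^T$ and $\Sigma$ as in \eqref{eq:kronCov}, one has $\|X\|_F^2 = \tr(\hat S) = \mathrm{vec}(I_{mn})^T \mathrm{vec}(\hat S)$, so with the weight vector $w = \mathrm{vec}(I_{mn})$ (for which $\|w\|_2 = \sqrt{mn}$) we obtain $\hat{\tr}(B) - \tr(B) = \tfrac1m\, w^T \mathrm{vec}(\hat S - \Sigma)$. The operator norm of $\Sigma$ is bounded by the triangle inequality and the block structure, $\|\Sigma\| \le \|A\otimes I_n\| + \max_i \|B(i/m)\| \le c_A + c_B$, and applying Lemma~\ref{lemma:var} with this $w$ gives, for every $\epsilon > 0$,
\[
\mathbb{P}\big(|w^T \mathrm{vec}(\hat S - \Sigma)| \ge \epsilon (c_A+c_B)\sqrt{mn}\big) \le 2\exp(-c\,\epsilon^2/K^4).
\]
Choosing $\epsilon \asymp \sqrt{\log m}$, dividing through by $mn$, and using $\tfrac1n|\hat{\tr}(B)-\tr(B)| = \tfrac1{mn}|w^T\mathrm{vec}(\hat S - \Sigma)|$ produces the claimed bound $\tfrac1n|\hat{\tr}(B)-\tr(B)| \le C(c_A+c_B)\sqrt{(\log m)/(mn)}$; the constant in $\epsilon$ can be taken large enough that the failure probability $2\exp(-c\epsilon^2/K^4)$ is at most $3/m^5$.

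The main obstacle is not any single estimate but correctly identifying the right scale for the concentration. Because the temporal factor $Z_1 A^{1/2}$ correlates the columns $X_i$ across time, $\|X\|_F^2$ is not a sum of $m$ independent quadratic forms, so it must be treated as one quadratic form in the $mn$-dimensional vector $\mathrm{vec}(X)$; this is precisely why the full $mn \times mn$ concentration lemma is the appropriate tool. The point to check carefully is that the simplified scale $\|\Sigma\|\,\|w\|_2 = (c_A+c_B)\sqrt{mn}$ used by Lemma~\ref{lemma:var} coincides, up to constants, with the genuine Hanson--Wright Frobenius scale $\|\Sigma\|_F \le \sqrt{mn}\,(c_A+c_B)$, so no sharpness is lost; equivalently, the deviation $t = C(c_A+c_B)\sqrt{mn\log m}$ sits well below $\|\Sigma\|_F^2/\|\Sigma\| \asymp mn(c_A+c_B)$ whenever $mn \gg \log m$, keeping us in the sub-Gaussian tail regime where the $\exp(-c\epsilon^2/K^4)$ bound is valid.
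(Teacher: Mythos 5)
Your proposal is correct and follows essentially the same route as the paper's own proof: unbiasedness from the constancy of $\tr(B(t))$, reduction of $\frac{1}{n}(\hat{\tr}(B)-\tr(B))$ to a linear functional of $\mathrm{vec}(\hat{S}-\Sigma)$ supported on the $mn$ diagonal indices, then Lemma \ref{lemma:var} with $\epsilon \asymp \sqrt{\log m}$ and $\|\Sigma\| \le c_A + c_B$. The only differences are cosmetic: the paper uses the pre-normalized weight $\tilde{w} = \frac{1}{n}(w \otimes 1_n)$ with $\|\tilde{w}\|_2 = 1/\sqrt{mn}$ instead of $\mathrm{vec}(I_{mn})$ rescaled at the end (the two are proportional, and the conclusion of Lemma \ref{lemma:var} is scale-invariant in $w$), and your closing verification that the deviation stays in the sub-Gaussian branch of Hanson--Wright is precisely the role played by the side condition in that lemma.
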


\begin{proof}
The bias of $\hat{\tr}(B(t_0))$ is zero since $\tr(B(t))$ is constant. 
To bound the variance, we can rewrite \eqref{eq:trdef} as
\begin{align*}
\hat{{\mathrm{tr}}}(B(t)) &= \|XW_{t}\|_F^2 - \frac{n}{m}  \mathrm{tr}(A),\\
W_t &= \mathrm{diag}(w).
\end{align*}

Note that 
\begin{align}
E \|XW_{t}\|_F^2  &= \sum_{i = 1}^m w_i\mathrm{tr}(B) + \frac{n}{m} \tr(A)\\
&= \tilde{\tr}({B}) + \frac{n}{m} \tr(A).
\end{align}
Also note that $\|XW_{t}\|_F^2 = \tr (\mathrm{vec}(X) (W_t \otimes I_n) \mathrm{vec}^T(X))$. Thus,
\begin{align*}
\frac{1}{n}\hat{{\mathrm{tr}}}(B(t)) &= \frac{1}{n} \tr \left(\mathrm{vec}(X) (W_t \otimes I_n) \mathrm{vec}^T(X)\right) -\frac{1}{m} \mathrm{tr}(A).
\end{align*}
Hence, by Lemma \ref{lemma:var}, with $\tilde{w} = \frac{1}{n} (w(t) \otimes 1_n)$ on the $mn$ indices corresponding to the diagonal elements of $\hat{S}=\mathrm{vec}(X) \mathrm{vec}(X)^T$, we have for $\frac{\epsilon}{\sqrt{mn}} = o(1)$,
\[
\prob{|\tilde{w}^T \mathrm{vec}(\hat{S} - \Sigma)| \geq \epsilon (c_A + c_B) \|\tilde{w}\|_2 } \leq 2\exp\left(-c\frac{\epsilon^2}{K^4}\right)
\]
and thus
\begin{align*}
&\prob{\left|\frac{1}{n}(\hat{{\mathrm{tr}}}(B) - {\tr}({B}))\right| \geq \epsilon (c_A + c_B) \frac{1}{\sqrt{mn}} } \leq 2\exp\left(-c\frac{\epsilon^2}{K^4}\right)
\end{align*}
since $\|w\|_2 \leq \frac{1}{\sqrt{mn}}$. Set $\epsilon = C \sqrt{\log m}$ with $C$ such that with probability at least $1 - \frac{3}{m^5}$,
\[
\frac{1}{n} |\hat{{\mathrm{tr}}}(B(t)) -{ \mathrm{tr}}({B}(t))| \le C(c_A + c_B)\sqrt{\frac{\log m}{mn}}.
\]
This concludes the proof.
\end{proof}

\subsection{Elementwise error (Proof of Theorem \ref{lem:Apart2})}
We can now prove the error bound for $\tilde{A}$.
\setcounter{theorem}{5}
\begin{theorem}
\label{lem:Apart2}
Suppose Assumptions [B2, A1] hold. 
Then 
\[
\max_{i,j} |\tilde{A}_{ij} - A_{ij}| \le C (c_A + c_B) \sqrt{\frac{\log m}{n}}
\]
with probability $1-\frac{c}{m^4}$ for some constants $C,c>0$.
\end{theorem}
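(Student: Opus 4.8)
The plan is to write $\tilde A_{ij}-A_{ij}$ as a centered quadratic form in $\mathrm{vec}(X)$ plus, on the diagonal only, the trace-estimation error already controlled by Lemma~\ref{lem:trace}, and then to apply the Hanson--Wright bound of Lemma~\ref{lemma:var} entrywise together with a union bound. First I would compute the mean of $\frac1n(X^TX)_{ij}$. Writing the $i$-th column of $X$ from \eqref{eq:TVMVGM} as $x^i = Z_1 A^{1/2}e_i + B(i/m)^{1/2}Z_2 e_i$ and using that $Z_1,Z_2$ are independent, zero-mean, with $E[Z_1^T Z_1]=nI_m$ and $E[(Z_2 e_i)(Z_2 e_j)^T]=\delta_{ij}I_n$, the cross terms vanish and one obtains
\[
E\Big[\tfrac1n (X^TX)_{ij}\Big]=A_{ij}+\tfrac1n\,\delta_{ij}\,\tr\big(B(i/m)\big).
\]
Hence, recalling the definition \eqref{eq:atilde} of $\tilde A$,
\[
\tilde A_{ij}-A_{ij}=\Big(\tfrac1n (X^TX)_{ij}-E\big[\tfrac1n (X^TX)_{ij}\big]\Big)+\tfrac1n\delta_{ij}\big(\tr(B(i/m))-\hat{\tr}(B(i/m))\big),
\]
so the off-diagonal entries reduce to a single centered quadratic form while the diagonal entries carry one extra trace-error term.

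Next I would cast the fluctuation term into the form handled by Lemma~\ref{lemma:var}. Let $\hat S=\mathrm{vec}(X)\mathrm{vec}(X)^T$ and let $\Sigma=\mathrm{Cov}(\mathrm{vec}(X))$ as in \eqref{eq:kronCov}. For fixed $(i,j)$ the quantity $(X^TX)_{ij}=\sum_{k=1}^n X_{ki}X_{kj}$ is a linear functional $(w^{(ij)})^T\mathrm{vec}(\hat S)$, where $w^{(ij)}$ selects the $n$ diagonal-aligned entries $\hat S_{(i-1)n+k,\,(j-1)n+k}$ with unit weight. Then $\tfrac1n(X^TX)_{ij}-E[\cdot]$ equals $\tilde w^T\mathrm{vec}(\hat S-\Sigma)$ with $\tilde w=\tfrac1n w^{(ij)}$, so $\|\tilde w\|_2=\sqrt{n}\cdot\tfrac1n=1/\sqrt{n}$. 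Since $\Sigma=A\otimes I_n+\sum_i (e_ie_i^T)\otimes B(i/m)$ is block diagonal apart from the $A\otimes I_n$ part, the triangle inequality together with Assumptions [A1, B2] gives $\|\Sigma\|\le \|A\|+\max_t\|B(t)\|\le c_A+c_B$.

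Then Lemma~\ref{lemma:var} yields, for each fixed $(i,j)$,
\[
\prob{\Big|\tfrac1n(X^TX)_{ij}-E\big[\tfrac1n(X^TX)_{ij}\big]\Big|\ge \epsilon (c_A+c_B)\tfrac{1}{\sqrt n}}\le 2\exp\!\big(-c\epsilon^2/K^4\big).
\]
Taking $\epsilon=C\sqrt{\log m}$ and a union bound over the $O(m^2)$ index pairs absorbs the $\exp(2\log m)$ factor once $C$ is large enough, giving $\max_{i,j}|\tfrac1n(X^TX)_{ij}-E[\cdot]|\le C(c_A+c_B)\sqrt{\log m/n}$ with probability at least $1-c/m^4$. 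The remaining diagonal term is bounded by Lemma~\ref{lem:trace} by $C(c_A+c_B)\sqrt{\log m/(mn)}$, which is smaller by a factor $1/\sqrt m$ and so is absorbed into the constant. Combining the two pieces through the triangle inequality proves the claim.

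The routine but error-prone step is the vectorization bookkeeping: identifying the selecting vector $w^{(ij)}$ and verifying $\|w^{(ij)}/n\|_2=1/\sqrt n$, which is precisely what produces the target rate $\sqrt{\log m/n}$ rather than something larger. The only genuinely model-specific step is the mean computation, where one must track that the $B$-contribution lands only on the diagonal (because distinct columns $Z_2 e_i,Z_2 e_j$ are independent), so that the off-diagonal entries of $\tilde A$ need no trace correction at all; everything else is a direct invocation of the already-established concentration and trace lemmas.
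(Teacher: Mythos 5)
Your proposal is correct and follows essentially the same route as the paper's own proof: writing each entry $\tilde{A}_{ij}$ as a weight vector (with $n$ nonzero entries of size $1/n$, hence $\ell_2$-norm $1/\sqrt{n}$) applied to $\mathrm{vec}(\hat{S}-\Sigma)$, invoking the Hanson--Wright bound of Lemma~\ref{lemma:var} with $\|\Sigma\|\le c_A+c_B$, taking a union bound over the $O(m^2)$ index pairs with $\epsilon \asymp \sqrt{\log m}$, and absorbing the diagonal trace-estimation error from Lemma~\ref{lem:trace} as a lower-order term. Your explicit computation of $E\bigl[\tfrac1n (X^TX)_{ij}\bigr]$ is a slightly more careful presentation of the centering step that the paper leaves implicit, but the argument is the same.
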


\begin{proof}
Recall that 
\begin{align} 
\label{eq:atilde}
\tilde{A} &= \frac{1}{n} X^T X - \frac{1}{n}\mathrm{diag}\{\hat{\mathrm{tr}}(B(1/m)), \dots,\hat{\mathrm{tr}}(B(1))\}. 
\end{align}
Recall that $\hat{S} = \mathrm{vec}(X) \mathrm{vec}(X)^T$.
For $i\neq j$, by \eqref{eq:atilde} and the definition of Kronecker products we can then write
\begin{align*}
\tilde{A}_{ij} &= \frac{1}{n} X_{i}^T X_j\\
&= \frac{1}{n} \sum_{\ell = 1}^n \hat{S}_{\ell + (i-1) m, \ell + (j-1)m}
\end{align*}
where $X_i$ is the $i$th column of $X$. Thus, we can write
\[
\tilde{A}_{ij} = w^T \mathrm{vec}(\hat{S})
\]
for some $w \in \mathbb{R}^{m^2 n^2}$ with $n$ nonzero elements all equal to $1/n$. Clearly $\|w\|_2 = 1/\sqrt{n}$.
We can then apply Lemma \ref{lemma:var} with $w$ as the weight vector.
Using the union bound over $i,j$ and assuming $\|A\|,\|B(t)\|$ are bounded by $c_a, c_b$ respectively, Lemma \ref{lemma:var} thus gives
\begin{equation}
\max_{i \neq j} |\tilde{A}_{ij} - A_{ij}|  \leq C (c_A + c_B) \sqrt{\frac{\log m}{n}}
\end{equation}
with probability at least $1 - \frac{c}{m^4}$ for absolute constants $C,c$.

For the diagonal elements, Lemma \ref{lem:trace} shows that with probability $1-\frac{3}{m^5}$, 
\[
\frac{1}{n} |\hat{{\mathrm{tr}}}(B(t)) -{ \mathrm{tr}}(B(t))| \le C (c_A + c_B)\left(\frac{\log m}{mn}\right)^{1/2},
\]
and thus by Lemma \ref{lemma:var} and the union bound
\begin{align*}
\max_{i = j} |\tilde{A}_{ij} - A_{ij}|  &\leq C(c_A + c_B) \sqrt{\frac{\log m}{n}} + C_1(c_A + c_B) \left(\frac{\log m}{mn}\right)^{1/2} \\&\leq C (c_A + c_B) \sqrt{\frac{\log m}{n}}
\end{align*}
with probability at least $1-\frac{3}{m^4} - \frac{c}{m^4} > 1 - \frac{c}{m^4}$, since $m > n$.

Using the union bound gives
\begin{equation}
\max_{i , j} |\tilde{A}_{ij} - A_{ij}|  \leq C(c_A + c_B) \sqrt{\frac{\log m}{n}}
\end{equation}
with probability at least $1 - \frac{c}{m^4}$. 
\end{proof}

\section{Concentration bound (Lemma \ref{lemma:var})}
\label{sec:conc}
We use the following concentration bound to bound the error of the $A$ and $B(t)$ estimates. Note that it also gives the corresponding $A$ part and $B$ part bounds in \citep{rudelson2015high} as special cases.
\begin{lemma}[Concentration bound]
\label{lemma:var}
Let $w \in \mathbb{R}^{m^2n^2}$ and let $x = \Sigma^{1/2} z$ be a subgaussian random vector where $z_i$ are independent, zero mean, unit variance, and have $\|z_i\|_{\psi_2} \leq K$. Let $\hat{S} = xx^T$. Then if $\epsilon \|\Sigma\| \|w\|_2 = o(1)$
\begin{equation}
\mathbb{P}(|w^T \mathrm{vec}(\hat{S} - \Sigma)| \geq \epsilon \|\Sigma\| \|w\|_2  ) \leq 2 \exp\left( - c \frac{\epsilon^2}{K^4}\right)
\end{equation}
where $c$ is an absolute constant. 

\end{lemma}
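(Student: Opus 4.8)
The plan is to recognize that $w^{T}\mathrm{vec}(\hat S - \Sigma)$ is a centered quadratic form in the subgaussian vector $z$ and then invoke the Hanson--Wright inequality. First I would reshape the weight vector into a matrix: writing $w = \mathrm{vec}(W)$ with $W \in \mathbb{R}^{mn \times mn}$, the trace/vec identity $\mathrm{vec}(W)^{T}\mathrm{vec}(Y) = \mathrm{tr}(W^{T}Y)$ gives
\begin{equation*}
w^{T}\mathrm{vec}(\hat S - \Sigma) = \mathrm{tr}\!\left(W^{T}(xx^{T} - \Sigma)\right) = x^{T}Wx - \mathbb{E}[x^{T}Wx],
\end{equation*}
where I have used that the scalar $x^{T}W^{T}x$ equals $x^{T}Wx$ and that $\mathbb{E}[xx^{T}] = \Sigma$. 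Substituting $x = \Sigma^{1/2}z$ turns this into $z^{T}Mz - \mathbb{E}[z^{T}Mz]$ with $M := \Sigma^{1/2}W\Sigma^{1/2}$, where only the symmetric part of $M$ contributes to the quadratic form.

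Second, since $z$ has independent, mean-zero, unit-variance, subgaussian coordinates with $\|z_i\|_{\psi_2} \le K$, the Hanson--Wright inequality yields, for any $t > 0$,
\begin{equation*}
\mathbb{P}\!\left(\left|z^{T}Mz - \mathbb{E}[z^{T}Mz]\right| \ge t\right) \le 2\exp\!\left(-c\,\min\!\left(\frac{t^{2}}{K^{4}\|M\|_{F}^{2}},\ \frac{t}{K^{2}\|M\|}\right)\right).
\end{equation*}
The core estimates are then the two matrix-norm bounds. Using submultiplicativity of the operator norm against the Frobenius norm, $\|M\|_{F} = \|\Sigma^{1/2}W\Sigma^{1/2}\|_{F} \le \|\Sigma^{1/2}\|^{2}\|W\|_{F} = \|\Sigma\|\,\|w\|_{2}$, and likewise $\|M\| \le \|\Sigma\|\,\|W\| \le \|\Sigma\|\,\|w\|_{2}$ since $\|W\|\le\|W\|_F=\|w\|_2$.

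Finally I would set $t = \epsilon\|\Sigma\|\|w\|_{2}$. The two norm bounds make the quadratic argument at least $\epsilon^{2}/K^{4}$ and the linear argument at least $\epsilon/K^{2}$, so the stated form $2\exp(-c\,\epsilon^{2}/K^{4})$ follows provided the quadratic branch is the active minimum. I expect this last point to be the main obstacle: one must verify that the smallness hypothesis $\epsilon\|\Sigma\|\|w\|_{2} = o(1)$ indeed places the deviation $t$ inside the subgaussian (quadratic) regime of Hanson--Wright, namely $t \lesssim K^{2}\|M\|_{F}^{2}/\|M\|$, rather than in the heavier sub-exponential tail. In the applications this is comfortable, because the structured weight matrices $W$ (whose nonzero entries lie in distinct rows and columns) have operator norm far smaller than $\|w\|_2 = \|W\|_F$, which pushes the crossover well beyond the relevant scale $\epsilon \asymp \sqrt{\log m}$. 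The remaining work is purely the bookkeeping of constants and the reduction to a symmetric $M$.
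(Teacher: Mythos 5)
Your proposal follows essentially the same route as the paper's own proof: the vec--trace identity reduces $w^T\mathrm{vec}(\hat S - \Sigma)$ to the centered quadratic form $z^T \Sigma^{1/2} W \Sigma^{1/2} z$ with $W = \mathrm{vec}^{-1}(w)$, the Hanson--Wright inequality is invoked, the norms are controlled via $\|\Sigma^{1/2}W\Sigma^{1/2}\|_F \le \|\Sigma\|\,\|w\|_2$ and $\|\Sigma^{1/2}W\Sigma^{1/2}\| \le \|\Sigma\|\,\|W\| \le \|\Sigma\|\,\|w\|_2$, and then $t = \epsilon\|\Sigma\|\|w\|_2$ is substituted. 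Your closing caveat about verifying that the quadratic (subgaussian) branch of the Hanson--Wright minimum is the active one is exactly the role played by the paper's hypothesis $\epsilon\|\Sigma\|\|w\|_2 = o(1)$, and your observation that the structured weight matrices have $\|W\| \ll \|W\|_F$ in the applications is, if anything, a more careful treatment of that point than the paper gives.
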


\begin{proof}
By the definition of the vectorization operator and letting $\mathbf{x}_t(i) = [x_{t,(i-1)\bar{p}_1 +1},\dots,x_{t,i\bar{p}_1}]$,
\begin{align}
w^T\mathrm{vec}(\hat{S}) =  w^T\mathrm{vec}(x x^T) = x^T W x = z^T \Sigma^{1/2} W \Sigma^{1/2}z, 
\end{align}
where $W = \mathrm{vec}^{-1}(w)$ and $\mathrm{vec}^{-1}(\cdot)$ is the inverse of the vectorization operator, mapping $\mathbb{R}^{m^2n^2}$ to $\mathbb{R}^{mn \times mn}$.

Thus, by the Hanson-Wright inequality,
\begin{align*}
\mathbb{P}(|w^T \mathrm{vec}(\hat{S} - \Sigma)| \geq t  ) &\leq 2 \exp\left( - c \min\left(\frac{t^2}{K^4 \|\Sigma^{1/2} W \Sigma^{1/2}\|_F^2},\frac{t}{K^2 \|\Sigma^{1/2} W \Sigma^{1/2}\| }\right)\right)\\
&\leq 2 \exp\left( - c \min\left(\frac{t^2}{K^4 \|\Sigma\|^2 \|W\|_F^2}, \frac{t}{K^2 \|\Sigma\| \|W\| }\right)\right)\\
&\leq 2 \exp\left( - c \min\left(\frac{t^2}{K^4 \|\Sigma\|^2 \|w\|_2^2}, \frac{t}{K^2 \|\Sigma\| \|w\|_2}\right)\right)
\end{align*}
so
\begin{align*}
\mathbb{P}(|w^T \mathrm{vec}(\hat{S} - \Sigma)|\geq \epsilon \|\Sigma\| \|w\|_2  ) &\leq 2 \exp\left( - c \frac{\epsilon^2}{K^4}\right)
\end{align*}
where we set $t = \epsilon \|\Sigma\| \|w\|_2$ and assumed $\epsilon \|\Sigma\| \|w\|_2 = o(1)$.

\end{proof}

\section{Kernel Smoothing (Proof of Lemma \ref{lemma:bias1})}\label{app:kernel}

The following lemma bounds the bias inherent in using a kernel to smooth the sample covariance matrix for $B(t)$ across time.

\begin{proof}[\textbf{Proof of Lemma \ref{lemma:bias1}}]
The proof is found in Lemma 5 of \citep{zhou:TV}, repeated here for completeness.

Without loss of generality, let $t = t_0$. We will use the Riemann integral to approximate the sum
\begin{align*}
&\tilde{B}_p(t_0, i,j)  = \frac{1}{m} \sum_{k=1}^m  \frac{2}{h} K \left(\frac{\frac{k}{m} -t_0}{h} \right)B_{i,j} \left(\frac{k}{m}\right).
\end{align*}

Specifically, 
\begin{align*}
\tilde{B}_p(t_0, i,j) =& \int_{0}^{1} \frac{2}{h} K\left(\frac{u-t_0}{h}\right) {B}_{i,j}(u) du  + O\left(\frac{1}{m^2  h^5}  \right)
\end{align*}
where the second equality follows from Assumption K5 and the assumed bound on the second derivative of $B_{ij}(t)$.
From the proof of Lemma 5 in \citep{zhou:TV}, $\int_{0}^{1} \frac{2}{h} K(\frac{u-t_0}{h}) {B}_{i,j}(u) du - B_{i,j}(t_0) = O(h)$ so 
\begin{align*}
\tilde{B}_p(t_0, i,j) - B_{i,j}(t_0) &= O\left(h  + \frac{1}{m^2  h^5}\right).
\end{align*}
Taking the maximum over $i,j$ and $t_0 \in [0,1]$ completes the proof.

\end{proof}

\end{document}